\theoremstyle{plain}
\newtheorem{thm}{Theorem}[section]
\newtheorem{prop}[thm]{Proposition}
\newtheorem{lemma}[thm]{Lemma}
\newtheorem{coro}[thm]{Corollary}
\newtheorem{defi}[thm]{Definition}
\newtheorem{example}[thm]{Example}
\crefname{thm}{Theorem}{Theorems}
\crefname{lemma}{Lemma}{Lemmas}
\crefname{prop}{Proposition}{Propositions}
\crefname{coro}{Corollary}{Corollaries}
\crefname{defi}{Definition}{Definitions}
\crefname{example}{Example}{Examples}
\numberwithin{equation}{section}
\theoremstyle{definition}
\newtheorem{remark}[thm]{Remark}
\crefname{remark}{Remark}{Remarks}
\newcommand{\states}[1]{\S(#1)}
\renewcommand{\C}{\mathbb{C}}
\renewcommand{\R}{\mathbb{R}}
\newcommand{\I}{\openone}
\newcommand{\mc}{\mathcal}
\renewcommand{\H}{\mathcal{H}}
\newcommand{\Kr}{E}
\newcommand{\B}{\mathcal{B}}
\renewcommand{\S}{\mc{S}}
\newcommand{\cE}{\mathcal{E}}
\newcommand{\fmig}{\frac{1}{2}}
\newcommand{\swich}[1]{\left(#1\right)}
\newcommand{\cwich}[1]{\left[#1\right]}
\newcommand{\set}[1]{\left\{#1\right\}}
\newcommand{\sprod}[2]{\langle #1,#2\rangle}
\newcommand{\id}{\operatorname{id}}
\newcommand{\jamiol}{{Jamio{\l}kowski}}
\newcommand{\cone}[1]{\mathrm{cone}\swich{#1}}
\newcommand{\supp}{\operatorname{supp}}
\definecolor{redQ}{RGB}{231, 43, 72}
\newcommand{\jcc}[1]{\textcolor{redQ}{#1}}
\definecolor{amethyst}{rgb}{0.75, 0, 1}
\newcommand{\mh}[1]{\textcolor{amethyst}{#1}}
\begin{document}
\title{Approach to optimal quantum transport via states over time}

\author[1]{{Matt Hoogsteder-Riera}}\email{matt.hoogsteder@uab.cat}
\author[1]{{John Calsamiglia}}\email{john.calsamiglia@uab.cat}
\author[1,2,3,4]{{Andreas Winter}}\email{andreas.winter@uab.cat}
\affil[1]{{\textit{Grup d'Informaci\'o Qu\`antica, Departament de F\'isica,\protect\\[-1mm] Universitat Aut\`onoma de Barcelona, 08193 Bellaterra (BCN), Spain}}}
\affil[2]{{\textit{ICREA---Instituci\'o Catalana de Recerca i Estudis Avan{\c{c}}ats,\protect\\[-1mm] Pg.~Llu\'is Companys, 23, 08010 Barcelona, Spain}}}
\affil[3]{{\textit{Department Mathematik/Informatik---Abteilung Informatik,\protect\\[-1mm] Universit\"at zu K\"oln, Albertus-Magnus-Platz, 50923 K\"oln, Germany}}}
\affil[4]{{\textit{Institute for Advanced Study, Technische Universit\"at M\"unchen,\protect\\[-1mm] Lichtenbergstra{\ss}e 2a, 85748 Garching, Germany}}}

\date{(04 April 2025)}

\maketitle
\onecolumn

\begin{abstract}
We approach the problem of constructing a quantum analogue of the 
immensely fruitful classical transport cost theory of Monge from a 
new angle. Going back to the original motivations, by which the transport 
is a bilinear function of a mass distribution (without loss of generality 
a probability density) and a transport plan (a stochastic kernel), 
we explore the quantum version where the mass distribution is generalised 
to a density matrix, and the transport plan to a completely positive 
and trace preserving map. 
These two data are naturally integrated into their Jordan product, 
which is called \emph{state over time} (``stote''), 
and the transport cost is postulated to be a linear function of it. 
We explore the properties of this transport cost, as well as the optimal 
transport cost between two given states (simply the minimum cost over all 
suitable transport plans). After that, we analyse in considerable detail 
the case of unitary invariant cost, for which we can calculate many 
costs analytically. These findings suggest that our quantum transport 
cost is qualitatively different from Monge's classical transport. 
\end{abstract}

\section{Introduction}
\label{sec:intro}
The Monge problem, and the field of optimal transport it has spawned \cite{monge81,kantorovich48,kantorovich58,villani08}, asks for the best way to move mass, modeled as probability distributions, according to some cost functional. Explicitly, the optimal transport cost between two probability distributions $\mu,\nu$ defined over space $X$ is 
\begin{equation}
  \label{eq:classical-cost}
  c(\mu,\nu) := \inf_{\pi\in C(\mu,\nu)} \int_{X\times X}k(x,y)d\pi(x,y),
\end{equation} 
where $k(x,y)$ is a (usually real and positive) cost function and $C(\mu,\nu)$ is the set of 
\emph{couplings} between $\mu$ and $\nu$, that is the probability distributions on $X\times X$ 
with marginals $\mu$ and $\nu$. 

This approach allows for properties of the space $X$ to be reflected in the cost through the function $k:X\times X\rightarrow \R$. In contrast, typical distinguishability measures on probability spaces, such as the total variation distance,\footnote{Note that the total variation distance can be defined in terms of optimal transport cost, by choosing the cost to be the trivial metric, $k(x,y)=1-\delta_{xy}$.} or the Kullback-Leibler divergence, can be computed focusing solely on the probability distributions and ignoring the underlying properties of the space $X$. As an example, let $X=\R$ and $\mu_0=\delta_0$, $\mu_1=\delta_1$ and $\nu=\delta_{-1}$. The total variation distances are equal: $\Delta_{TV}(\mu_0,\nu)=\Delta_{TV}(\mu_1,\nu)=1$, but we can define an optimal transport cost that reflects the Euclidean metric of the reals by choosing $k(x,y)=\abs{x-y}$. With this cost function, $\Delta_{OT}(\mu_0,\nu)= 1$, but $\Delta_{OT}(\mu_1,\nu)= 2$. This difference of the distances between $0$ and $-1$ and between $1$ and $-1$ is ignored by the total variation distance, but taken into account in the optimal transport cost.

Physically, the cost function $k$ could be reflecting things like energy or information related quantities such as the number of bit flips necessary to transform a bit string into another one (known as the Hamming distance), the latter being a special case of a metric on $X$. 

In the present paper, we make (yet another) attempt to quantise the theory of optimal classical transport from probability densities to quantum states. 
Our interpretation of the coupling $\pi(x,y)$ appearing in Eq.~\eqref{eq:classical-cost} is that it defines a stochastic matrix that yields output $\nu$ for input $\mu$. We can recover this transformation explicitly using Bayes' Theorem: $p(y|x)=\pi(x,y)/\mu(x)$, and this is the ``transport plan'': the map describing which fraction of mass at each given point is transferred to a given target point. The cost appearing on the right hand side of Eq.~\eqref{eq:classical-cost} is then bilinear in $\mu$ and $p(y|x)$, and we attempt to preserve this feature in our quantum version.

There have been attempts to generalise optimal transport to quantum systems dating back to the late 1990s \cite{zyczkowski98,zyczkowski01}. Some more recent approaches look at the problem directly through the primal formulation \cite{hoogsteder18,depalma21,friedland22,bistron22,bunth25} (some form of couplings), through the dual formulation \cite{depalma21b,depalma23} [which \eqref{eq:classical-cost} has via linear programming duality] and through the continuous formulation \cite{benamou00,carlen12,rouze17,caglioti21} (of certain dynamical semigroups having trajectories that are geodesic for a suitably defined optimal transport distance). Our work looks at the problem through the primal formulation. Our objective here is to define a quantum optimal transport where couplings have a straightforward physical interpretation as bilinear combinations of quantum states and quantum channels, similar to our interpretation of the classical couplings, leading to a cost assigned to each transport plan that is bilinear in the initial density and the quantum channel. 

In \cref{sec:motivation} we motivate the mathematical and conceptual idea behind our formulation, in \cref{sec:definition} we make the basic definitions, in \cref{sec:properties} we show our main results regarding properties of our proposed quantum optimal transport, in \cref{sec:UIcost} we discuss a specific cost metric that results from imposing unitary invariance and in \cref{sec:conclusions} we discuss the messages to take away and open problems derived from our work.

\subsection{Mathematical preliminaries}
\label{sec:preliminaries}

The Choi and {\jamiol} isomorphisms \cite{jamiolkowski72,choi75} are a relation between CPTP maps from $\B(\H_A)$ to $\B(\H_B)$ and bipartite states on the joint state space $\S(\H_A\otimes\H_B)$. These isomorphisms are relevant to this work because they allow us to treat maps as operators, allowing for operations that would not be possible otherwise. In more general terms, these isomorphisms allow us to translate complete positivity of a map to positivity of an operator, with the latter usually being much easier to check. We also make use of this fact in \cref{sec:definition} and \cref{sec:UIcost}.

\begin{thm}[Choi isomorphism \cite{choi75}]
\label{thm:ChoiIso}
    Let $\cE:\B(\H_A)\rightarrow\B(\H_B)$ be a linear map and $\ket{\Phi_+}\in \H_A\otimes\H_{A'}$ the non-normalised maximally entangled vector on the composite system of $A$ and a copy $A'\simeq A$ of it. The Choi operator, defined as 
    \begin{equation}
    \label{eq:ChoiMatrix}
        C_{\cE} = \swich{\id_A\otimes\cE}\swich{\ketbra{\Phi_+}} \in \B(\H_A\otimes\H_B),
    \end{equation} 
    recovers the action of $\cE$ as \begin{equation}
        \cE(x)=\Tr_A\cwich{(x^T\otimes\I_B)C_{\cE}}.
    \end{equation} 
    Moreover, $\cE$ is completely positive if and only if $C_{\cE}$ is positive semi-definite and $\cE$ is trace preserving if and only if $\;\Tr_B\cwich{C_{\cE}}=\I_A$.
\end{thm} 
{\jamiol}'s statement of the same result is just the Choi isomorphism with a partial transpose. 
\begin{thm}[{\jamiol} isomorphism \cite{jamiolkowski72}]
\label{thm:JamiolIso}
    Let $\cE:\B(\H_A)\rightarrow\B(\H_B)$ be a linear map and $\S\in\B(\H_A\otimes \H_{A'})$ the swap operator on the joint system $A\otimes A'$. The {\jamiol} operator,  defined as 
    \begin{equation}
      \label{eq:def-jamiol}  
      J_{\cE} = \swich{\id_A\otimes\cE}\swich{\S} \in \B(\H_A\otimes\H_B),
    \end{equation} 
    recovers the action of $\cE$ as \begin{equation}
        \cE(x)=\Tr_A\cwich{(x\otimes\I_B)J_{\cE}}.
    \end{equation} 
    Moreover, $\cE$ is completely positive if and only if $\;J_{\cE}^{T_A}$ is positive\footnote{For simplicity we usually use positive instead of the more rigorous positive semi-definite.} and $\cE$ is trace preserving if and only if $\Tr_B\cwich{J_{\cE}}=\I_A$.
\end{thm}

It is clear from the theorems that if $C$ and $J$ are Choi and {\jamiol} operators associated to the same CPTP map, then \begin{equation}
    J=C^{T_A}.
\end{equation} 

These theorems clarify how states and channels relate and how the two isomorphisms relate to each other. Each has its own advantages: the Choi matrix is manifestly positive, which is convenient when positivity must be checked explicitly, while the {\jamiol} form is basis-independent. In the Choi construction  one uses a maximally entangled state and the transpose on $\mathcal{H}_A$. Both of those steps depend on a choice of basis, so the Choi matrix is basis-dependent. In contrast, using the {\jamiol} form avoids that dependence. Moreover, the swap operator plays a useful computational role in the {\jamiol} picture, facilitating manipulations like exchanging tensor factors or expressing compositions more neatly.

We use the notation $C$ and $J$ for Choi and {\jamiol} operators, respectively, throughout this work. Moreover, we generally refer to Choi or {\jamiol} operators associated to CPTP map. Because our work deals with finite dimensional systems only, we also use the terms Choi and {\jamiol} matrix interchangeably with operator. We usually use the {\jamiol} operator, but the Choi operator will sometimes be more convenient to use in some circumstances. We always assume that the transpose is taken with respect to the canonical basis, or whichever privileged basis the problem presents. The basis dependence of the Choi matrix is only a problem when dealing with practical cases and never for general theoretical derivations. 

Throughout the paper we use $A^T$ for the transpose of a matrix $A$, $\overline{A}$ for the element-wise complex conjugate, and $A^*$ for the conjugate transpose. With this notation, $A^*=\overline{A}^T$.

\section{Jordan product motivation and properties}
\label{sec:motivation}

In this section we want to motivate our choice of coupling for the quantum optimal transport, since this choice is what differentiates our approach from the rest. The main observation was done in the previous section: a classical joint probability distribution can be interpreted as both a correlation function of a composite system and a map for the evolution over time of a single subsystem. The first interpretation can be generalised to quantum systems with bipartite states, as seen in, for example, \cite{friedland22,bunth25}. We choose the other interpretation, that of a stochastic map on a system. In the context of quantum mechanics this corresponds to quantum channels acting on a system. As we explain in the following, these questions on how to write states encoding the correlations of an input system with its output in quantum mechanics had been asked before in the context of quantum foundations.

The concept of \emph{states over time} as introduced in \cite{horsman17} is based on previous 
attempts to formalise causal correlations in quantum theory, see \cite{leifer13,fitzsimons15} 
and references within \cite{horsman17}. Its main motivation is to find an operator in the tensor 
product of the state spaces associated to two time points, which would capture the 
process transforming the initial state to the final one.

For this purpose, several properties have been put forward as desirables for a state over time: as a function of the initial state and the quantum channel it should be Hermitian preserving, bilinear in the two arguments, it should contain the classical case, reproduce the initial and final states as marginals, and, finally, keep the composability of channels. Through the later works of Fullwood and Parzygnat \cite{fullwood22,fullwood23,fullwood23b} and \cite{lie23}, we know that the Jordan product (also called symmetric bloom or Fullwood-Parzygnat state over time function in the literature) is the unique function that fulfils a slightly stronger set of axioms \cite{lie23} that imply the properties above (that had been proposed in \cite{horsman17}). 

The set of desiderata introduced in the paper are, in our opinion, well suited to the study of quantum optimal transport. Notably, the  bilinearity, associativity (which translates to clear composability of the channels) and clear reduction to the classical case. Hermiticity and preservation of marginal states are useful mathematically as they allow us to define the problem as an SDP. Other approaches mentioned in \cite{horsman17}, the LS state over time and the W state over time fail to fulfil bilinearity and composability; and preservation of the classical limit, respectively. That said, other works \cite{depalma21,friedland22} have approached this problem using the LS state over time, even if not explicitly acknowledged in the manuscripts, with some success. Our approach takes the  generalised FJV state over time (from \cite{horsman17}) due to its nice properties not shared with other states over time.

In the context of quantum optimal transport, a state over time gives an initial state and a process (in the form of a CPTP map). As previously mentioned, this is similar to how in classical (optimal) transport a coupling $\pi(x,y)$ gives an initial state and final states (the marginals) and a process through the conditional probability formula $\pi(y|x)=\pi(x,y)/\pi(x)$. In this sense, a classical joint probability distribution acts as both a joint state in space and time, something that does not happen for quantum joint states. In this work we 
take the state over time interpretation of a classical joint probability distribution and 
extend this to the quantum case. 

The Jordan product is a defined as one half of the anti-commutator: given two operators $A,B$, we denote $A\star B := \fmig\{A,B\}= \fmig \swich{AB+BA}$ \cite{Jordan:nicht-assoz,Jordan:1,Jordan:2,JordanvonNeumannWigner}.
In the context of transport maps, this operation has several desirable properties, as previously 
mentioned here and further expanded upon in \cite{horsman17}.
Moreover, while the Jordan product is not associative in general, it was shown in \cite{fullwood22} 
to fulfil the associative property for products of matrices of the form 
$A_{01} \equiv A_{01}\otimes \I_{23}$, 
$B_{12} \equiv B_{12}\otimes \I_{03}$ and 
$C_{23} \equiv \I_{01}\otimes C_{23}$: 
$A_{01} \star (B_{12} \star C_{23}) = (A_{01} \star B_{12}) \star C_{23}$. 
This form of product is what we encounter in our formalism, see Subsection \ref{subsec:multipleTimes} for the details on the associativity. 

\subsection{States over two times}
\label{subsec:twoTimes}
Firstly, we formally define a state over time, for which we propose the handy 
abbreviation \emph{stote} (see Fig.~\ref{fig:stoat}): 

\begin{defi}
Let $\H_A,\H_B$ be finite dimensional Hilbert spaces and $\mc{J}(\H_A\rightarrow\H_B)$ the set of {\jamiol} matrices between these two spaces.
Let $\rho\in\states{\H_A}$ and $J\in\mc{J}(\H_A\rightarrow\H_B)$.
The associated \emph{state over time (stote)} is defined as 
$Q=\swich{\rho\otimes\I}\star J$. 
Typically we will omit the identity and write this as $Q=\rho\star J$. 
The set of states over time between two Hilbert spaces $\H_A,\,\H_B$ is the set of all operators of this form: 
\begin{align}
    \mc{Q}(\H_A:\H_B)=\left\{ \rho\star J | \,J\in\mc{J}(\H_A\rightarrow\H_B) \text{ and }  \rho\in S(\H_A)\right\}.
\end{align}
\end{defi} 

\begin{figure}[ht]
\centering
  \includegraphics[width=0.5\textwidth]{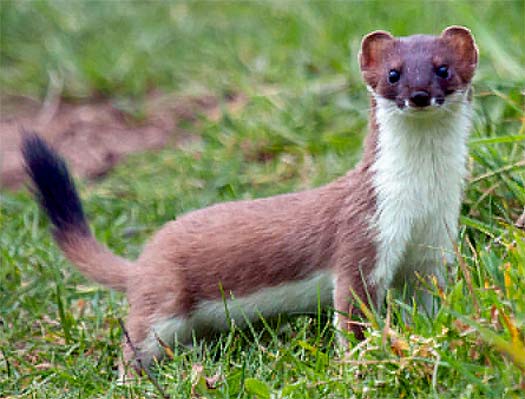}
  \caption{Stoat (also stote in old spelling), \emph{Mustela erminea};
           not to be confused with the common weasel, \emph{Mustela nivalis}, or other species of weasels, ferrets, minks or martens.} 
  \label{fig:stoat}
\end{figure}

This definition has the same interpretation as the classical probability coupling. 
Initial and final states appear as marginals: $\Tr_B\cwich{Q}=\rho$ and $\Tr_A \cwich{Q}=\sigma$, 
and the map connecting one to the other can be reconstructed from $Q$ as well
(see later in this section). 

In the literature of states over time, the {\jamiol} matrix has been used instead of the Choi matrix, and we will generally do the same. We denote Choi matrices by $C$ and {\jamiol} matrices by $J$, and they are related by $C=J^{T_A}$, as seen in \cref{sec:preliminaries}. 

Similar to the classical case, we are concerned with the set of channels that map a given state to another given state. We formalise this in the following definition.

\begin{defi}
Let $\rho, \sigma\in\S(\H_A),\S(\H_B)$ and $\mc{J}(\H_A\rightarrow\H_B)=\{J\in\B(\H_A\otimes\H_B),\,J^{T_A}\geq0, \, \Tr_B\cwich{J}=\I\}$ be the set of {\jamiol} matrices. The set of states over time between $\rho$ and $\sigma$ is \begin{align}
    \begin{split}
    \mc{Q}(\rho,\sigma)
    =&\set{\rho\star J | \,J\in\mc{J}(\H_A\rightarrow\H_B) \text{ and }  \Tr_A\cwich{\rho J}=\sigma}\\
    =&\left\{Q\in\mc{Q}(\H_A:\H_B)|\,\Tr_B\cwich{Q}=\rho, \,\Tr_A\cwich{Q}=\sigma\right\}.
    \end{split}
\end{align}  
\end{defi}

The last condition, $\Tr_A\cwich{Q}=\sigma$, specifies the image of $\rho$ under the associated quantum channel. That is, we are selecting the channels (in matrix form) that map $\rho$ to $\sigma$. We can also go in the opposite direction. That is, given a state over time, an associated initial state and {\jamiol} matrix (and therefore final state) can be found explicitly. This has been also shown in \cite{fullwood23,lie23}. 

\begin{thm}
\label{thm:JordanInverse}
Let $\omega$ be a Hermitian operator on $\B(\H_A\otimes\H_B)$ such that $\rho=\Tr_B\cwich{\omega}\geq0$. 
Then let $B=\{\ket{ik}\}$ be a product basis of $\H_A\otimes \H_B$ such that $\{\ket{i}\}$ is a diagonal basis of $\rho$ with associated eigenvalues $\{p_i \geq 0\}$. Finally, consider an operator\footnote{Note we are looking for a {\jamiol} operator here associated to a CPTP, but the theorem works regardless. The check is performed later, depending on whether $\rho$ is faithful or not.} $J$ such that, if $p_i$ or $p_j$ are nonzero then
\begin{align}
    \bra{ik}J\ket{j\ell}=\frac{2}{p_i+p_j}\bra{ik}\omega\ket{j\ell}.
\end{align}

Then, $\omega=\rho\star J$. Moreover, if $\rho$ is faithful then $J$ is Hermitian and the unique operator such that $\omega=\rho\star J$ for the given $\omega$.
\end{thm}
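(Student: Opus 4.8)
The plan is to pass to the eigenbasis $\{\ket{i}\}$ of $\rho$ and compute matrix elements of the Jordan product there. Writing $\rho\star J=\fmig\big((\rho\otimes\I)J+J(\rho\otimes\I)\big)$ and using $\rho\ket{i}=p_i\ket{i}$, one gets the single identity that drives the whole argument:
\[
  \bra{ik}(\rho\star J)\ket{j\ell}=\frac{p_i+p_j}{2}\,\bra{ik}J\ket{j\ell},
\]
i.e.\ the linear map $\mc{L}\colon X\mapsto\rho\star X$ is, in the product eigenbasis, diagonal with eigenvalues $\tfrac12(p_i+p_j)\ge 0$; as a Sylvester/Lyapunov operator it is invertible precisely when $\rho$ is faithful.

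To show $\omega=\rho\star J$, I would match matrix elements block by block. On the block $p_i+p_j>0$, substituting the defining relation $\bra{ik}J\ket{j\ell}=\tfrac{2}{p_i+p_j}\bra{ik}\omega\ket{j\ell}$ into the identity above gives $\bra{ik}(\rho\star J)\ket{j\ell}=\bra{ik}\omega\ket{j\ell}$ directly. On the complementary block $p_i=p_j=0$ (i.e.\ $\ket{i},\ket{j}\in\ker\rho$) the identity forces the left side to vanish for \emph{any} choice of $J$ there, so one must check that $\bra{ik}\omega\ket{j\ell}=0$ on this block. This is the step I expect to be the real obstacle: it is not a consequence of $\rho=\Tr_B\cwich{\omega}\ge 0$ alone (a Hermitian $\omega$ can carry nonzero entries on $\ker\rho\otimes\H_B$ while $\Tr_B\cwich{\omega}$ still annihilates $\ker\rho$), so one has to use that $\omega$ is genuinely a state over time, $\omega=\rho_0\star J_0$. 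The marginal property then forces $\rho_0=\Tr_B\cwich{\omega}=\rho$, hence $\ker\rho_0=\ker\rho$, and applying the key identity to $J_0$ gives $\bra{ik}\omega\ket{j\ell}=\tfrac12(p_i+p_j)\bra{ik}J_0\ket{j\ell}=0$ on that block. Hence $\rho\star J$ and $\omega$ agree on every block. Equivalently, in the Sylvester picture, solvability of the singular equation $\mc{L}(J)=\omega$ requires exactly that $\omega$ lie in the range of $\mc{L}$, i.e.\ vanish on $\ker(\mc{L})=$ the block $\ker\rho\otimes\H_B$.

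Finally, the faithful case. If $\rho>0$ then every $p_i>0$: the $p_i=p_j=0$ block is empty, the defining relation specifies every matrix element of $J$, and $J$ is thereby well defined. Hermiticity follows by conjugating that relation: $\bra{j\ell}J^*\ket{ik}=\overline{\bra{ik}J\ket{j\ell}}=\tfrac{2}{p_i+p_j}\overline{\bra{ik}\omega\ket{j\ell}}=\tfrac{2}{p_j+p_i}\bra{j\ell}\omega\ket{ik}=\bra{j\ell}J\ket{ik}$, using that the $p_i$ are real and $\omega^*=\omega$; this is exactly $J^*=J$. For uniqueness, if $\rho\star J=\rho\star J'$ then $\mc{L}(J-J')=0$, and since $\mc{L}$ has strictly positive eigenvalues $\tfrac12(p_i+p_j)$ it is injective, so $J=J'$ (equivalently, $\tfrac12(p_i+p_j)\bra{ik}(J-J')\ket{j\ell}=0$ with $p_i+p_j>0$ kills every entry). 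Apart from the kernel-block check everything is routine matrix bookkeeping; the content sits entirely in the two facts that $\mc{L}$ is diagonalised by $\rho$'s eigenbasis and that it degenerates exactly on $\ker\rho\otimes\H_B$.
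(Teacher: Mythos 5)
Your argument follows the same route as the paper's proof: compute $\rho\star J$ in a product basis containing an eigenbasis of $\rho$, observe that $X\mapsto\rho\star X$ acts diagonally on matrix elements with factors $\tfrac12(p_i+p_j)$, and invert on the block where $p_i+p_j>0$; the faithful case (every coefficient of $J$ determined, Hermiticity by conjugating the defining relation, uniqueness by injectivity of the Sylvester-type map) is handled the same way, with your Hermiticity computation being slightly more explicit than the paper's. Where you go beyond the paper is the kernel block, and your observation there is correct: for $p_i=p_j=0$ one has $\bra{ik}(\rho\star J)\ket{j\ell}=0$ for every $J$, while the literal hypotheses ($\omega$ Hermitian with $\Tr_B\cwich{\omega}\geq0$) do not force $\bra{ik}\omega\ket{j\ell}=0$ on $\ker\rho\otimes\H_B$ --- e.g.\ $\omega=\ketbra{0}\otimes\ketbra{0}+\ketbra{1}\otimes\swich{\ketbra{0}{1}+\ketbra{1}{0}}$ on $\C^2\otimes\C^2$ has $\Tr_B\cwich{\omega}=\ketbra{0}\geq0$ yet equals no $\rho\star J$. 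The paper's ``immediate from the construction'' proof tacitly assumes $\omega$ lies in the range of $X\mapsto\rho\star X$ (i.e.\ is genuinely a state over time, which is the intended use of the theorem, namely recovering $J$ from a stote); your explicit range/kernel condition, or the assumption $\omega=\rho_0\star J_0$, is exactly what is needed to close that step, so your proof is not only consistent with the paper's but more careful on the non-faithful case.
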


\begin{proof}
This is immediate from the construction of $\rho\star J$ in a product basis that contains an eigenbasis of $\rho$. 
Let $J$ be written in a product basis $\{\ket{ik}\}$ whose $\H_A$ component is a diagonal basis of $\rho$,
\begin{align}
    \rho&=\sum_i p_i\ketbra{i} \\
    J&=\sum_{ikj\ell}\bra{ik}J\ket{j\ell}\ketbra{ik}{j\ell}.
\end{align} 
In this basis we can calculate the Jordan product 
\begin{align}\begin{split}
    \rho\star J &= \fmig \swich{\sum_{ikj\ell i'}p_{i'}\ketbra{i'}\bra{ik}J\ket{j\ell}\ketbra{ik}{j\ell}+\bra{ik}J\ket{j\ell}\ketbra{ik}{j\ell}p_i\ketbra{i'}}\\
    &=\sum_{ikj\ell}\fmig(p_i+p_j)\bra{ik}J\ket{j\ell}\ketbra{ik}{j\ell}.
\end{split}\end{align} 
Therefore, if $p_i$ or $p_j$ are nonzero, the coefficients of $J$ from this matrix are 
\begin{align}
    \bra{ik}J\ket{j\ell}=\frac{2}{p_i+p_j}\bra{ik}\rho\star J\ket{j\ell}.
\end{align}

If $\rho$ is faithful, this fully characterises every coefficient of $J$ in the chosen basis, 
and therefore $J$ is unique.
\end{proof}

If $\rho$ is faithful, from \cref{thm:JordanInverse} we can check if $\rho$ and $J$ are a state and a {\jamiol} matrix, respectively, to conclude whether or not $\omega$ is a state over time. Note that the trace condition is guaranteed: \begin{equation}
\begin{split}
\Tr_B\cwich{J} &=\sum_{ikjl}\bra{ik}J\ket{j\ell}\ketbra{ik}{j\ell}= \sum_{ikjl}\frac{2}{p_i+p_j}\bra{ik}\omega\ket{j\ell}\ketbra{i}{j}\braket{\ell}{k}\\
&=\sum_{ij}\frac{2}{p_i+p_j}\bra{i}\swich{\sum_k\expval{\omega}{k}}\ket{j}\ketbra{i}{j} 
=\sum_{ij}\frac{2}{p_i+p_j}\bra{i}\Tr_B\cwich{\omega}\ket{j}\ketbra{i}{j} \\
&= \sum_{ij}\frac{2}{p_i+p_j}\bra{i}\rho\ket{j}\ketbra{i}{j} =\sum_{ij} \frac{2}{p_i+p_j}\delta_{ij}p_i\ketbra{i}{j}=\I,
\end{split}
\end{equation} where we have used that $\rho=\Tr\cwich{\omega}$ and its faithfulness.

In the case where $\rho$ is not faithful we end up with a matrix completion problem that can be solved with the following semidefinite programming (SDP) problem \cite{vandenberghe96,watrous11} 
\begin{align}\begin{split}
    \min_J &\quad \quad\; f(J) \\
    \text{s.t.} & \quad\:\left\{\begin{aligned}
        &\bra{ik}J\ket{j\ell}=\frac{2}{p_i+p_j}\bra{ik}\omega\ket{j\ell}, \quad \forall i,j \in B \;|\; p_i+p_j\neq0,\;\forall k,\ell\in B\\
        &\Tr_B J=\I\\
        &J^{T_A}\geq0
       \end{aligned} \right.
\end{split}\end{align} 
Here, $f$ is any linear function, since we are not interested in minimising a specific function but just in finding a matrix that fulfils the given conditions (a feasible solution). 
For numerical calculation purposes, we can rewrite this feasibility problem by adding an extra real variable $x$. 
This is useful because numerical solvers require the feasible set to have a nonempty interior. 
In some cases (like when $\rho$ is faithful) the set of feasible {\jamiol} matrices can have an empty interior and adding the dummy variable $x$ allows us to expand the feasible set. $x$ is added as follows:
\begin{align}\begin{split}
    \min_{(x,J)} &\quad \quad\; -x \\
    \text{s.t.} & \quad\:\left\{\begin{aligned}
        &\bra{ik}J\ket{j\ell}=\frac{2}{p_i+p_j}\bra{ik}\omega\ket{j\ell}, \quad \forall i,j \in B \;|\; p_i+p_j\neq0,\;\forall k,\ell\in B\\
        &\Tr_B J=\I\\
        &J^{T_A}\geq x\I
       \end{aligned} \right..
\end{split}\end{align}
From this it is clear that if the output of the SDP is a nonnegative $x$ then 
the associated matrix $J$ will be a {\jamiol} matrix.

In case that $\rho$ is faithful, the following basis independent expression from 
\cite{scandi23} can also be used: 
\begin{align}
    J = \int_{0}^\infty e^{-\frac{t}{2}\rho}\omega e^{-\frac{t}{2}\rho}dt.
\end{align}

As a corollary of this form we can see that this map is CP:
\begin{coro}\label{coro:JInverseCP}
    Let $\rho\in\S(\H)$ be a faithful state of a Hilbert space $\H$. Then the map 
    \begin{equation}
        x\mapsto \int_{0}^\infty e^{-\frac{t}{2}\rho} x e^{-\frac{t}{2}\rho}dt
    \end{equation} 
    is completely positive.
\end{coro}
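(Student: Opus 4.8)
The plan is to show complete positivity directly from the integral representation $\Phi(x) = \int_0^\infty e^{-t\rho/2}\, x\, e^{-t\rho/2}\, dt$ by exhibiting it as a (possibly continuous) sum of maps of the form $x \mapsto K x K^*$, which are the prototypical completely positive maps. First I would fix a time $t \geq 0$ and set $K_t = e^{-t\rho/2}$; since $\rho$ is Hermitian and positive, $K_t$ is Hermitian and positive, so $K_t = K_t^*$, and the map $\Phi_t : x \mapsto K_t x K_t^* = e^{-t\rho/2}\, x\, e^{-t\rho/2}$ is completely positive, being a single-Kraus-operator map (a ``conjugation''). Complete positivity of each $\Phi_t$ is standard: for any ancilla and any $X \geq 0$ on $\H \otimes \H'$ one has $(K_t \otimes \I) X (K_t \otimes \I)^* \geq 0$.

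Next I would argue that $\Phi = \int_0^\infty \Phi_t\, dt$ inherits complete positivity from the $\Phi_t$. The cleanest route is to pass to Choi matrices: $C_{\Phi_t} = (\id \otimes \Phi_t)(\ketbra{\Phi_+})$ is positive semidefinite for every $t$ (equivalently $\Phi_t$ is CP, as just noted). Then $C_\Phi = (\id \otimes \Phi)(\ketbra{\Phi_+}) = \int_0^\infty C_{\Phi_t}\, dt$ by linearity of the tensor-identity extension and of the trace/integral, and the integral of a pointwise-psd, integrable operator-valued function is psd: for any vector $\ket{v}$, $\bra{v} C_\Phi \ket{v} = \int_0^\infty \bra{v} C_{\Phi_t} \ket{v}\, dt \geq 0$ since the integrand is nonnegative. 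Hence $C_\Phi \geq 0$, which by the Choi criterion recalled in \cref{sec:preliminaries} is equivalent to $\Phi$ being completely positive.

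The one genuine point requiring care — and the step I expect to be the main obstacle, modest as it is — is the convergence and finiteness of the integral, so that all the interchanges of integral with $(\id \otimes \cdot)$ and with $\bra{v}\cdot\ket{v}$ are legitimate. Here I would use that $\rho \in \S(\H)$ has $\Tr \rho = 1$, so its eigenvalues $p_i$ satisfy $0 \leq p_i \leq 1$, but more importantly: the integral $\int_0^\infty e^{-t\rho/2} x e^{-t\rho/2}\, dt$ only converges on the range of $\rho$ — on eigenvectors with $p_i = 0$ the integrand does not decay. This is exactly why the preceding text states this formula ``in case that $\rho$ is faithful''. So I would invoke faithfulness of $\rho$ (all $p_i > 0$), under which $\norm{e^{-t\rho/2}} = e^{-t p_{\min}/2}$ decays exponentially, making $\int_0^\infty \norm{e^{-t\rho/2} x e^{-t\rho/2}}\, dt \leq \norm{x} \int_0^\infty e^{-t p_{\min}}\, dt = \norm{x}/p_{\min} < \infty$; the same bound applied on $\H \otimes \H'$ justifies every interchange above. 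Strictly, then, the corollary as I would prove it holds for faithful $\rho$, matching the context from which the formula is drawn; alternatively one notes the map on non-faithful $\rho$ is only densely defined, or restricts attention to $\supp \rho$ throughout, and complete positivity persists on that restriction.
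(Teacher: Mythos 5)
Your proof is correct and takes essentially the same route as the paper's: for each fixed $t$ the conjugation $x\mapsto e^{-\frac{t}{2}\rho}\,x\,e^{-\frac{t}{2}\rho}$ is a single-Kraus (hence CP) map, and an integral of CP maps is CP, which you verify at the level of Choi matrices. Your added care about convergence---that the integral only makes sense for faithful $\rho$ (or after restriction to $\supp\rho$), consistent with the sentence preceding the corollary---is a legitimate point that the paper's one-line proof passes over silently.
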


\begin{proof}
    $e^{-\frac{t}{2}\rho}$ will be Hermitian because $\rho$ is. For a fixed $t$, let $K=K^\dagger=e^{-\frac{t}{2}\rho}$ the map associated to this fixed $t$, $x\mapsto Kx K^\dagger$, will be CP because it is written in Kraus form \cite[Ch.~8]{kraus83}, \cite{nielsen10}. The integral of CP maps will be CP, thus the original map is CP. 
\end{proof}

Still for a faithful state $\rho$, starting form the canonical basis, the following expression is also equivalent and will be useful later: 
\begin{align}
     J = \swich{U_\rho\otimes \I}\swich{U_\rho^*\rho U_\rho\star \swich{\swich{U_\rho^*\otimes\I} \omega \swich{U_\rho\otimes \I}}^\Theta}^\Theta \swich{U_\rho^*\otimes \I},
\end{align} 
where $U_\rho$ is a unitary that diagonalises $\rho$ from the canonical basis and $\Theta$ symbolises the Hadamard (entry-wise) inverse. To show it is equal we need to see that the equation yields the correct coefficients. First, we can remove the enveloping $\swich{U_\rho\otimes\I}\cdot \swich{U_\rho^*\otimes\I}$ and work in the diagonal basis of $\rho$, as done in \cref{thm:JordanInverse}. Then, note that $\swich{U_\rho^*\otimes \I} \omega \swich{U_\rho\otimes \I}$ is just $\omega$ written in the diagonal basis of $\rho$ in the first subsystem and the canonical basis in the second, that is 
\begin{align}
    \swich{\swich{U_\rho^*\otimes \I}\omega \swich{U_\rho\otimes \I}}_{ikj\ell}=\bra{ik}\omega\ket{j\ell}.
\end{align} 
We then invert it element-wise and multiply by half the diagonal element of $\ket{i}$ and $\ket{j}$, that is \begin{align}
    \swich{U_\rho^*\rho U_\rho\star \swich{\swich{U_\rho^*\otimes \I} \omega \swich{U_\rho\otimes \I}}^\Theta}_{ikj\ell} = \fmig\frac{\swich{p_i+p_j}}{\bra{ik}\omega\ket{j\ell}}.
\end{align} 
Now, we only need to invert it element-wise.

Finally, we want to point out that \cref{thm:JordanInverse} recovers Bayes' Theorem in the classical case:

\begin{remark}\label{remark:classicalBayes}
 Let $\rho= \sum_i p_i\ketbra{i}$ and $J_{A\rightarrow B} =\sum_{ij}M_{i\rightarrow j}\ketbra{ij}$, where  $M_{i\rightarrow j}$ is a classical stochastic map. Then $Q= \sum_{ij} M_{i\rightarrow j}p_i\ketbra{ij}=\sum_{ij}r_{ij}\ketbra{ij}$, where $r_{ij}=M_{i\rightarrow j}p_i$ is a joint probability distribution. We can now apply \cref{thm:JordanInverse} considering $B$ the input space. The partial trace will be $\Tr_A\cwich{Q}=\sum_j\swich{\sum_i r_{ij}}\ketbra{j}=\sum_jq_j\ketbra{j}$. $Q$ is already diagonal in a product basis of the required form so we can directly find $J_{A\leftarrow B}=\sum_{ij}r_{ij}/q_j\ketbra{ij}=\sum_{ij}N_{i\leftarrow j}\ketbra{ij}$. Joining everything together we recover Bayes' Theorem: $r_{ij}=N_{i\leftarrow j}q_j=M_{i\rightarrow j}p_i$. 

 Note that, similarly to \cref{thm:JordanInverse}, inthe case where $q_j=0$ $N_{i\leftarrow j}$ remains undefined. In the classical case is is possble to choose any values for $N_{i\leftarrow j}$ such that $N$ is a stochastic map since these coefficients do not affect the system, as $N_{i\leftarrow j}q_j=0$ regardless of the chosen value.
\end{remark}

We will also be interested in the cone generated by a set of states over time, $\mc{Q}(\H_A:\H_B)$. A cone is a subset of a real vector space that is closed under product by positive constant and convex combination. 
We consider the cone and not the convex hull because we are interested in studying the positivity of the quantum transport cost, and the cone gives us a better framework to study it later. 
The normalisation condition is linear and therefore much easier to deal with.

\begin{defi}\label{def:coneQStates}
    Let $\H_A,\H_B$ be finite dimensional Hilbert spaces and $\mc{J}(\H_A\rightarrow\H_B)$ the set of {\jamiol} matrices between these two spaces. 
    The cone of states over time is  \begin{equation}\begin{split}
    \hat{\mc{Q}}(\H_A:\H_B)&=\cone{\set{\rho\star J | \,J\in\mc{J}(\H_A\rightarrow\H_B) \text{ and }  \rho\in S(\H_A)}}\\
    &=\cone{\mc{Q}(\H_A:\H_B)}=\cone{\bigcup_{\substack{\rho\in S(\H_A)\\\sigma\in S(\H_B)}}\mc{Q}(\rho,\sigma)}.
\end{split}\end{equation}  
\end{defi}

\subsection{States over multiple times}
\label{subsec:multipleTimes}
If instead of a single channel we have $n$ channels in sequence we can write the space of states over time associated to this process as a recursive hierarchy:
\begin{defi}\label{def:QSHierarchy}
Consider Hilbert spaces $\H_i$, $i\geq 0$. Then
    \begin{align}\begin{split}
        \mc{Q}(\H_0:\cdots:\H_n)&
        =\swich{\mc{Q}(\H_0:\cdots:\H_{n-1})\otimes\I_n}\star\swich{\bigotimes_{j=0}^{n-2}\I_j\otimes\mc{J}(\H_{n-1}\rightarrow\H_{n})}\\
        &\equiv \mc{Q}(\H_0:\cdots:\H_{n-1})\star \mc{J}(\H_{n-1}\rightarrow\H_n).
    \end{split}\end{align}
\end{defi}

By taking partial traces on specific subsystems we can `forget' about the state of the system at that slot, that is that $\Tr_i\cwich{\mc{Q}(\H_0:\cdots:\H_n)}=\mc{Q}(\H_0:\cdots:\H_{i-1}:\H_{i+1}:\cdots:\H_n)$. This is true because given two {\jamiol} matrices $J_{i-1,i},\, J_{i,i+1}$ we can construct a {\jamiol} matrix $\tilde{J}_{i-1,i+1}=\Tr_i\cwich{\swich{J_{i-1,i}\otimes\I_{i+1} }\star\swich{ \I_{i-1}\otimes J_{i,i+1}}}$ such that the associated channels fulfil $\tilde{\cE}_{i-1,i+1}=\cE_{i,i+1}\circ\cE_{i-1,i}$ \cite{chiribella09}.

Given a state over $n$ times, we can also reconstruct the initial and all instantaneous states of the system, as well as the CPTP maps linking different times. Indeed, by tracing out every subsystem except a consecutive pair, we can reduce the problem to the two-time scenario and use \cref{thm:JordanInverse}.


\section{Quantum transport cost and optimal transport}\label{sec:definition}

Before introducing the optimal transport cost between two states, it is useful to define the cost of transporting a single state through a given quantum channel.

\begin{def}\label{def:kappa}
Let $\H_A$, $\H_B$ be Hilbert spaces, let $\rho\in\S(\H_A)$ be a quantum state, and let $\mathcal{E}:\B(\H_A)\to\B(\H_B)$ be a quantum channel.
For a given cost matrix $K\in\B(\H_A\otimes\H_B)$, we define the \emph{quantum transport cost} of $\rho$ under $\mathcal{E}$ as
\begin{align}\label{eq:qChannelCost}
\kappa(\rho,\mathcal{E})=\Tr\cwich{K Q_{\rho,\mathcal{E}}},
\end{align}
where $Q_{\rho,\mathcal{E}}=\rho\star J_\cE$ denotes the state over time (or \emph{coupling}) associated with sending $\rho$ through the channel $\mathcal{E}$, with {\jamiol} operator $J_\cE$.
\end{def}

This quantity measures the expected cost of transporting the state $\rho$ from system $A$ to system $B$ when the process is fixed to be $\mathcal{E}$.
In general, there may exist many channels $\mathcal{E}$ that transform $\rho$ into a target state $\sigma$.
The \emph{quantum optimal transport cost} identifies, among all such admissible transformations, one that achieves the smallest possible cost: 

\begin{defi}
    Let $\rho,\sigma\in\S(\H_{A,B})$ and $K\in \B(\H_A\otimes \H_B)$ a Hermitian operator. The quantum transport cost with cost matrix $K$ between $\rho$ and $\sigma$ is \begin{align}\label{eq:qTransportDef}
        \mc{K}(\rho,\sigma)=\min_{\cE \,|\,Q_{\rho,\cE}\in\mc{Q}(\rho,\sigma)}\kappa(\rho,\cE)=\min_{Q\in\mc{Q}(\rho,\sigma)}\Tr[KQ].
    \end{align}
\end{defi}

Sometimes, when it is clear from the context, we refer to this quantity as just the cost, with the implication that it is quantum, optimal and transport based. Importantly, this definition explicitly depends on the choice of cost matrix $K$, which serves as the analogue of the cost function in classical optimal transport. The relevance of $\mc{K}(\rho,\sigma)$ and the properties it inherits are determined by this choice. 

As we will discuss later in \cref{sec:properties}, the choice of cost matrix $K$ determines the properties of the transport cost. In particular, we are interested in the properties that define a distance (cost $0$ for identical inputs, positivity, triangle inequality and symmetry) as well as some generically useful properties in quantum information (additivity, joint convexity). Ideally, we would be able to find cost matrices $K$ with good properties, notably the weaker ones: cost $0$ for identical inputs and positivity, that allows us to encode some physical property we are interested in considering in a distinguishability measure. In this regard, transport costs differ from traditional distinguishability measures that rely purely on the vector space structure of the states. 

Note that the quantum optimal transport cost can be calculated with an SDP \cite{boyd04,watrous11}. Let, again, $K$ be some cost matrix and $\rho$ and $\sigma$ states. Then the SDP associated to finding the quantum optimal transport cost between $\rho$ and $\sigma$ with cost matrix $K$ is \begin{align}\begin{split}
    \min_J &\quad \quad\; \Tr\cwich{(K\star \rho) J} \\
    \text{s.t.} & \quad\:\left\{\begin{aligned}
        \Tr_B J&=\I\\
        \Tr_A\cwich{\rho J}&=\sigma \label{eq:SDPform}\\
        J^{T_A}&\geq0
       \end{aligned} \right..
\end{split}\end{align} This form can provide some problems computationally due to the partial transpose in the positivity condition. To move these partial transposes to the constants of the problem, we present the SDP with the optimisation over the channel being in Choi form

\begin{align}\begin{split}
    \min_C &\quad \quad\; \Tr\cwich{(K\star \rho)^{T_A} C} \\
    \text{s.t.} & \quad\:\left\{\begin{aligned}
        \Tr_B C&=\I\\
        \Tr_A\cwich{\rho^T C}&=\sigma \label{eq:SDPformApp2}\\
        C&\geq0
       \end{aligned} \right..
\end{split}\end{align} 
and its dual \begin{align}\begin{split}
    \max_{Y_1,Y_2} &\quad \quad\; \Tr\cwich{Y_1}+\Tr\cwich{\sigma Y_2} \\
    \text{s.t.} & \quad\left\{\begin{aligned}
        Y_1\otimes\I+\rho^T\otimes Y_2 &\leq (K\star\rho)^{T_A}\\
        Y_1, Y_2 &\text{ Hermitian}
       \end{aligned} \right..
\end{split}\end{align}

The formulation of the primal in Choi form allows us to see the relation between the primal and dual problems. We use the notation of \cite{watrous11}. In this notation, the primal SDP is given by the triple $(\Phi(X),A,B)=((\Tr_B\cwich{X},\Tr_A\cwich{\rho^T X}),-(K\star \rho)^{T_A},(\I, \sigma))$. Then $\Phi^*((Y_1,Y_2))=Y_1\otimes\I+\rho^T\otimes Y_2$.

The primal expression of the SDP further shows the connection between the coupling and the channel. In fact, the coupling is only implicitly in the SDP through $\Tr\cwich{(K\star \rho) J}=\Tr\cwich{K(\rho\star J)}=\Tr\cwich{K Q}$. We use the {\jamiol} matrix in the SDP instead of the coupling in the SDP because it is unclear how the couplings can be characterised through semidefinite expressions.

\section{Properties}
\label{sec:properties}
In classical probabilistic settings, it is easy and of supreme interest  
to use transport costs to define metrics on probability distributions. 
Recall that a metric on a set $\Omega$ is a function $D:\Omega\times\Omega \rightarrow \R$
(the distance) that satisfies the following properties for all $x,y,z\in\Omega$: 
\begin{align}
  D(x,y)&\geq 0 \\
  D(x,y) &=D(y,x), \\
  D(x,y) &=0 \text{ if and only if } x=y, \\ \label{eq:metricCostZero}
  D(x,z) &\leq D(x,y)+D(y,z).
\end{align}

Note that not all properties described here are equal. While symmetry and the triangle inequality can be nice, a function without these properties can still be a divergence, a very important and useful concept in information theory \cite{ogawa00,tomamichel16}. Thus, finding conditions of $K$ for which the associated quantum optimal transport cost fulfils these properties is an important question.

The main goal of this section is to study sufficient conditions on $K$ for which the associated cost has desirable properties. This is the main theoretical framework and general results of our approach to quantum optimal transport. A better characterisation of which cost matrices induce certain properties in the associated cost would allow us to define physically relevant costs with desirable properties.
Additionally, we examine general mathematical properties of interest in this context, namely convexity properties of the cost under ensemble input states as well as additivity/multiplicity under tensor products. 

In the next section, we will use these results to focus on particular choices of cost functions under further physically relevant constraints, namely unitary invariance of the cost.

\subsection{Cost of identity}

For a quantum transport cost to be reasonable we want that the cost associated with the identity channel is $0$, regardless of the input state. 
We could ask that there exists a state over time such that the cost is $0$ if $\rho=\sigma$, like in \cref{eq:metricCostZero}, but physically it makes sense to ask that doing nothing results in zero cost.
The following result characterises the cost matrices that fulfil this property.

\begin{prop}\label{prop:zerocosts}
Given a finite dimensional Hilbert space $\H$, a cost matrix $K$ assigns cost 0 to the identity map (with any input) if and only if 
\begin{align}\begin{split}
    \Tr_B\cwich{\S\star K}=0,
\end{split}\end{align} 
where $\S$ is the swap operator.
\end{prop}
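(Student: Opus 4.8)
The plan is to unpack the condition ``cost $0$ for the identity channel with any input'' into an equation that must hold for every state $\rho$, and then extract from that the stated operator identity. The identity channel $\id:\B(\H)\to\B(\H)$ has \jamiol matrix $J_{\id}=\S$ (the swap operator), since its Choi matrix is $\ketbra{\Phi_+}$ and $\S=\ketbra{\Phi_+}^{T_A}$. Hence the unique state over time compatible with input $\rho$ and the identity channel is $Q_\rho=\rho\star\S$, and the set $\mc{Q}(\rho,\rho)$ restricted to the identity is the singleton $\{\rho\star\S\}$; in fact the minimisation in \eqref{eq:qTransportDef} need not be invoked — the requirement is simply that $\Tr[K(\rho\star\S)]=0$ for all $\rho\in\S(\H)$. (I would remark that this is the natural reading: ``doing nothing costs nothing whatever the input''.)

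Next I would manipulate $\Tr[K(\rho\star\S)]$ using the cyclicity of the trace and the self-adjointness of the Jordan-product ``pairing''. The key algebraic fact is that $\Tr[K(\rho\star\S)]=\Tr[(K\star\S)\rho\otimes\I]$, which follows from $\Tr[K(\rho\otimes\I)\S + K\S(\rho\otimes\I)]=\Tr[(\S K+K\S)(\rho\otimes\I)]=\Tr[(K\star\S)(\rho\otimes\I)]$ together with $\rho\star\S=\fmig\{\rho\otimes\I,\S\}$ — i.e. the adjoint of $X\mapsto X\star\S$ with respect to the trace form is $Y\mapsto Y\star\S$, since $\S$ is Hermitian. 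Then $\Tr[(K\star\S)(\rho\otimes\I)]=\Tr_A\bigl[\rho\,\Tr_B[K\star\S]\bigr]$, so the condition becomes $\Tr_A\bigl[\rho\,\Tr_B[\S\star K]\bigr]=0$ for all states $\rho$.

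Finally I would close the argument by a standard duality/spanning step: since $\{\rho:\rho\in\S(\H)\}$ spans the full (real) space of Hermitian operators on $\H_A$ — equivalently, the states have nonempty interior in the Hermitian operators up to trace normalisation — the linear functional $X\mapsto\Tr[X\,\Tr_B[\S\star K]]$ vanishing on all states forces $\Tr_B[\S\star K]=0$. The converse is immediate: if $\Tr_B[\S\star K]=0$ then $\Tr[K(\rho\star\S)]=\Tr_A[\rho\cdot 0]=0$ for every $\rho$, so the identity costs $0$ with any input. One subtlety to note is that $\Tr_B[\S\star K]$ need not be Hermitian a priori, but pairing it against all (Hermitian) density matrices still pins it down to $0$ because the density matrices span $\B(\H_A)$ over $\C$ as well (they contain a basis).

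The main obstacle — really the only non-bookkeeping point — is getting the ``adjoint'' identity $\Tr[K(\rho\star\S)]=\Tr[(\S\star K)(\rho\otimes\I)]$ stated cleanly and making sure the partial trace is taken over the correct subsystem ($B$, i.e. the output), so that the leftover operator lives on $\H_A$ and can be tested against $\rho$. Everything else is cyclicity of the trace and the elementary fact that density operators span the matrix algebra.
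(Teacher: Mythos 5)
Your proposal is correct and follows essentially the same route as the paper's proof: identify $J_{\mathrm{id}}=\S$, use cyclicity of the trace to rewrite $\Tr[K(\rho\star\S)]$ as $\Tr\bigl[\rho\,\Tr_B[\S\star K]\bigr]$, and conclude from the fact that density matrices span the whole operator space, with the converse being immediate. The extra remark on the (possible lack of) Hermiticity of $\Tr_B[\S\star K]$ is a harmless refinement of the same spanning argument the paper invokes.
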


\begin{proof}
The {\jamiol} operator associated to the identity channel is the swap operator $\S$, clearly from \cref{sec:preliminaries}: $J_{\mathrm{id}}=(\mathrm{id}\otimes \mathrm{id})(\S)=\S$. Now, let $K$ be a matrix such that $\kappa(\rho,\id)=\Tr\cwich{(\rho\star\S)K}=0$ $\,\forall \rho\geq0$. We can transform the left-hand side in the following way, using the definition of the Jordan product, the cyclic property of the trace and properties of partial traces:
\begin{align}
    \Tr\cwich{((\rho\otimes\I)\star \S)K}&=\fmig\Tr\cwich{((\rho\otimes\I) \S+\S (\rho\otimes \I))K} =\fmig\Tr\cwich{(\rho\otimes\I)(\S K+K\S)}\\
    &=\Tr\cwich{\rho\Tr_B\cwich{\S\star K}}=0.
\end{align} 
The set of positive matrices generates the whole space \cite{skowronek09,bakonyi11}, therefore this is equivalent to saying that the Hilbert-Schmidt inner product of $\Tr_B\cwich{\S\star K}$ with all other elements is 0.
Therefore $\Tr_B\cwich{\S\star K}=0$.
The converse is immediate, so the proof is done.
\end{proof}


\subsection{Positivity}
\label{sec:positivity}

A natural requirement for a cost matrix $K$ is that the associated quantum optimal transport cost be nonnegative, \emph{i.e.} 
$\mathcal{K}(\rho, \sigma) \ge 0$ for all $\rho, \sigma \in \mathcal{S}(\mathcal{H}_A, \mathcal{H}_B)$.
Note that in the definition of the cost matrix $K$ is always Hermitian. Because $K$ and the states over time $Q\in\mc{Q}(\H_A:\H_B)$ are Hermitian the cost is a real number. Moreover, since $\Tr\cwich{KQ}=\langle Q,K\rangle_{HS}$, where $\langle \cdot,\cdot\rangle_{HS}$ denotes the Hilbert-Schmidt inner product, the $k$ for which the cost is positive is the dual cone to $\mc{Q}(\H_A:\H_B)$ with respect to the Hilbert-Schmidt inner product. We have no closed-form characterisation of this dual cone, but we can provide some partial results. 
Throughout this section, whenever we refer to duality, we mean it with respect to the Hilbert–Schmidt inner product. 

We will be working with the cone generated by the set of states over time: $\hat{\mc{Q}}(\H_A:\H_B)=\cone{\mc{Q}(\H_A:\H_B)}$, as seen in \cref{def:coneQStates}.

\begin{remark}
The cone can also be obtained by adding an ancillary system $R$ and considering the CPTP maps that take inputs in the composite system $AR$. More precisely, consider finite dimensional Hilbert spaces $\H_A,\H_B,\H_R$ and the following: let 
\begin{align}
        \rho=\sum_t p_t\rho_t\in\S(\H_A),\quad\left\{\cE^{A\rightarrow B}_t\right\},
\end{align} 
where $\cE^{A\rightarrow B}_t $ are quantum channels from $\H_A$ to $\H_B$. 
Now consider the extension to a conditional quantum channel and the following extended state 
\begin{align}
        \rho^{AR}=\sum_t p_t\rho^A_t\otimes\ketbra{t}^R, \quad \cE^{AR\rightarrow B}(X_A\otimes X_R)=\sum_t \cE_t^{A\rightarrow B}(X_A)\expval{X_R}{t},   
\end{align} 
where the definition of $\cE^{AR\rightarrow B}$ is extended to general elements by linearity. The {\jamiol} operator associated to $\cE^{AR\rightarrow B}$ is $J^{AR\rightarrow B}=\sum_t J^{A\rightarrow B}_t\otimes\ketbra{t}^R$. Now 
\begin{align}
        J^{AR\rightarrow B}\star\rho^{AR}=\sum_tp_tJ_t^{A\rightarrow B}\star\rho^A_t\otimes\ketbra{t}^R,
\end{align} 
and the partial trace (removing $R$) of this state over time is 
\begin{align}
        \Tr_R\cwich{J^{AR\rightarrow B}\star\rho^{AR}}=\sum_tp_tJ_t^{A\rightarrow B}\star\rho^A_t,
\end{align} 
which is an arbitrary convex combination of states over time on $\mc{Q}(\H_A:\H_B)$. 
\end{remark}

First, we want to characterise the dual cone to the set of {\jamiol} matrices $\mc{J}(\H_A\rightarrow \H_B)^*$. For this, we need some technical results about convex cones, starting with the precise definition of dual cone in inner product vector spaces.

\begin{defi}\label{def:dualCone}
Let $X\subseteq V$ be a subset of a real inner product vector space $V$. The dual cone to $X$ with respect to the inner product is a cone in $V$, denoted by $X^*$, and defined as \begin{equation}
    X^*=\set{v\in V \,|\, \left\langle v|x\right\rangle\geq0 \,\forall x\in X}.
\end{equation} 
\end{defi}

In the context of this work, $V$ is the Hermitian subset of $\B(\H_A\otimes\H_B)$ equipped with the Hilbert-Schmidt inner product. Note that with no further requirements for $X$, $X^*$ fulfils the definition of cone.  

\begin{lemma}\label{lem:dualSumIntersec}
Let $I$ be an index set and $\set{\mc{C}_i}_{i\in I}$ a set of convex cones. Then 
\begin{align}
    \bigcap_{i\in I}\mc{C}_i^*=\swich{\sum_{i\in I}\mc{C}_i}^*.
\end{align}
\end{lemma}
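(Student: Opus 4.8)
The plan is to prove the two set inclusions separately, using the definition of the dual cone $\mc{C}^* = \set{y : \sprod{y}{x} \geq 0 \ \forall x\in\mc{C}}$ with respect to the Hilbert--Schmidt inner product, and the fact that the sum of cones $\sum_{i\in I}\mc{C}_i$ consists of all \emph{finite} sums $\sum_{i\in F} x_i$ with $F\subseteq I$ finite and $x_i\in\mc{C}_i$.

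First I would show $\swich{\sum_{i\in I}\mc{C}_i}^* \subseteq \bigcap_{i\in I}\mc{C}_i^*$. This is the easy direction: each $\mc{C}_j$ is contained in $\sum_{i\in I}\mc{C}_i$ (since $0$ lies in every pointed cone $\mc{C}_i$, so $x_j = x_j + \sum_{i\neq j}0$), and the dual operation is inclusion-reversing, so $\swich{\sum_{i\in I}\mc{C}_i}^* \subseteq \mc{C}_j^*$ for every $j$; intersecting over $j$ gives the claim. Then for the reverse inclusion $\bigcap_{i\in I}\mc{C}_i^* \subseteq \swich{\sum_{i\in I}\mc{C}_i}^*$, I would take $y\in\bigcap_{i\in I}\mc{C}_i^*$ and an arbitrary element $x=\sum_{i\in F}x_i$ of $\sum_{i\in I}\mc{C}_i$ with $F$ finite and $x_i\in\mc{C}_i$; then $\sprod{y}{x} = \sum_{i\in F}\sprod{y}{x_i} \geq 0$ since each term is nonnegative by $y\in\mc{C}_i^*$. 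Hence $y\in\swich{\sum_{i\in I}\mc{C}_i}^*$, completing the argument.

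The only subtlety — and the place I would be most careful — is the role of the hypothesis that the cones are pointed. Pointedness is not actually needed for the inclusions as stated; what one genuinely needs is that each $\mc{C}_i$ contains the origin, which is automatic for any convex cone, and that ``$\sum_{i\in I}\mc{C}_i$'' is interpreted via finite sums (so that the inner product distributes over the sum without convergence concerns). If $I$ is infinite and one instead wanted $\sum_{i\in I}\mc{C}_i$ to be closed, or wanted the reverse identity $\swich{\bigcap_i\mc{C}_i}^* = \overline{\sum_i \mc{C}_i^*}$, then closedness and pointedness would matter; but for the stated direction no topology is invoked. I would therefore note that the proof goes through for arbitrary convex cones, and the pointedness assumption is either a convenience for later use or could be dropped. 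In short, the main ``obstacle'' is purely bookkeeping: making sure the sum of cones is read as the set of finite sums so that both inclusions are one-line verifications.
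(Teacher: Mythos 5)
Your proof is correct and takes essentially the same route as the paper's: the paper's chain of equivalences (using linearity of the inner product and the ability to drop to a single summand) is exactly your two inclusions written in both directions at once. Your side remark is also accurate — pointedness is never actually used, only that the sum of cones is read as finite sums, and the paper's argument likewise does not invoke it.
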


\begin{proof}
We can show the equality directly. Let $x\in\bigcap_{i\in I}\mc{C}_i^*$. Then,
\begin{align}\begin{split}
     x\in\mc{C}_i^*\;\forall i\in I\quad&\Leftrightarrow\quad \innerproduct{x}{c_i}\geq0\;\forall c_i\in\mc{C}_i\;\forall i\in I \quad\Leftrightarrow\quad \sum_{i\in I}\innerproduct{x}{c_i}\geq0\;\forall c_i\in\mc{C}_i \\
     &\Leftrightarrow\quad \innerproduct{x}{\sum_{i\in I}c_i}\geq0\; \forall c_i\in\mc{C}_i \quad\Leftrightarrow\quad x\in\swich{\sum_{i\in I} \mc{C}_i}^*.
\end{split}\end{align} 
\end{proof}

Recall that a convex cone $C\subseteq V$ is called \emph{pointed} if for all 
nonzero $x\in C$, $-x\notin C$. 

\begin{lemma}\label{lem:pointedCone}
A cone $C$ is pointed if and only if there exists an element $f$ in the dual space 
such that $f(x)>0$ for all nonzero $x\in C$.
\end{lemma}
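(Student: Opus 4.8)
The plan is to prove both directions. The easy direction is ``if'': suppose there is $f$ in the dual space with $f(x) > 0$ for all nonzero $x \in C$. If $C$ were not pointed, there would be a nonzero $x \in C$ with $-x \in C$ as well; then $f(x) > 0$ and $f(-x) = -f(x) > 0$, a contradiction. Hence $C$ is pointed. This takes only a couple of lines and needs no topology.

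The substantive direction is ``only if'': assuming $C$ is pointed, produce a strictly positive functional. Here I would first reduce to the case of a \emph{closed} cone, or alternatively restrict attention to the finite-dimensional setting the paper is implicitly working in (all Hilbert spaces are finite dimensional, so $V$ is a finite-dimensional real vector space, e.g.\ the Hermitian operators). The natural approach is a separation/compactness argument: intersect $C$ with an affine hyperplane or a sphere to get a compact ``base'' of the cone, then separate the origin (or use that the base is compact and does not contain $0$). Concretely, I would fix an inner product on $V$, let $S$ be the unit sphere, and consider $B = C \cap S$; if $C$ is closed then $B$ is compact. Using pointedness, $0 \notin \mathrm{conv}(B)$: otherwise a convex combination $\sum_k \lambda_k x_k = 0$ with $x_k \in C$ unit vectors would, after grouping, exhibit a nonzero element of $C$ whose negative also lies in $C$ (one needs the $x_k$ not all equal, which follows since they are unit vectors summing to zero). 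By the separating hyperplane theorem applied to the compact convex set $\mathrm{conv}(B)$ and the point $0$, there is a functional $f$ with $f(y) \geq \epsilon > 0$ for all $y \in \mathrm{conv}(B)$, in particular for all $y \in B$; scaling by the (positive) norm shows $f(x) > 0$ for every nonzero $x \in C$.

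The main obstacle is the case of a cone that is \emph{not closed} (the lemma as stated does not assume closedness), since then $C \cap S$ need not be compact and $0$ could lie in the closure of $\mathrm{conv}(B)$ even when $C$ is pointed --- indeed in infinite dimensions the statement can fail, and even in finite dimensions one must be slightly careful. In the finite-dimensional setting that is relevant here, the cleanest fix is to pass to $\overline{C}$, note that a pointed (not necessarily closed) cone in finite dimensions has pointed closure only under an extra hypothesis, so instead I would argue directly: enumerate a finite spanning set and use that in finite dimensions a pointed cone is contained in a closed pointed cone (e.g.\ a suitable half-space iterated), or simply invoke that the dual cone $C^*$ of a pointed cone in finite dimensions has nonempty interior, and any interior point of $C^*$ is a strictly positive functional on $C \setminus \{0\}$. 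I expect the write-up to lean on this last observation --- ``$C$ pointed $\iff \mathrm{int}(C^*) \neq \emptyset$'' --- together with \cref{lem:dualSumIntersec}-style manipulations already available, citing a standard convex-analysis reference for the finite-dimensional duality fact.
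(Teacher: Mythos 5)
Your ``if'' direction is exactly the paper's argument. For the converse the paper is far terser than you: it simply notes $(C\setminus\{0\})\cap((-C)\setminus\{0\})=\emptyset$ and invokes Hahn--Banach to produce $f$ with $f>0$ on $C\setminus\{0\}$, with no compactness or closedness discussion at all. Your route --- intersect a \emph{closed} cone with the unit sphere, check $0\notin\mathrm{conv}(C\cap S)$ using pointedness, and strictly separate --- is a genuinely different and more careful argument, and it is correct in the finite-dimensional closed case; it buys an explicit $\epsilon$-gap where the paper's one-line separation claim glosses over precisely the point you worry about (plain Hahn--Banach separation of two disjoint convex sets only yields $f\geq 0$ on $C$, not strict positivity, unless compactness/openness is brought in).

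You are also right that closedness is the crux, and in fact the lemma as literally stated is false without it: in $\R^2$ take $C=\{(x,y):y>0\}\cup\{(x,0):x\geq 0\}$, a pointed convex cone admitting no strictly positive functional (any candidate $ax+by$ with $a>0$ becomes negative at $(x,1)$ for $x\to-\infty$). This same example defeats both of your proposed repairs for the non-closed case: $C$ is \emph{not} contained in any closed pointed cone (any such cone contains $\overline{C}$, the closed half-plane), and its dual cone $C^*=\{(0,b):b\geq 0\}$ has empty interior, so ``pointed $\Rightarrow\mathrm{int}(C^*)\neq\emptyset$'' fails for non-closed cones. So neither your fallback nor the paper's Hahn--Banach one-liner closes that gap --- the honest fix is to add a closedness hypothesis (or require $\overline{C}$ pointed) and use your compact-base argument. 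For the paper's purposes this is harmless: the only application (\cref{prop:QPointedSpanning}) uses the easy direction, with $f=\Tr$, and the cone $\hat{\mc{Q}}(\H_A:\H_B)$ is generated by a compact convex set not containing $0$, hence closed, so the corrected statement covers everything actually needed.
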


\begin{proof}
Let $x,-x\in C$ and $f\in C^*$ such that $f(x)>0$ for all nonzero $x\in C$. 
Because $f$ is linear $f(-x)=-f(x)<0$, which is a contradiction. 

Conversely, let $C$ be pointed, then $(C\setminus\{0\})\cap((-C)\setminus\{0\})=\emptyset$. 
By the Hahn-Banach Theorem (as formulated in \cite[Theorem 7.8.4]{narici10}), there exists a linear map such that $f(x)>0$ for all $x\in C\setminus\{0\}$.
\end{proof}

\begin{lemma}\label{lem:dualPropI}
Consider the convex cone $\mathcal{C}_2=\{C\in\B(\H_A\otimes\H_B)\;|\; \Tr_B\cwich{C}\propto_{\C}\I\}$. Its dual is 
\begin{align}
    \mathcal{C}_2^*=\set{A\otimes\I\in\B(\H_A\otimes\H_B)\;| \; A\in\B(\H_A),\I\in\B(\H_B),\; \Tr\cwich{A}=0}.
\end{align}
\end{lemma}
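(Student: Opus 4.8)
The plan is to notice first that $\mathcal{C}_2$ is actually a linear subspace of $\B(\H_A\otimes\H_B)$, not merely a cone: the condition $\Tr_B\cwich{C}\propto_{\C}\I$ (with the proportionality constant allowed to be $0$) is preserved under arbitrary linear combinations. Consequently the dual cone collapses to the orthogonal complement, $\mathcal{C}_2^*=\mathcal{C}_2^\perp$ with respect to the Hilbert–Schmidt inner product, because for every $C\in\mathcal{C}_2$ we also have $-C\in\mathcal{C}_2$, so $\innerproduct{A}{C}\geq0$ for all $A$ in the dual cone forces $\innerproduct{A}{C}=0$. It therefore suffices to compute $\mathcal{C}_2^\perp$.

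Next I would decompose $\mathcal{C}_2=\ker(\Tr_B)+\C\,\I_{AB}$, where $\Tr_B:\B(\H_A\otimes\H_B)\to\B(\H_A)$ is the partial trace: indeed, if $\Tr_B\cwich{C}=\lambda\I_A$ then $C-\frac{\lambda}{\dim\H_B}\I_{AB}\in\ker(\Tr_B)$. Taking orthogonal complements turns this sum into an intersection, $\mathcal{C}_2^\perp=\ker(\Tr_B)^\perp\cap(\C\,\I_{AB})^\perp$ (the subspace analogue of \cref{lem:dualSumIntersec}, which is immediate from linearity). The second factor is simply $(\C\,\I_{AB})^\perp=\set{Y\;|\;\Tr\cwich{Y}=0}$. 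For the first factor I would use that the Hilbert–Schmidt adjoint of the partial trace is the ampliation $A\mapsto A\otimes\I_B$, since $\Tr\cwich{(\Tr_B\cwich{C})^*A}=\Tr\cwich{C^*(A\otimes\I_B)}$; hence $\ker(\Tr_B)^\perp=\operatorname{ran}(\Tr_B^{\,*})=\set{A\otimes\I_B\;|\;A\in\B(\H_A)}$. Intersecting the two, a general element of $\mathcal{C}_2^\perp$ is of the form $A\otimes\I_B$ subject to $0=\Tr\cwich{A\otimes\I_B}=(\dim\H_B)\Tr\cwich{A}$, i.e.\ $\Tr\cwich{A}=0$, which is exactly the asserted description. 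Equivalently, one can present this as two inclusions: $\supseteq$ follows from $\Tr\cwich{(A\otimes\I_B)^*C}=\Tr_A\cwich{A^*\Tr_B\cwich{C}}=\lambda\,\overline{\Tr\cwich{A}}=0$, and $\subseteq$ by first testing a putative element of $\mathcal{C}_2^*$ against $\ker(\Tr_B)\subseteq\mathcal{C}_2$ to pin down its tensor form $A\otimes\I_B$, then against $\I_{AB}\in\mathcal{C}_2$ to extract $\Tr\cwich{A}=0$.

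The only step needing a moment's care — and the mild "main obstacle" — is the clean identification $\ker(\Tr_B)^\perp=\set{A\otimes\I_B}$, i.e.\ recognizing the adjoint of the partial trace and invoking the finite-dimensional fact $(\ker T)^\perp=\operatorname{ran}(T^*)$; the rest is bookkeeping. One should also fix whether the ambient space is all of $\B(\H_A\otimes\H_B)$ or only its Hermitian part, but the argument is identical in either case (in the Hermitian case $\lambda$ is real and all spaces are real, which changes nothing).
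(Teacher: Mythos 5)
Your proof is correct, and it takes a genuinely different route from the paper's. The paper first verifies the easy inclusion $\set{A\otimes\I\;|\;\Tr\cwich{A}=0}\subseteq\mathcal{C}_2^*$ by the same direct computation you give, and then closes the gap by a dimension count: it observes that both sets are real subspaces of $\B(\H_A\otimes\H_B)$, computes that each has real dimension $2d_A^2-2$, and concludes equality. You instead exploit the structure of $\mathcal{C}_2$ directly: since it is a subspace, the dual cone is the orthogonal complement; the decomposition $\mathcal{C}_2=\ker(\Tr_B)+\C\,\I_{AB}$ turns the complement into $\ker(\Tr_B)^\perp\cap(\C\,\I_{AB})^\perp$, and the identification of the Hilbert--Schmidt adjoint of the partial trace as the ampliation, together with $(\ker T)^\perp=\operatorname{ran}(T^*)$, yields the tensor-product form $A\otimes\I_B$ without any counting. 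Your argument is more structural and explains \emph{why} the dual consists of ampliations (it is the range of $\Tr_B^*$), whereas the paper's dimension bookkeeping is more elementary but gives the tensor form only implicitly via the matching count; both hinge on the same trivial inclusion for one direction, and your remark about fixing the ambient space (all of $\B$ versus its Hermitian part, complex versus real inner product) is the right caveat and matches the paper's implicit convention of treating everything as real subspaces.
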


\begin{proof}
Let us call this set $\mathcal{A}=\set{A\otimes\I\in\B(\H_A\otimes\H_B)\;| \;A\in\B(\H_A),\; \Tr\cwich{A}=0}$. The following calculation shows that $\mathcal{A}\subseteq\mathcal{C}_2^*$: let $A\otimes\I\in\mathcal{A}$ and $C\in\mathcal{C}_2$, then: 

\begin{align}
\Tr\cwich{\swich{A\otimes\I}C}=\Tr_A\cwich{\Tr_B\cwich{\swich{A\otimes\I}C}}=\Tr\cwich{A\Tr_B\cwich{\I C}}=z\Tr\cwich{A\I}=0.
\end{align} 

To see that they are equal, note that $\mathcal{C}_2$ (and thus the orthogonal $\mathcal{C}_2^*$ \cite{bakonyi11}) and $\mathcal{A}$ are real subspaces of the real vector space $\B(\H_A\otimes\H_B)$. We will calculate the dimension of each and see they are the same. The real dimension of $\mathcal{A}$ is just the real dimension of $\B(\H)$ minus the dimension subtracted by the two real (one complex) linear conditions $\Tr\cwich{A}=0$. That is \begin{align}
    \dim \mathcal{A}=\dim \B(\H) -2=2d_A^2-2.
\end{align} 
To find the dimension of $\mathcal{C}^*_2$, we first find the dimension of $\mathcal{C}_2$. Recall that this set is defined by the condition $\Tr_B\cwich{C}\propto_{\C}\I$. This corresponds to $2d_A(d_A-1)$ equations (real and imaginary parts of non diagonal terms equal to 0) plus $2(d_A-1)$. That is because the condition is proportionality, not equality, so we first fix the real and imaginary components of the first diagonal element and then every other diagonal element will have to have the same real and imaginary components, for a total of $2(d_A-1)$. Thus the dimension is 
\begin{align}\begin{split}
    \dim \mathcal{C}_2&=\dim \B(\H_A\otimes\H_B)-2d_A(d_A-1)-2d_A+2\\
    &=\dim \B(\H_A\otimes\H_B)-2d_A^2+2d_A-2d_A+2\\
    &=\dim \B(\H_A\otimes\H_B)-2d_A^2+2.
\end{split}\end{align} 
The dimension of the orthogonal complement is the dimension of the total space minus this, thus 
\begin{align}
    \dim\mathcal{C}^*_2=\dim \B(\H_A\otimes\H_B)-\dim \B(\H_A\otimes\H_B)+2d_A^2-2=2d_A^2-2.
\end{align} 

Since this two sets $\mathcal{A}$ and $\mathcal{C}^*_2$ are real subspaces of the same dimension and $\mathcal{A}\subseteq\mathcal{C}^*_2$, they are the same:
\begin{align}
    \mathcal{C}^*_2=\mathcal{A}=\set{A\otimes\I\in\B(\H_A\otimes\H_B)\;| \;A\in\B(\H_A),\; \Tr\cwich{A}=0}.
\end{align}
\end{proof}

\begin{lemma}\label{lem:PTstrongDuality}
The partial transpose map, denoted here by $T_A(\cdot)$, fulfils the following: 
\begin{align}\begin{split}
 \Tr_A\cwich{T_A(K)C} &= \Tr_A\cwich{K\;T_A(C)}  \\
 T_A(\Tr_B\cwich{K\; C}) &= \Tr_B\cwich{T_A(C)\;T_A(K)}\\
 T_A((\rho\otimes \I)C)&=T_A(C)(\rho^T\otimes \I)
\end{split} 
  \quad\forall K,C\in\B(\H_A\otimes\H_B), \;\rho\in\B(\H_A) .
\end{align} 
Moreover, the partial transpose is self-adjoint with respect to the Hilbert-Schmidt inner product.
\end{lemma}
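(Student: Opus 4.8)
The plan is to fix once and for all an orthonormal product basis $\{\ket{ik}\}$ of $\H_A\otimes\H_B$, with $\{\ket{i}\}$ a basis of $\H_A$ and $\{\ket{k}\}$ a basis of $\H_B$, and to reduce each of the four assertions to an identity between matrix elements. The one fact that does all the work is that the partial transpose simply interchanges the two $\H_A$-indices of a matrix element,
\begin{equation}
  \bra{ik}T_A(M)\ket{j\ell}=\bra{jk}M\ket{i\ell}\qquad\forall\,M\in\B(\H_A\otimes\H_B),
\end{equation}
which is immediate by linearity from the action on rank-one operators, $\ketbra{ab}{cd}\mapsto\ketbra{cb}{ad}$ (with $a,c$ indexing $\H_A$ and $b,d$ indexing $\H_B$). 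The same description shows that on operators of the form $X\otimes\I$, and in particular on $\rho\otimes\I$, the map $T_A$ reduces to the ordinary transpose on the $A$-factor, so $\bra{m}\rho^T\ket{j}=\bra{j}\rho\ket{m}$, and that $T_A$ commutes with the adjoint, $T_A(X^*)=T_A(X)^*$. With this in hand, each claim is a matter of expanding in the basis and relabelling dummy indices.

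For the first identity I would compute the $\H_B$ matrix element $\bra{k}\swich{\Tr_A\cwich{T_A(K)C}}\ket{\ell}=\sum_i\bra{ik}T_A(K)C\ket{i\ell}$, insert a resolution of the identity between the two factors, apply the displayed relation to the $T_A(K)$ factor, and observe that after relabelling the summed $\H_A$-indices $i\leftrightarrow j$ the resulting triple sum coincides with $\sum_i\bra{ik}K\,T_A(C)\ket{i\ell}=\bra{k}\swich{\Tr_A\cwich{K\,T_A(C)}}\ket{\ell}$. For the second identity I would expand the $\H_A$ matrix element $\bra{i}\cdot\ket{j}$ of both sides: the left side becomes $\sum_k\bra{jk}KC\ket{ik}$ and the right side becomes $\sum_k\bra{ik}T_A(C)T_A(K)\ket{jk}$, where the reversed order of $K$ and $C$ on the right is exactly what makes the two match; inserting resolutions of the identity, using the displayed relation twice, and relabelling the two summed $\H_B$-indices identifies them. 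The third identity is the shortest: expanding $\bra{ik}\cdot\ket{j\ell}$ on both sides, the left side is $\sum_m\bra{j}\rho\ket{m}\bra{mk}C\ket{i\ell}$ and the right side is $\sum_m\bra{mk}C\ket{i\ell}\bra{m}\rho^T\ket{j}$, and these agree because $\bra{m}\rho^T\ket{j}=\bra{j}\rho\ket{m}$.

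Finally, for self-adjointness with respect to the Hilbert--Schmidt inner product $\sprod{X}{Y}=\Tr\cwich{X^*Y}$, I would take $\Tr_B$ of the (already proved) first identity to obtain $\Tr\cwich{T_A(X)C}=\Tr\cwich{X\,T_A(C)}$ for all $X,C$, and then, using $T_A(K)^*=T_A(K^*)$, conclude $\sprod{T_A(K)}{C}=\Tr\cwich{T_A(K^*)C}=\Tr\cwich{K^*\,T_A(C)}=\sprod{K}{T_A(C)}$. I do not expect any genuine obstacle here beyond bookkeeping; the one point to be careful about is that each partial trace leaves an operator on the complementary factor, so the free indices and the summed indices must be tracked precisely — which is exactly why committing to a fixed product basis from the outset turns every step into a mechanical relabelling.
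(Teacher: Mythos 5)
Your proof is correct and takes essentially the same route as the paper's: the paper verifies the three identities by expanding in a product basis containing the basis used for transposition (equivalently via tensor-network diagrams), and obtains self-adjointness exactly as you do, by applying the first identity to $K^*$ and taking the full trace. Your write-up merely makes explicit the index bookkeeping and the auxiliary fact $T_A(X^*)=T_A(X)^*$ that the paper leaves implicit.
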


\begin{proof}
Recall that the transpose is a basis dependent operation. These two properties are trivial to check if we expand the equations in a product basis that includes the basis over which we are transposing. Alternatively, we can use tensor network notation \cite{biamonte20} as shown in \cref{fig:PTraceCommuting}.

\begin{figure}[ht]
    \centering
      \includegraphics[width=0.6\textwidth]{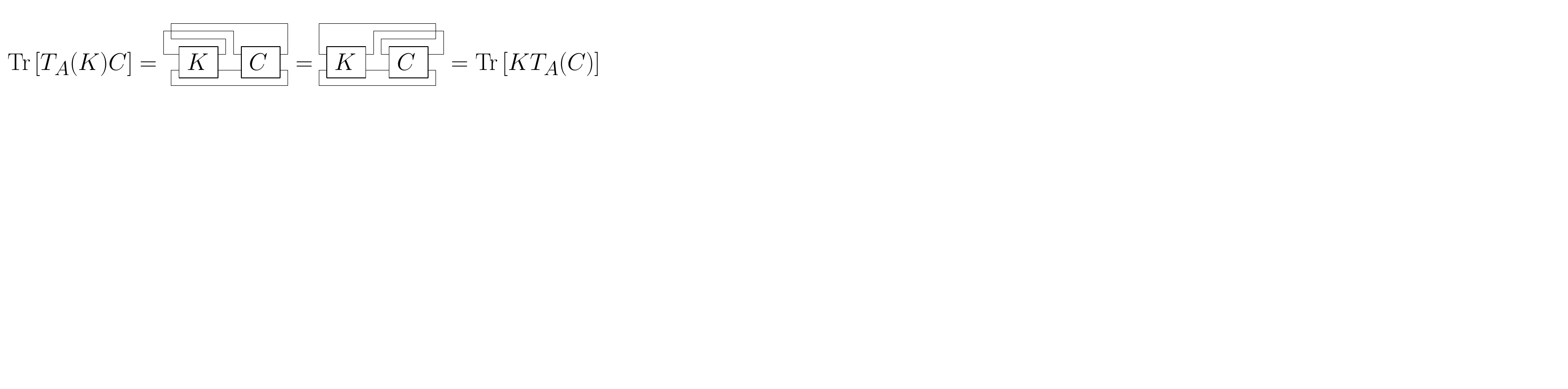} \\
      \vspace{2mm}
      \includegraphics[width=0.7\textwidth]{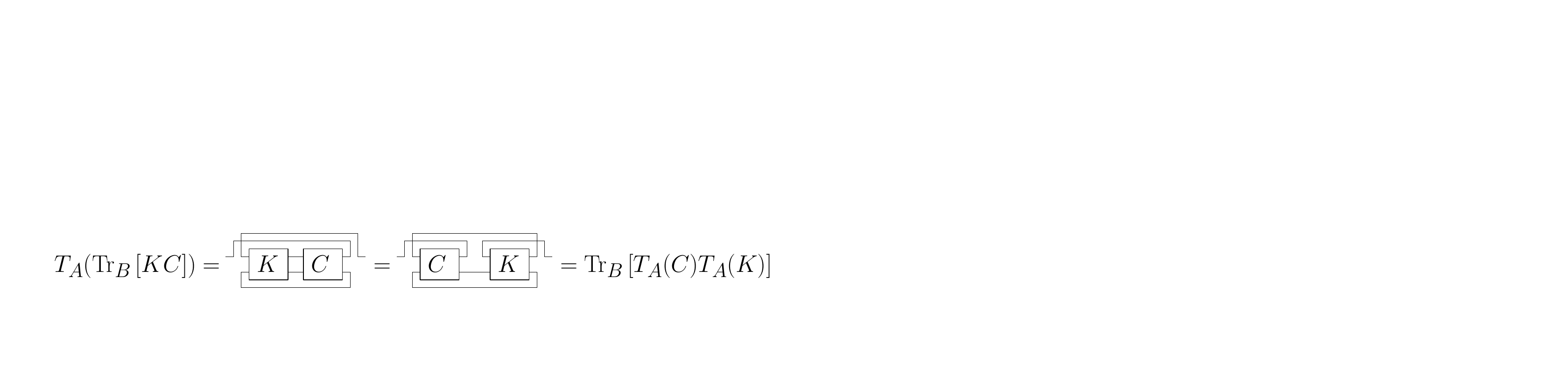} \\
      \vspace{2mm}
      \includegraphics[width=0.6\textwidth]{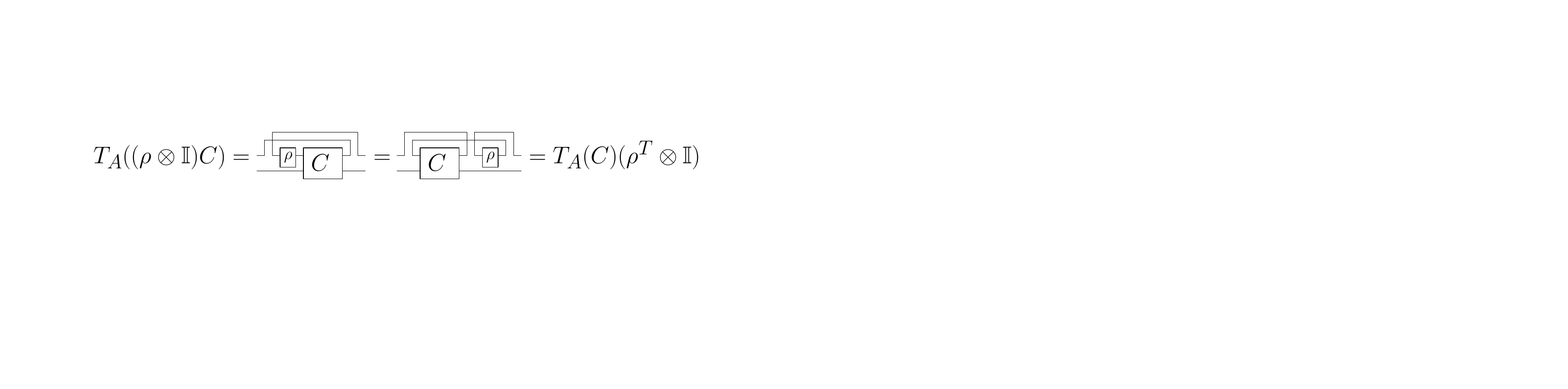}
      \caption{Proofs of the expressions in \cref{lem:PTstrongDuality} using tensor network notation.} 
      \label{fig:PTraceCommuting}
\end{figure}

To see that the partial transpose is self adjoint, 
apply the first equation to $K^*$ and take the trace on both sides of the equation: 
\begin{equation}\begin{split}
   \Tr_B \Tr_A\cwich{T_A(K^*)C} = \Tr_B\Tr_A\cwich{K^*\;T_A(C)}\\
   \Leftrightarrow \Tr\cwich{T_A(K^*)C}=\Tr\cwich{K^*\;T_A(C)}\\
   \Leftrightarrow \langle T_A(K^*) ,C \rangle_{HS} = \langle K^* ,T_A(C) \rangle_{HS}.
\end{split}\end{equation}
\end{proof}

\begin{lemma}\label{lem:linearmapdual}
Let $K$ be a convex cone and $A$ an invertible linear map. Then, 
\begin{align}
    A(K)^*&=(A^*)^{-1}(K^*).
\end{align}
\end{lemma}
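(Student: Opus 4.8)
The plan is to unfold both sides of the claimed identity directly from the definition of the dual cone with respect to the Hilbert-Schmidt inner product, using the adjoint $A^*$ to move $A$ across the pairing. Recall that for a convex cone $C\subseteq V$, the dual is $C^* = \{y\in V \mid \langle y,x\rangle\geq 0 \ \forall x\in C\}$. So I would start with $y\in A(K)^*$, which means $\langle y, A(k)\rangle\geq 0$ for all $k\in K$. By definition of the adjoint this reads $\langle A^*(y), k\rangle\geq 0$ for all $k\in K$, i.e.\ $A^*(y)\in K^*$, i.e.\ $y\in (A^*)^{-1}(K^*)$. Every step here is an equivalence, so this chain of ``$\Leftrightarrow$'' already gives the set equality.

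The one place where invertibility of $A$ is genuinely used is in making sense of $(A^*)^{-1}(K^*)$ as an honest preimage (equivalently image under the inverse) rather than just a preimage set: since $A$ is invertible on a finite-dimensional space, so is $A^*$, and $(A^*)^{-1}$ is a well-defined linear map, so $(A^*)^{-1}(K^*) = \{y \mid A^*(y)\in K^*\}$ without ambiguity. I would note this explicitly. It is also worth remarking that $A(K)$ is again a convex cone because $A$ is linear (and bijective, though linearity alone suffices for it to be a cone), so talking about its dual is legitimate.

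The main ``obstacle'' is really just bookkeeping: being careful about whether the ambient inner product space is real or complex, since the paper works with $\B(\H)$ as a real vector space under the (real part of the) Hilbert-Schmidt inner product, and making sure the adjoint $A^*$ is taken with respect to that same real inner product so that $\langle A(x),y\rangle = \langle x, A^*(y)\rangle$ holds verbatim. There is no analytic content and no need for Hahn-Banach or any of the preceding structural lemmas; the proof is a two-line computation once the conventions are pinned down. A clean way to present it:

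\begin{proof}
Since $A$ is linear, $A(K)$ is again a convex cone, and since $A$ is invertible so is its adjoint $A^*$, so $(A^*)^{-1}$ is a well-defined linear map and $(A^*)^{-1}(K^*)=\{y\,|\,A^*(y)\in K^*\}$. Now for any $y$ in the ambient space,
\begin{align}\begin{split}
    y\in A(K)^* \quad&\Leftrightarrow\quad \langle y, A(x)\rangle\geq 0 \quad\forall x\in K \\
    &\Leftrightarrow\quad \langle A^*(y), x\rangle\geq 0 \quad\forall x\in K \\
    &\Leftrightarrow\quad A^*(y)\in K^* \\
    &\Leftrightarrow\quad y\in (A^*)^{-1}(K^*),
\end{split}\end{align}
where the second equivalence is the defining property of the adjoint with respect to the Hilbert-Schmidt inner product. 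This proves $A(K)^* = (A^*)^{-1}(K^*)$.
\end{proof}
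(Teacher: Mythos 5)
Your proof is correct and follows exactly the same route as the paper's: a chain of equivalences unfolding the definition of the dual cone and moving $A$ across the inner product via its adjoint, with invertibility only needed to read $(A^*)^{-1}(K^*)$ as a bona fide preimage. The extra remarks about the real Hilbert--Schmidt structure are sensible bookkeeping but do not change the argument.
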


\begin{proof}
Let $x\in A(K)^*$. Then, 
\begin{align}\begin{split}
    x\in A(K)^* \quad&\Leftrightarrow\quad \sprod{x}{A(y)}\geq0 \:\:\:\forall y\in K \quad\Leftrightarrow\quad \sprod{A^*(x)}{y}\geq0 \:\:\:\forall y\in K \quad\\
    &\Leftrightarrow\quad A^*(x)\in K^* \quad\Leftrightarrow\quad x\in (A^*)^{-1}(K^*).
\end{split}\end{align} 
\end{proof}

\begin{coro}
The dual of the cone of {\jamiol} operators is the partial transpose of the dual 
cone of the Choi operators, denoted by $\mc{C}$. In other words,
\begin{align}
    \mathcal{J}(\H_A\rightarrow \H_B)^* 
    = T_A(\mathcal{C(\H_A\rightarrow \H_B)})^*
    = T_A(\mathcal{C}(\H_A\rightarrow \H_B)^*).
\end{align}
\end{coro}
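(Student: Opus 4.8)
The plan is to derive the corollary as a straightforward consequence of the relation $\mc{J}(\H_A\rightarrow\H_B) = T_A(\mc{C}(\H_A\rightarrow\H_B))$ established in \cref{sec:preliminaries} (recall $J_\cE = C_\cE^{T_A}$, so the set of \jamiol operators is the image of the set of Choi operators under the partial transpose map), combined with the two cone-duality lemmas already proven: \cref{lem:linearmapdual} (for an invertible linear map $A$, $A(K)^* = (A^*)^{-1}(K^*)$) and \cref{lem:PTstrongDuality} (the partial transpose $T_A$ is self-adjoint with respect to the Hilbert-Schmidt inner product). Since $T_A$ is an involution, it is in particular invertible with $T_A^{-1} = T_A$, and self-adjointness gives $T_A^* = T_A$, hence $(T_A^*)^{-1} = T_A$.

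The first step is to write $\mc{J}(\H_A\rightarrow\H_B) = T_A(\mc{C}(\H_A\rightarrow\H_B))$, which is just the restatement of $C = J^{T_A}$ applied setwise; this justifies the first equality in the displayed chain of the corollary. The second step is to apply \cref{lem:linearmapdual} with $K = \mc{C}(\H_A\rightarrow\H_B)$ and $A = T_A$, which yields
\begin{align}
    T_A(\mc{C}(\H_A\rightarrow\H_B))^* = (T_A^*)^{-1}(\mc{C}(\H_A\rightarrow\H_B)^*).
\end{align}
The third step is to substitute $(T_A^*)^{-1} = T_A$, using that $T_A$ is self-adjoint (\cref{lem:PTstrongDuality}) and an involution, to conclude
\begin{align}
    \mc{J}(\H_A\rightarrow\H_B)^* = T_A(\mc{C}(\H_A\rightarrow\H_B)^*),
\end{align}
which is the second equality in the corollary's display.

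I do not expect any serious obstacle here; the only points requiring a little care are that \cref{lem:linearmapdual} is stated for convex cones, so I should note that $\mc{C}(\H_A\rightarrow\H_B)$ (the set of Choi operators, i.e. positive semidefinite operators with $\Tr_B$ equal to the identity) — or rather the cone it generates — is indeed a convex cone so the lemma applies, and that $T_A$ being linear and bijective on $\B(\H_A\otimes\H_B)$ is exactly the invertibility hypothesis of the lemma. It is also worth remarking explicitly that $T_A$ being an involution is what collapses $(T_A^*)^{-1}$ to $T_A$ itself, so that the partial transpose appears on the outside rather than its inverse-adjoint. Everything else is mechanical substitution, and the proof can be given in three or four lines.
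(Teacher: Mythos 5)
Your proposal is correct and follows essentially the same route as the paper: the paper's own proof is the one-liner ``note that the partial transpose is self adjoint from \cref{lem:PTstrongDuality} and self inverse and apply \cref{lem:linearmapdual}'', which is exactly your combination of $\mc{J}=T_A(\mc{C})$, the invertible-map duality lemma, and $(T_A^*)^{-1}=T_A$. Your extra remarks (that $\mc{C}$ is the cone generated by Choi matrices, as in \cref{prop:dualChoi}, and that $T_A$ is a linear involution on $\B(\H_A\otimes\H_B)$) merely make explicit the hypotheses the paper leaves implicit.
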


\begin{proof}
Note that the partial transpose is self adjoint from \cref{lem:PTstrongDuality} and self inverse and apply the previous \cref{lem:linearmapdual}.
\end{proof}

First, we can show using \cref{lem:pointedCone} that the cones $\hat{\mc{Q}}(\H_A:\H_B)$ and $\hat{\mc{Q}}(\H_A:\H_B)^*$ are pointed and spanning. A cone $C\subset V$ is spanning if $C+(-C)=V$ \cite{boyd04}. 

\begin{prop}\label{prop:QPointedSpanning}
The cone of states over time $\hat{\mc{Q}}(\H_A:\H_B)$ and its dual 
$\hat{\mc{Q}}(\H_A:\H_B)^*$  are pointed, spanning cones. 
\end{prop}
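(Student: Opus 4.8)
The plan is to prove pointedness and the spanning property directly for $\hat{\mc{Q}}(\H_A:\H_B)$ and then extract the dual statements by elementary cone duality. I work in the real vector space $V$ of Hermitian operators on $\H_A\otimes\H_B$ with the Hilbert--Schmidt inner product, which is the natural ambient space since every stote $\rho\star J$ is Hermitian; write $S=\{\rho\star J : \rho\in\S(\H_A),\,J\in\mc{J}(\H_A\rightarrow\H_B)\}$, so $\hat{\mc{Q}}=\cone{S}$ by \cref{def:coneQStates}. First I would record two facts: $S$ is compact (a continuous image of a product of compact sets) and $\Tr(\rho\star J)=\Tr\bigl((\rho\otimes\I)J\bigr)=\Tr\bigl(\rho\,\Tr_B J\bigr)=\Tr\rho=1$, so $0\notin S$ and the functional $f(\cdot)=\langle\I,\cdot\rangle$ equals $1$ on $S$, hence is strictly positive on every nonzero element of $\cone{S}$ (a nonzero element is a finite sum $\sum_k\lambda_k s_k$ with $\lambda_k\ge 0$ not all zero, so its trace is $\sum_k\lambda_k>0$). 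By \cref{lem:pointedCone}, this already shows $\hat{\mc{Q}}$ is pointed.

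For the spanning property of $\hat{\mc{Q}}$ I would show that the orthogonal complement of $\operatorname{span}S$ in $V$ is $\{0\}$. So suppose $K=K^*$ satisfies $\Tr[K(\rho\star J)]=0$ for all admissible $\rho,J$. Rewriting via the adjoint identity $\Tr[K(\rho\star J)]=\Tr[(K\star(\rho\otimes\I))J]$ (the identity behind the SDP in \cref{sec:definition}), the Hermitian operator $K\star(\rho\otimes\I)$ is orthogonal to the entire set $\mc{J}(\H_A\rightarrow\H_B)$ for each fixed $\rho$. Since $\I_{AB}/d_B\in\mc{J}(\H_A\rightarrow\H_B)$ and $\I_{AB}/d_B+\epsilon D\in\mc{J}(\H_A\rightarrow\H_B)$ for every Hermitian $D$ with $\Tr_B D=0$ and $\epsilon$ small enough, this orthogonality forces $\Tr[K\star(\rho\otimes\I)]=0$ together with $K\star(\rho\otimes\I)\perp\{D : \Tr_B D=0\}$; the orthogonal complement of the latter subspace is $\{A\otimes\I_B : A=A^*\}$ (the Hermitian version of the computation in the proof of \cref{lem:dualPropI}), so $K\star(\rho\otimes\I)=A_\rho\otimes\I_B$ with $\Tr A_\rho=0$. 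Taking $\rho=\I_A/d_A$ yields $K=K_A\otimes\I_B$ for a traceless Hermitian $K_A$; substituting this back gives $K\star(\rho\otimes\I)=(K_A\star\rho)\otimes\I_B$, so the condition $\Tr A_\rho=0$ becomes $\Tr(K_A\rho)=0$ for every state $\rho$, whence $K_A=0$ because density matrices span $\B(\H_A)$ \cite{bakonyi11}. Thus $\operatorname{span}S=V$, and since $\hat{\mc{Q}}$ is a convex cone containing $S$ we get $\hat{\mc{Q}}+(-\hat{\mc{Q}})=V$, i.e.\ $\hat{\mc{Q}}$ is spanning.

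The dual statements then follow quickly. $\hat{\mc{Q}}^*$ is pointed: if $K,-K\in\hat{\mc{Q}}^*$ then $\langle K,Q\rangle=0$ for all $Q\in\hat{\mc{Q}}$, so $K\perp\operatorname{span}\hat{\mc{Q}}=V$, forcing $K=0$. And $\hat{\mc{Q}}^*$ is spanning because $\I\in\hat{\mc{Q}}^*$ (the trace computation above) in fact lies in its interior: for any Hermitian $M$ the inequality $\langle\I+\epsilon M,s\rangle=1+\epsilon\Tr(Ms)\ge 0$ holds for all $s\in S$ once $\epsilon$ is small (using compactness of $S$) and is preserved under the nonnegative scaling that builds $\hat{\mc{Q}}$ from $S$, so $\I+\epsilon M\in\hat{\mc{Q}}^*$; a convex cone with nonempty interior is spanning. (Equivalently, as $\hat{\mc{Q}}=\cone{S}$ with $S$ compact and $0\notin\operatorname{conv}S$ is a closed convex cone, pointedness and spanning of $\hat{\mc{Q}}$ imply spanning and pointedness of $\hat{\mc{Q}}^*$ by the bipolar theorem.)

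I expect the spanning property of $\hat{\mc{Q}}$ to be the only real obstacle: everything else is a short functional-analytic argument, but pinning down which Hermitian operators annihilate all states over time requires using the geometry of $\mc{J}(\H_A\rightarrow\H_B)$ --- that it contains $\I/d_B$ together with a full-dimensional neighbourhood inside the slice $\{\Tr_B J=\I\}$ --- so that the orthogonality argument can be started. Once $K$ is forced into the form $K_A\otimes\I_B$, the reduction to $\Tr(K_A\rho)=0$ for all $\rho$ and the conclusion are routine.
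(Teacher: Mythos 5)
Your proof is correct, and its pointedness argument (trace functional plus \cref{lem:pointedCone}) and its use of pointed/spanning duality for the dual cone coincide with the paper's. Where you genuinely diverge is the spanning of the primal cone: the paper disposes of this in one line by noting that the replacement channel has \jamiol matrix $\I\otimes\sigma$, so every product state $\rho\otimes\sigma=\rho\star(\I\otimes\sigma)$ lies in $\hat{\mc{Q}}(\H_A:\H_B)$, and product states already span the Hermitian operators. You instead prove that the annihilator of the set of stotes is trivial, via the adjoint identity $\Tr\cwich{K(\rho\star J)}=\Tr\cwich{(K\star(\rho\otimes\I))J}$ and the fact that $\I/d_B$ sits in the relative interior of $\mc{J}(\H_A\rightarrow\H_B)$ inside the slice $\Tr_B J=\I$, forcing $K=K_A\otimes\I$ and then $K_A=0$. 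Both arguments are sound; the paper's buys brevity (and in fact your argument would collapse to it if you restricted attention to replacement channels from the start), while yours is more self-contained in two respects: it replaces the citation to the pointed/spanning duality by explicit annihilator and interior-point arguments, and along the way it establishes the slightly stronger fact that $\I$ is an interior point of $\hat{\mc{Q}}(\H_A:\H_B)^*$. One small point worth making explicit in either version: the ambient space must be taken as the \emph{real} vector space of Hermitian operators with the Hilbert--Schmidt inner product, as you state at the outset; the paper leaves this implicit.
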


\begin{proof}
    First, we show that $\hat{\mc{Q}}(\H_A:\H_B)$ is pointed and spanning. By definition of the elements $Q\in\hat{\mc{Q}}(\H_A:\H_B)$, $\Tr\cwich{\I Q}=\Tr\cwich{\rho\star J}=\Tr\cwich{\rho}=1>0$. By \cref{lem:pointedCone}, $\mc{Q}(\H_A:\H_B)$ is pointed. 
    The cone $\hat{\mc{Q}}(\H_A:\H_B)$ is spanning because the set of product states $\{\rho\otimes\sigma| \rho,\sigma\in S(\H)\}$ is contained in $\hat{\mc{Q}}(\H_A:\H_B)$. That is because the {\jamiol} matrix of the replacement channel is $\I\otimes \sigma$. This set is spanning so as its superset $\hat{\mc{Q}}(\H_A:\H_B)$ is also spanning.
    
    The properties of pointed and spanning are such that if the primal cone has one, the dual has the other \cite{boyd04}. As we just showed that $\hat{\mc{Q}}(\H_A:\H_B)$ is pointed and spanning, its dual is also pointed and spanning.
\end{proof}

We are mostly interested in the fact that $\hat{\mc{Q}}(\H_A:\H_B)^*$ is spanning from \cref{prop:QPointedSpanning}. This shows that our search for the cone of cost matrices with positive associated costs is not futile since the set of matrices with this property is spanning.

\begin{prop}\label{prop:dualChoi}
Let $\mathcal{C}\in\B(\H_A\otimes\H_B)$ be the minimal cone that contains the Choi matrices.\footnote{Through the Choi isomorphism this would correspond to CP and trace \emph{scaling} (by a real positive constant, instead of trace preserving) maps.} Then, 
\begin{align}\begin{split}
    \mathcal{C}^*&=\overline{\B_+(\H_A\otimes\H_B)+\set{A\otimes\I\in\B(\H_A\otimes\H_B)\;| \;A\in\B(\H_A),\; \Tr\cwich{A}=0}}\\
    &=\overline{\set{\omega+A\otimes\I \in\B(\H_A\otimes\H_B)\;|\; \omega\in\B_+(\H_A\otimes\H_B),\;\Tr\cwich{A}=0}} \\
 &=\set{\omega+A\otimes\I \in\B(\H_A\otimes\H_B)\;|\; \omega\in\B_+(\H_A\otimes\H_B),\;\Tr\cwich{A}=0}.
\end{split}\end{align}
\end{prop}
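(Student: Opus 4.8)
The plan is to identify the minimal cone $\mathcal{C}$ containing the Choi matrices as the sum of two simpler cones and then dualise using the earlier lemmas. First I would observe that a Choi matrix is, by the Choi criterion, exactly a positive semidefinite operator $C \geq 0$ on $\H_A \otimes \H_B$ with the extra normalisation $\Tr_B[C] = \I_A$ (for a CPTP map) or $\Tr_B[C] \propto_{\C} \I_A$ (for a CP trace-scaling map). Hence the minimal cone containing all Choi matrices is $\mathcal{C} = \B_+(\H_A \otimes \H_B) \cap \mathcal{C}_2$, where $\mathcal{C}_2 = \{C : \Tr_B[C] \propto_{\C} \I\}$ is the cone from \cref{lem:dualPropI}. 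Actually the cleaner route is to note directly that $\mathcal{C} = \B_+(\H_A\otimes\H_B) \cap \mathcal{C}_2$: every psd operator with $\Tr_B$ proportional to the identity is a nonnegative multiple of a genuine Choi matrix, and conversely. Then $\mathcal{C}^* = (\B_+ \cap \mathcal{C}_2)^*$.

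Next I would apply the duality-of-intersection identity. By \cref{lem:dualSumIntersec}, for pointed cones $(\sum_i \mathcal{C}_i)^* = \bigcap_i \mathcal{C}_i^*$; I want the dual direction, namely $(\mathcal{C}_1 \cap \mathcal{C}_2)^* = \overline{\mathcal{C}_1^* + \mathcal{C}_2^*}$, which holds for closed convex cones (the closure of the sum of the duals equals the dual of the intersection). Here $\mathcal{C}_1 = \B_+(\H_A\otimes\H_B)$, which is self-dual, so $\mathcal{C}_1^* = \B_+(\H_A\otimes\H_B)$; and $\mathcal{C}_2^* = \{A \otimes \I : \Tr[A] = 0\}$ by \cref{lem:dualPropI}. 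This immediately gives the first displayed line,
\begin{align*}
  \mathcal{C}^* = \overline{\B_+(\H_A\otimes\H_B) + \{A\otimes\I : \Tr[A]=0\}},
\end{align*}
and the second line is just a rewriting of the sum of the two sets as $\{\omega + A\otimes\I : \omega \geq 0, \Tr[A] = 0\}$.

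The substantive part — and the step I expect to be the main obstacle — is removing the closure, i.e.\ showing the set $\mathcal{S} := \{\omega + A\otimes\I : \omega \in \B_+(\H_A\otimes\H_B),\ \Tr[A]=0\}$ is already closed. The idea is a standard Dieudonné-type argument: the sum of two closed convex cones is closed provided their ``recession directions'' do not cause trouble, concretely if $C_1 \cap (-C_2)$ contains only directions along $C_1 \cap (-C_1) + \dots$; the clean sufficient condition is that if $\omega_n + A_n\otimes\I \to X$ with $\omega_n \geq 0$ and $\Tr[A_n]=0$, then the $A_n$ can be taken bounded. Suppose not; passing to a subsequence, $\|A_n\| \to \infty$ and $A_n/\|A_n\| \to A_\infty \neq 0$ with $\Tr[A_\infty]=0$, while $\omega_n/\|A_n\| \to -A_\infty \otimes \I \geq 0$. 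But $-A_\infty\otimes\I \geq 0$ forces $A_\infty \leq 0$ (on $\H_A$), and together with $\Tr[A_\infty] = 0$ this gives $A_\infty = 0$, a contradiction. Hence $\{A_n\}$ is bounded, a convergent subsequence $A_n \to A$ exists with $\Tr[A]=0$, whence $\omega_n \to X - A\otimes\I =: \omega \geq 0$ (closedness of $\B_+$), so $X \in \mathcal{S}$. This establishes the third equality, and the proof is complete. The only delicate point to get right is the inference ``$-A_\infty \otimes \I \geq 0 \Rightarrow A_\infty \leq 0$'', which follows since the eigenvalues of $A_\infty\otimes\I$ are those of $A_\infty$ with multiplicity $\dim\H_B$.
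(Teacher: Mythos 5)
Your proof is correct and follows essentially the same route as the paper: writing $\mathcal{C}=\B_+(\H_A\otimes\H_B)\cap\mathcal{C}_2$, dualising the intersection via \cref{lem:dualSumIntersec} together with self-duality of the psd cone and \cref{lem:dualPropI}, and then removing the closure. In fact your Dieudonn\'e-type compactness argument for the closedness of $\set{\omega+A\otimes\I \;|\; \omega\geq0,\ \Tr\cwich{A}=0}$ supplies a detail that the paper's proof only asserts without justification.
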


\begin{proof}
Consider the following: 
\begin{align}
    \mathcal{C}_1&=\B_+(\H_A\otimes\H_B), \\
    \mathcal{C}_2&=\{C\in\B(\H_A\otimes\H_B)\;|\; \Tr_B\cwich{C}\propto_{\C}\I\}.
\end{align} 
These two are closed cones and 
\begin{align}
    \mathcal{C}=\mathcal{C}_1\cap\mathcal{C}_2.
\end{align} 
Moreover (the cone of psd matrices is self dual \cite{bakonyi11} and \cref{lem:dualPropI}):  
\begin{align}
   \mathcal{C}_1^*&= \mathcal{C}_1=\B_+(\H_A\otimes\H_B), \\
    \mathcal{C}_2^*&= \set{A\otimes\I\in\B(\H_A\otimes\H_B)\;| \;A\in\B(\H_A),\; \Tr\cwich{A}=0}.
\end{align} 
Now, we can use Lemma~\ref{lem:dualSumIntersec}, setting $I=\{1,2\}$ and the duals in the theorem, to find the dual of $\mathcal{C}$: 
\begin{align}\begin{split}
    \mathcal{C}^*&=(\mathcal{C}_1\cap\mathcal{C}_2)^*=(\overline{\mathcal{C}_1}\cap\overline{\mathcal{C}_2})^*=\overline{\mathcal{C}_1^*+\mathcal{C}_2^*}\\
    &=\overline{\B_+(\H_A\otimes\H_B)+\set{A\otimes\I\in\B(\H_A\otimes\H_B)\;| \;A\in\B(\H_A),\; \Tr\cwich{A}=0}},
\end{split}\end{align} 
where we used $\mathcal{C}=\mathcal{C}_1\cap\mathcal{C}_2$ first; the closeness of $\mathcal{C}_1$ and $\mathcal{C}_2$ second, then \cref{lem:dualSumIntersec}; and finally the duals of $\mathcal{C}_1$ and $\mathcal{C}_2$. To see that the set \begin{equation}\set{\omega+A\otimes\I \in\B(\H_A\otimes\H_B)\;|\; \omega\in\B_+(\H_A\otimes\H_B),\;\Tr\cwich{A}=0}\end{equation} is closed, note that both $\mc{C}_1^*$ and $\mc{C}_2^*$ are closed due to being dual cones. Moreover, let $\omega\in\mc{C}_1$ and $A\otimes\I\in\mc{C}_2$ have norm 1 (in the norm induced by the inner product). First note that the minimum eigenvalue of $A$, is lower bounded because $1=\norm{A\otimes\I}=\norm{A}\norm{I}=\sqrt{d}\norm{A}$ and $\frac{1}{\sqrt{d}}=\norm{A}=\sqrt{\sum_i\lambda^2_i}$. Due to $\Tr\cwich{A}=0$, the smallest possible eigenvalue of $A$, $\lambda_{\min}$ appears when $A$ has two eigenvalues, $\lambda_{\min}$ and $-\lambda_{\min}$. Then \begin{equation}
        \frac{1}{\sqrt{d}}=\norm{A}=\sqrt{\sum_{\lambda_i^2}}=\sqrt{2\lambda_{\min}^2},
\end{equation} so that $\lambda_{\min}=-\frac{1}{\sqrt{2d}}.$ Now we can calculate the lower bound for the norm of the sum of these two operators: 

\begin{align}
    \begin{split}
\norm{w+A\otimes\I}&=\sqrt{\Tr{\cwich{\swich{\omega+A\otimes\I}^*\swich{\omega+A\otimes\I}}}}=\sqrt{\Tr\cwich{\omega^2}+\Tr\cwich{\abs{A}^2\otimes\I}+2\Tr\cwich{\omega\swich{A\otimes\I}}}  \\
&=\sqrt{2+2\Tr\cwich{\Tr_B\cwich{\omega}\swich{A\otimes\I}}}\geq\sqrt{2-2\lambda_{\min}}=\sqrt{2\frac{\sqrt{2d}-1}{\sqrt{2d}}}>0.
\end{split}
\end{align} By \cite{beutner07}, the sum is closed, finishing the proof.
\end{proof}

For the following theorem we use the notation $D_s$, where $s\in\R_+^n$ is a sequence on positive numbers, to indicate the diagonal matrix with $s$ in its diagonal. 

\begin{thm}\label{thm:dualConeStotes}
The dual to the set of states over time for finite dimensional Hilbert spaces $\H_A,\H_B$, $\hat{\mc{Q}}(\H_A:\H_B)^*$ can be expressed as 
\begin{align}
    \hat{\mc{Q}}(\H_A:\H_B)^*=\bigcap_{U\in{\mc{U}(\H_A)}}\swich{U\otimes \I}\swich{\bigcap_{s\in\R_+^{d_A}}\varphi_{D_s}^{-1}(\mc{J}(\H_A\rightarrow\H_B)^*)}\swich{U^*\otimes \I},
\end{align} 
where $\varphi_\rho(X)=\rho\star X$ and $\mc{J}(\H_A\rightarrow\H_B)^*$ is the dual to the set of {\jamiol} matrices.
\end{thm}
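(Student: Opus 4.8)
The plan is to reduce the characterization of $\hat{\mc{Q}}(\H_A:\H_B)^*$ to a chain of dual-cone manipulations, using the tools assembled in \cref{lem:dualSumIntersec,lem:pointedCone,lem:linearmapdual} together with \cref{thm:JordanInverse}. The first step is to rewrite $\hat{\mc{Q}}(\H_A:\H_B)$ itself as a sum (over states $\rho$) of simpler cones. By \cref{def:coneQStates}, $\hat{\mc{Q}}(\H_A:\H_B)=\cone{\set{\rho\star J\mid \rho\in\S(\H_A),\,J\in\mc{J}(\H_A\rightarrow\H_B)}}$. Writing $\varphi_\rho(X)=\rho\star X$, for each fixed $\rho$ the set $\varphi_\rho(\mc{J}(\H_A\rightarrow\H_B))$ is (a cross-section of) a convex cone, and the full cone is $\sum_{\rho}\cone{\varphi_\rho(\mc{J}(\H_A\rightarrow\H_B))}$. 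Since scaling $\rho$ only rescales the image, and since a general $\rho$ has the form $\rho=U D_s U^*$ with $U$ unitary and $D_s=\operatorname{diag}(s)$ for $s\in\R_+^{d_A}$ (up to normalization, which is irrelevant for a cone), the index set can be taken to be $\{(U,s):U\in U(\H_A),\,s\in\R_+^{d_A}\}$, giving
\begin{align}
  \hat{\mc{Q}}(\H_A:\H_B) = \sum_{U\in U(\H_A)}\;\sum_{s\in\R_+^{d_A}} \cone{\varphi_{U D_s U^*}(\mc{J}(\H_A\rightarrow\H_B))}.
\end{align}

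Next I would dualize. \cref{prop:QPointedSpanning} guarantees that $\hat{\mc{Q}}(\H_A:\H_B)$ is pointed, and the same argument (or \cref{lem:pointedCone} applied to each summand, noting $\Tr[\I\,(\rho\star J)]=1>0$ uniformly) shows each constituent cone $\cone{\varphi_{UD_sU^*}(\mc{J})}$ is pointed. Hence \cref{lem:dualSumIntersec} applies and turns the sum into an intersection:
\begin{align}
  \hat{\mc{Q}}(\H_A:\H_B)^* = \bigcap_{U\in U(\H_A)}\;\bigcap_{s\in\R_+^{d_A}} \bigl(\varphi_{UD_sU^*}(\mc{J}(\H_A\rightarrow\H_B))\bigr)^*.
\end{align}
Then I would peel off the conjugation by $U$. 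Using the conjugation identity $\varphi_{UD_sU^*}(X) = (U\otimes\I)\,\varphi_{D_s}\bigl((U^*\otimes\I)X(U\otimes\I)\bigr)\,(U^*\otimes\I)$ — which follows because $\rho\star J$ is computed entrywise in an eigenbasis of $\rho$ as in \cref{thm:JordanInverse}, and $\mc{J}$ is stable under the conjugation $X\mapsto (U^*\otimes\I)X(U\otimes\I)$ since this is just conjugating the Choi/\jamiol matrix by a local unitary — one sees that $\varphi_{UD_sU^*}$ is the composition of three invertible linear maps: conjugation by $U^*\otimes\I$, then $\varphi_{D_s}$, then conjugation by $U\otimes\I$. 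Applying \cref{lem:linearmapdual} to this composition, and observing that conjugation by a unitary is self-adjoint up to replacing $U$ by $U^*$ (indeed $\langle (U\otimes\I)X(U^*\otimes\I),Y\rangle_{HS}=\langle X,(U^*\otimes\I)Y(U\otimes\I)\rangle_{HS}$), the dual of $\varphi_{UD_sU^*}(\mc{J})$ becomes $(U\otimes\I)\,\varphi_{D_s}^{-1}(\mc{J}(\H_A\rightarrow\H_B)^*)\,(U^*\otimes\I)$, where $\varphi_{D_s}^{-1}$ is the (linear, possibly only partially defined when some $s_i=0$, but that is handled exactly as the matrix-completion discussion after \cref{thm:JordanInverse}) inverse Jordan map. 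Reassembling the two intersections and moving the $U$-conjugation outside the inner intersection — which is legitimate because conjugation by a fixed unitary commutes with taking intersections — yields precisely the claimed formula.

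The step I expect to be the main obstacle is the careful handling of $\varphi_{D_s}$ when $s$ is not strictly positive: then $\varphi_{D_s}$ is not invertible, so "$\varphi_{D_s}^{-1}$" must be read as a preimage of a set, and \cref{lem:linearmapdual} (stated for invertible maps) does not directly apply. I would handle this either by a limiting argument (it suffices to take $s$ in the dense subset $\R_{>0}^{d_A}$, since the cone $\hat{\mc{Q}}$ is the closure of the union over faithful $\rho$, so the boundary cases add nothing to the intersection) or by invoking the SDP/matrix-completion formulation after \cref{thm:JordanInverse} to make sense of the preimage directly; in the statement as written the notation $\varphi_{D_s}^{-1}$ with $s\in\R_+^{d_A}$ already anticipates this reading. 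A secondary technical point is making sure the outer $U$-intersection and the conjugation are interchanged correctly, and that no closure operations are lost along the way — but since each $\cone{\varphi_\rho(\mc{J})}$ is already closed (the continuous image of the closed cone $\mc{J}$, scaled) and there are finitely many factors after fixing $U$, closures behave well, so this should be routine once the $s_i=0$ case is dispatched.
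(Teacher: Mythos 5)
Your proposal is correct and follows essentially the same route as the paper's proof: decompose $\hat{\mc{Q}}$ as a sum of cones indexed by states, dualise via \cref{lem:dualSumIntersec}, convert $\varphi_\rho(\mc{J})^*$ into $\varphi_\rho^{-1}(\mc{J}^*)$ via \cref{lem:linearmapdual} together with the inverse from \cref{thm:JordanInverse}, and then use $\rho=UD_sU^*$ plus local-unitary invariance of $\mc{J}^*$ to pull the conjugation outside the inner intersection (you do this factorisation before dualising, the paper after, which is immaterial). Your explicit attention to the non-faithful case $s_i=0$, where $\varphi_{D_s}$ fails to be invertible, is a point the paper's proof silently glosses over, so that remark is a welcome refinement rather than a deviation.
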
  

\begin{proof}
For simplicity, we ignore the specific Hilbert space dependencies. We want to study $\hat{\mc{Q}}^*=\swich{\sum_\rho\swich{\rho\star\mc{J}}}^*$. Given a non faithful $\rho$ and a fixed $J\in\mathcal{J}$, we can arbitrarily approximate $\rho\star J$ with faithful $\rho_\varepsilon$ as \begin{equation}
\swich{\frac{\rho+\varepsilon\I}{\Tr\cwich{\rho+\varepsilon\I}}}\star J,
\end{equation} $\varepsilon>0$.  Let $K$ be such that $\Tr{K(\rho\star J)}<0$. Then, by continuity, $\exists \varepsilon_0>0$ such that $\Tr{K(\rho_\varepsilon\star J)}<0$ for $0<\varepsilon<\varepsilon_0$.\footnote{Note this argument is equivalent to removing part of the boundary of the set $\hat{\mc{Q}}$ and then noting that the dual cone is always the same for a set and its interior.} Thus we can consider only faithful $\rho$ in the definition of $\mc{Q}^*$. This allows us to take the inverse without problems.

Start with the definition of $\hat{\mc{Q}}$, then apply \cref{lem:dualSumIntersec} and \cref{lem:linearmapdual}: 
\begin{align}
    \hat{\mc{Q}}^*=\swich{\sum_\rho\swich{\rho\star\mc{J}}}^*=\swich{\sum_\rho\varphi_\rho(\mc{J})}^*=\bigcap_{\rho}\varphi_\rho(\mc{J})^* =\bigcap_{\rho}\varphi_\rho^{-1}(\mc{J}^*).
\end{align} 
Note that we can use \cref{lem:linearmapdual} because for a fixed $\rho$, $\varphi_\rho$ is self dual and has linear inverse, as can be seen from the statement of the inverse in \cref{thm:JordanInverse}. From here, realise that choosing a state $\rho$ is equivalent to choosing a spectrum and a basis or, equivalently, a spectrum $s\in\R^n_+$ and a unitary of $\mc{U}(n)$; such that $\rho=U_\rho D_{s_\rho}U^*_\rho$. Moreover, $\mc{J}^*$ is invariant under local unitaries, thus 
\begin{align}\begin{split}
    \varphi_\rho^{-1}(\mc{J}^*) &=\swich{U_\rho\otimes \I}\swich{U_\rho^*\rho U_\rho\star\swich{\swich{U_\rho^*\otimes \I}\mc{J}^*  \swich{U_\rho\otimes \I}}^\Theta}^\Theta \swich{U_\rho^*\otimes \I} \\
    &=\swich{U_\rho\otimes \I}\swich{D_{s_\rho}\star\swich{\mc{J}^* }^\Theta}^\Theta \swich{U_\rho^*\otimes \I} 
    =\swich{U_{\rho}\otimes \I}\varphi_{D_{s_{\rho}}}^{-1}(\mc{J}^*)\swich{U^*_\rho\otimes \I}.
\end{split}\end{align} 
And we can insert this result into the expression of $\hat{\mc{Q}}^*$ to obtain that 
\begin{align}\begin{split}
    \hat{\mc{Q}}^*&=\bigcap_{\rho}\varphi_\rho^{-1}(\mc{J}^*)=\bigcap_{U\in U(\H_A)}\bigcap_{s\in\R^{d_A}_+}\swich{U\otimes \I}\varphi^{-1}_{D_s}(\mc{J}^*)\swich{U^* \otimes \I}\\
    &=\bigcap_{U\in U(\H_A)}\swich{U\otimes \I}\swich{\bigcap_{s\in\R^{d_A}_+}\varphi^{-1}_{D_s}(\mc{J}^*)}\swich{U^* \otimes \I},
\end{split}\end{align} 
which is the local unitarily invariant subset of $\bigcap_{s\in\R^{d_A}_+}\varphi^{-1}_{D_s}(\mc{J}^*)$.
\end{proof}

\cref{thm:dualConeStotes} provides a partial characterisation of the dual to the set of states over time as the local unitary invariant subset of the intersection of the images of $\mc{J}^*$ with the inverse of the Jordan product with diagonal states (in a chosen basis). Even though this inverse was discussed in \cref{thm:JordanInverse}, we have not been able to use it to obtain a complete and succinct characterization of the dual cone.

Finally, we can assume both non-negativity of the cost and zero cost for the identity channel to obtain the following results in the case where $\H_A=\H_B$: 

\begin{thm}\label{thm:JdualIzero}
Let $\H$ be a finite dimensional Hilbert space. Then $K \in\mathcal{J}(\H\rightarrow\H)^*\cap\set{C\in\B(\H\otimes\H)\;|\;\Tr_B\cwich{\S\star C}=0}$ if and only if 
\begin{align}
    K=T_A\swich{\omega}-\swich{\Tr_B\cwich{\S\star T_A\swich{\omega}}\otimes\I},\quad \omega\geq0,\;\;\omega\perp\ketbra{\Phi_+}.
\end{align} 
\end{thm}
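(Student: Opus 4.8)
The plan is to combine the two characterisations already in hand: Proposition~\ref{prop:zerocosts}, which says the zero-cost-for-identity condition is exactly $\Tr_B[\S\star K]=0$, and Proposition~\ref{prop:dualChoi} together with the corollary identifying $\mathcal{J}(\H\rightarrow\H)^* = T_A(\mathcal{C}(\H\rightarrow\H)^*)$. So first I would write down what membership in $\mathcal{J}(\H\rightarrow\H)^*$ means concretely: by the corollary and Proposition~\ref{prop:dualChoi}, $K\in\mathcal{J}(\H\rightarrow\H)^*$ iff $T_A(K)\in\mathcal{C}(\H\rightarrow\H)^* = \{\omega' + A\otimes\I \mid \omega'\geq 0,\ \Tr A = 0\}$. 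Equivalently, $T_A(K) = \omega' + A\otimes\I$ for some psd $\omega'$ and some traceless $A$, so $K = T_A(\omega') + T_A(A\otimes\I) = T_A(\omega') + A^T\otimes\I$ (using the last identity of Lemma~\ref{lem:PTstrongDuality} with $\rho$ replaced by $A$, or just expanding in a basis). The claimed form of the theorem has $K = T_A(\omega) - (\Tr_B[\S\star T_A(\omega)])\otimes\I$ with $\omega\geq 0$ and $\omega\perp\ketbra{\Phi_+}$, so the task is to show that imposing the zero-cost condition on top of $\mathcal{J}^*$-membership forces the ancillary traceless term to be exactly $-\Tr_B[\S\star T_A(\omega)]$ and forces the psd part to be orthogonal to $\ketbra{\Phi_+}$.

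Second, I would impose the condition $\Tr_B[\S\star K]=0$ on $K = T_A(\omega') + A^T\otimes\I$ and compute. Since $\Tr_B[\S\star(A^T\otimes\I)]$ is a scalar multiple of $\I$ (one computes $\S\star(A^T\otimes\I) = \tfrac12(\S(A^T\otimes\I)+(A^T\otimes\I)\S)$ and partial-traces out $B$; because $A$ is traceless one needs to be careful, but the swap makes this tractable — in fact $\Tr_B[\S(A^T\otimes\I)] = A^T$ up to transpose conventions, so the contribution is $A^T$ itself, not a multiple of $\I$). This is the computational heart: I expect that $\Tr_B[\S\star(A^T\otimes\I)] = A^T$ (or $A$, depending on transpose placement), so the zero-cost condition reads $\Tr_B[\S\star T_A(\omega')] + A^T = 0$, i.e. $A^T = -\Tr_B[\S\star T_A(\omega')]$, which is automatically traceless since $\Tr[\Tr_B[\S\star T_A(\omega')]] = \Tr[\S\star T_A(\omega')] = \Tr[\S T_A(\omega')] = \Tr[\omega']$... wait, that is $\Tr[\omega'] \neq 0$ in general. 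This is the subtlety: one must check that $\Tr_B[\S\star T_A(\omega')]$ is traceless, and it is \emph{not} unless $\omega'\perp\ketbra{\Phi_+}$, because $\Tr[\S\star T_A(\omega')] = \Tr[\S T_A(\omega')] = \Tr[\ketbra{\Phi_+}\,\omega']=\bra{\Phi_+}\omega'\ket{\Phi_+}$. Hence the orthogonality condition $\omega\perp\ketbra{\Phi_+}$ is exactly what is needed to make the traceless term well-defined, and conversely, given any $\omega'\geq 0$ with a nonzero $\ket{\Phi_+}$-component, one can project it off without changing $T_A(\omega') \bmod (\cdot\otimes\I)$ in a way that fixes the constraint — so the representative $\omega$ can always be chosen $\perp\ketbra{\Phi_+}$.

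Third, I would assemble the two directions. For ``$\Leftarrow$'': given $\omega\geq0$ with $\omega\perp\ketbra{\Phi_+}$, set $K = T_A(\omega) - (\Tr_B[\S\star T_A(\omega)])\otimes\I$; then $T_A(K) = \omega - (\Tr_B[\S\star T_A(\omega)])^T\otimes\I$ has psd part $\omega$ and ancillary part $-(\Tr_B[\S\star T_A(\omega)])^T\otimes\I$, which is traceless by the computation above using $\omega\perp\ketbra{\Phi_+}$, so $K\in\mathcal{J}^*$; and $\Tr_B[\S\star K] = \Tr_B[\S\star T_A(\omega)] - \Tr_B[\S\star(\Tr_B[\S\star T_A(\omega)]\otimes\I)] = \Tr_B[\S\star T_A(\omega)] - \Tr_B[\S\star T_A(\omega)] = 0$ by the $\Tr_B[\S\star(B\otimes\I)]=B$ identity. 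For ``$\Rightarrow$'': given $K$ in the intersection, write $T_A(K) = \omega' + A\otimes\I$ from Proposition~\ref{prop:dualChoi}, replace $\omega'$ by $\omega := \omega' - \bra{\Phi_+}\omega'\ket{\Phi_+}\tfrac{\ketbra{\Phi_+}}{\|\Phi_+\|^4}$ — but this is not psd in general, so instead I would argue more carefully: the decomposition $T_A(K)=\omega'+A\otimes\I$ is not unique, and one must exhibit \emph{a} valid choice with $\omega\geq0$, $\omega\perp\ketbra{\Phi_+}$. The main obstacle is precisely this non-uniqueness combined with the psd constraint: shifting $\omega'$ by a multiple of $\ketbra{\Phi_+}$ to kill its $\ket{\Phi_+}$-component is not a shift by something of the form $A\otimes\I$, so it genuinely changes $K$ unless compensated. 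I expect the resolution is that the zero-cost condition already pins down $A$ in terms of $\omega'$, and that $\ketbra{\Phi_+}$ lies in $\mathcal{C}^*$ in a way that lets one slide along it — concretely, $\ketbra{\Phi_+} = \ketbra{\Phi_+} + 0\otimes\I$ with $\Tr_B[\S\star T_A(\ketbra{\Phi_+})] = \Tr_B[\S\star \S] = \Tr_B[\S^2]=\Tr_B[\I] = d_B\I$... so adding a multiple of $T_A(\ketbra{\Phi_+})=\S$ to $T_A(K)$ shifts $\Tr_B[\S\star K]$ by a multiple of $\I$ and the ancilla term compensates. Working out this one-parameter family carefully, and checking psd-ness is preserved for the right sign of the shift, is the step I would budget the most care for; everything else is bookkeeping with Lemma~\ref{lem:PTstrongDuality} and the identity $\Tr_B[\S\star(B\otimes\I)]=B$.
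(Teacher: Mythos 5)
Your overall route coincides with the paper's: use \cref{prop:dualChoi} together with $\mathcal{J}^*=T_A(\mathcal{C}^*)$ (equivalently, work with $T_A(K)=\omega'+A\otimes\I$, $\omega'\geq0$, $\Tr A=0$), impose $\Tr_B\cwich{\S\star K}=0$, and use the identity $\Tr_B\cwich{\S\star(B\otimes\I)}=B$ (the Choi-picture version $\Tr_B\cwich{\ketbra{\Phi_+}\star(A\otimes\I)}=A$ is what the paper uses after pushing the partial transpose through the whole intersection with \cref{lem:PTstrongDuality}). Your ``$\Leftarrow$'' direction is correct and complete.

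The problem is your ``$\Rightarrow$'' direction, where you manufacture a difficulty that does not exist and then leave its resolution (the ``one-parameter family sliding along $\ketbra{\Phi_+}$'' with a psd-ness check) unproved. No choice or repair of the representative is needed: take \emph{any} decomposition $T_A(K)=\omega'+A\otimes\I$ with $\omega'\geq0$ and $\Tr A=0$ guaranteed by \cref{prop:dualChoi}. Your own second step shows that $\Tr_B\cwich{\S\star K}=0$ forces $A^T=-\Tr_B\cwich{\S\star T_A(\omega')}$; taking the trace of this identity and using $\Tr A=0$ gives
\begin{equation}
  0=\Tr\cwich{\S\,T_A(\omega')}=\Tr\cwich{\ketbra{\Phi_+}\,\omega'}=\expval{\omega'}{\Phi_+},
\end{equation}
so the psd part of \emph{every} admissible decomposition is automatically orthogonal to $\ketbra{\Phi_+}$, and $K$ already has exactly the claimed form with $\omega=\omega'$. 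This is precisely how the paper argues (in the Choi picture, $A=-\Tr_B\cwich{\ketbra{\Phi_+}\star\omega}$, then tracing yields $\expval{\omega}{\Phi_+}=0$). So the orthogonality is a \emph{consequence} of the tracelessness of $A$ plus the zero-cost constraint, not a normalisation you must arrange; the shift by multiples of $\S=T_A(\ketbra{\Phi_+})$ you sketch would change $K$ and is never needed. Deleting that third-paragraph detour and replacing it by the trace argument above turns your proposal into a complete proof; also note your intermediate miscomputation $\Tr\cwich{\S\,T_A(\omega')}=\Tr\cwich{\omega'}$ (corrected later in your own text) should simply be removed.
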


In the previous theorem, $K$ is a matrix that is dual to the {\jamiol} matrices and generates cost $0$ for the identity (see \cref{prop:zerocosts}, \cref{prop:dualChoi} and \cref{lem:PTstrongDuality}). 

\begin{proof}
Similarly to before, we ignore the Hilbert space dependencies for the proof. Note that even though the identity $T(\mathcal{C}_1\cap\mathcal{C}_2)=T(\mathcal{C}_1)\cap T(\mathcal{C}_2)$ for a linear map $T$ and convex cones $\mathcal{C}_1$, $\mathcal{C}_2$, is false in general, it is true for the partial transpose because that is an invertible map. Thus we can transform the target set as follows: 
\begin{align}\begin{split}
    \mathcal{J}^*\cap&\set{C\in\B(\H\otimes\H)\;|\;\Tr_B\cwich{\S\star C}=0} \\
    &=T_A(T_A(\mathcal{J}^*\cap\set{C\in\B(\H\otimes\H)\;|\;\Tr_B\cwich{\S\star C}=0}))\\
    &=T_A(T_A(\mathcal{J}^*)\cap T_A(\set{C\in\B(\H\otimes\H)\;|\;\Tr_B\cwich{\S\star C}=0}))\\
    &=T_A(\mathcal{C}^*\cap\set{C\in\B(\H\otimes\H)\;|\;\Tr_B\cwich{\S\star T_A(C)}=0})\\
    &=T_A(\mathcal{C}^*\cap\set{C\in\B(\H\otimes\H)\;|\;T_A(\Tr_B\cwich{\S\star T_A(C)})=T_A(0)})\\
    &=T_A(\mathcal{C}^*\cap\set{C\in\B(\H\otimes\H)\;|\;\Tr_B\cwich{T_A(\S)\star C}=0})\\
    &=T_A(\mathcal{C}^*\cap\set{C\in\B(\H\otimes\H)\;|\;\Tr_B\cwich{\ketbra{\Phi_+}\star C}=0}),
\end{split}\end{align} 
where we used \cref{lem:PTstrongDuality}.

Now consider an element of $\mathcal{C}^*$, that is a $K=\omega+A\otimes\I$, where $\omega\geq0$ and $\Tr\cwich{A}=0$. We can now plug this expression in the equation that defines the other set of the intersection: \begin{align}
    0 = \Tr_B\cwich{\ketbra{\Phi_+}\star K} 
      = \Tr_B\cwich{\ketbra{\Phi_+}\star (\omega+A\otimes\I)}
      = \Tr_B\cwich{\ketbra{\Phi_+}\star \omega}+A, 
\end{align} 
thus $A=-\Tr_B\cwich{\ketbra{\Phi_+}\star \omega}$. Moreover if we take the trace of this expression, since $\Tr\cwich{A}=0$, we find that $\expval{\omega}{\Phi_+}=0$, \emph{i.e.} $\omega\perp\ketbra{\Phi_+}$. Now, the initial set is the set defined by the partial transpose of this elements, that is
\begin{align}\begin{split}
    \mathcal{J}^*&\cap\set{C\in\B(\H\otimes\H)\;|\;\Tr_B\cwich{\S\star C}=0} \\ 
    &=T_A(\set{\omega-\Tr_B\cwich{\ketbra{\Phi_+}\star \omega}\otimes\I\in\B(\H^2)\;|\; \omega\geq0, \;\omega\perp\ketbra{\Phi_+}})\\
    &=\set{T_A(\omega)-T_A(\Tr_B\cwich{\ketbra{\Phi_+}\star \omega}\otimes\I)\in\B(\H^2)\;|\; \omega\geq0, \;\omega\perp\ketbra{\Phi_+}}\\
    &=\set{T_A(\omega)-\Tr_B\cwich{T_A(\ketbra{\Phi_+})\star T_A(\omega)}\otimes\I\in\B(\H^2)\;|\; \omega\geq0, \;\omega\perp\ketbra{\Phi_+}}\\
    &=\set{T_A(\omega)-\Tr_B\cwich{\S\star T_A(\omega)}\otimes\I\in\B(\H^2)\;|\; \omega\geq0, \;\omega\perp\ketbra{\Phi_+}}.
\end{split}\end{align}
\end{proof}

Using a variation of \cref{lem:linearmapdual} we can show how the dual cone to the cone of states over multiple times behaves under partial traces.

\begin{thm}
Let $\hat{\mc{Q}}(\H_0:\cdots:\H_n)$ be defined as 
\begin{equation}
    \hat{\mc{Q}}(\H_0:\cdots:\H_n)=\cone{\hat{\mc{Q}}(\H_0:\cdots:\H_{n-1})\star\mc{J}(\H_{n-1}\rightarrow\H_n)}.
\end{equation} 
The dual of this hierarchy fulfils 
\begin{equation}
    \Tr_{i}\cwich{\hat{\mc{Q}}(\H_0:\cdots:\H_n)^*}\supseteq\Tr_i\cwich{\hat{\mc{Q}}(\H_0:\cdots:\H_n)}^*.
\end{equation}
\end{thm}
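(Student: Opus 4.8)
The inclusion to be proved is $\Tr_i[\hat{\mc{Q}}(\H_0:\cdots:\H_n)^*] \supseteq \Tr_i[\hat{\mc{Q}}(\H_0:\cdots:\H_n)]^*$, i.e.\ that any functional annihilating the (partial-traced) cone $\Tr_i[\hat{\mc{Q}}]$ arises as the partial trace of some element of the dual cone $\hat{\mc{Q}}^*$. The key observation is that the partial trace $\Tr_i$ is a \emph{surjective} linear map (onto $\B(\H_0\otimes\cdots\otimes\H_{i-1}\otimes\H_{i+1}\otimes\cdots\otimes\H_n)$), and it has a natural adjoint with respect to the Hilbert–Schmidt inner product, namely ``tensoring with $\I_i$'': $\langle \Tr_i[X], Y\rangle_{HS} = \langle X, Y\otimes\I_i\rangle_{HS}$ for all $X\in\B(\H_0\otimes\cdots\otimes\H_n)$ and $Y$ in the smaller space. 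So the adjoint of $\Tr_i$ is the (injective) map $\iota_i: Y\mapsto Y\otimes\I_i$.

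First I would set up the duality bookkeeping precisely. For a surjective linear map $\Phi$ with adjoint $\Phi^*$ and a convex cone $\mc{C}$, one has the general identity $\Phi(\mc{C})^* = (\Phi^*)^{-1}(\mc{C}^*)$ — this is exactly the computation in \cref{lem:linearmapdual}, except that $\Phi$ is not invertible, so one must redo it by hand: $y\in\Phi(\mc{C})^* \iff \langle y,\Phi(c)\rangle\ge 0\ \forall c\in\mc{C} \iff \langle\Phi^*(y),c\rangle\ge0\ \forall c\in\mc{C} \iff \Phi^*(y)\in\mc{C}^*$. Applying this with $\Phi=\Tr_i$, $\Phi^*=\iota_i$, and $\mc{C}=\hat{\mc{Q}}(\H_0:\cdots:\H_n)$ gives
\begin{equation}
  \Tr_i[\hat{\mc{Q}}]^* = \{\,Y \;|\; Y\otimes\I_i \in \hat{\mc{Q}}^*\,\}.
\end{equation}
(One should note that $\hat{\mc{Q}}$ is pointed and spanning, as established in the two-time case and inherited here, so no pathologies with closures of the cone arise, or at least they can be absorbed into the same closure conventions used in \cref{prop:dualChoi}.)

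Now the inclusion becomes transparent: if $Y\in\Tr_i[\hat{\mc{Q}}]^*$, then $Y\otimes\I_i\in\hat{\mc{Q}}^*$ by the display above, and moreover $\Tr_i[Y\otimes\I_i] = d_i\, Y$, where $d_i=\dim\H_i$. Hence $Y = \tfrac{1}{d_i}\Tr_i[Y\otimes\I_i] \in \tfrac{1}{d_i}\Tr_i[\hat{\mc{Q}}^*] = \Tr_i[\hat{\mc{Q}}^*]$, using that $\hat{\mc{Q}}^*$ is a cone (closed under positive scaling). This proves $\Tr_i[\hat{\mc{Q}}]^* \subseteq \Tr_i[\hat{\mc{Q}}^*]$, which is the claim. (The reverse inclusion is in general \emph{false} precisely because $\Tr_i$ is not injective, which is why the theorem only asserts $\supseteq$; I would remark on this to explain why equality cannot be expected.)

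\textbf{Main obstacle.} The only real subtlety is handling closures: the dual cone identity $\Phi(\mc{C})^* = (\Phi^*)^{-1}(\overline{\mc{C}}^{\,*})$ and the relation between $\overline{\Phi(\mc{C})}$ and $\Phi(\overline{\mc{C}})$ are delicate for non-closed cones, and the cones $\hat{\mc{Q}}$ appearing here are defined as $\cone{\cdots}$ without an explicit closure. I would address this by noting that taking duals automatically passes to closures ($\mc{C}^* = \overline{\mc{C}}^{\,*}$ always), so the left-hand side $\Tr_i[\hat{\mc{Q}}]^*$ is insensitive to whether we close first, and the argument above only ever uses the one-line adjoint manipulation plus $\Tr_i[Y\otimes\I_i]=d_iY$, neither of which needs closedness. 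If one wanted the cleanest statement, one could phrase everything in terms of $\overline{\hat{\mc{Q}}}$ throughout, matching the convention of \cref{prop:dualChoi}. Everything else is a routine unwinding of definitions.
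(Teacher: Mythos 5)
Your proof is correct and takes essentially the same route as the paper: the paper's argument is exactly the one-line adjoint computation $\langle x,\Tr_i(y)\rangle = \langle x\otimes\I_i, y\rangle$ giving $x\in\Tr_i(\hat{\mc{Q}})^* \Leftrightarrow x\otimes\I_i\in\hat{\mc{Q}}^*$, followed by the conclusion $x\in\Tr_i(\hat{\mc{Q}}^*)$. Your explicit justification of that last step via $\Tr_i(x\otimes\I_i)=d_i\,x$ and closure of the dual cone under positive scaling simply spells out what the paper leaves implicit, and your closure remarks are harmless extras.
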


\begin{proof}
We can show this for general cones using the proof of \cref{lem:linearmapdual}. Let $K\subseteq\B(\H_A\otimes\H_B)$ be a cone and, then \begin{align}\begin{split}
    x\in \Tr_B(K)^* \quad&\Leftrightarrow\quad \sprod{x}{\Tr_B(y)}\geq0 \:\:\:\forall y\in K \quad\Leftrightarrow\quad \sprod{x\otimes \I}{y}\geq0 \:\:\:\forall y\in K \quad\\
    &\Leftrightarrow\quad x\otimes\I\in K^* \quad\Rightarrow\quad x\in \Tr_B(K^*).
\end{split}\end{align} 

We can now set $\H_A=\H_0\otimes\cdots\H_{i-1}\otimes\H_{i+1}\otimes\cdots\H_n$, $\H_B=\H_i$ and $K=\hat{\mc{Q}}(\H_0:\cdots:\H_n)$ to complete the proof. 
\end{proof}

\subsection{Symmetry}

We call an optimal quantum cost symmetric if 
\begin{equation}
    \mathcal{K}(\rho,\sigma)=\mathcal{K}(\sigma,\rho),\quad \forall \rho,\sigma\in\S(\H_{A,B}).
\end{equation}

A naive guess or first thought might suggest that symmetry of the cost matrix under input–output exchange, i.e., $\S K \S = K$, guarantees symmetry of the associated cost. We will see that this not the case. To understand this, we first address another natural question,  namely whether the set of states over time itself is symmetric. In other words, we ask whether swapping the input and output space in the coupling always produces a valid state over time for the reversed pair of states, that is, whether $\S\mc{Q}(\rho,\sigma)\S=\mc{Q}(\sigma,\rho)$.

There is a clear asymmetry in the case when, for example, $\rho$ is full rank and $\sigma$ is pure, where a single channel that brings $\rho$ to $\sigma$ exists. Despite that, as seen in \cref{thm:JordanInverse}, if the input state is not full rank a single state over time can correspond to multiple channels. Therefore, in the case where $\sigma$ is pure, it could be that the single channel that brings $\rho$ to $\sigma$ inverts into all the channels that bring $\sigma$ to $\rho$. 

In the following example we perform specific calculations for the case where both states are qubits. We observe, in examples \textit{iii), iv)}, that some states over time cannot be time inverted, thus strongly suggesting that $\S K\S =K$ is not a sufficient condition on the cost matrix to give rise to symmetry of the cost.

\begin{example}\begin{enumerate}[i)]
    
    \item \textbf{Replacement channel:} Let $\rho, \sigma$ be any states and let $J_{A\rightarrow B}$ be the {\jamiol} matrix associated to the constant channel $\cE(\rho)=\Tr(\rho)\sigma$, that is $J_{A\rightarrow B}=\I\otimes\sigma$. The associated state over time is $Q=\rho\star J_{A\rightarrow B}=\rho\otimes\sigma$. From the symmetry of the state over time we can see immediately that we can obtain the same result with $(\sigma,J_{A\leftarrow B}=\rho\otimes\I)$.

    \item \textbf{Identity channel:} Let $\rho$ be a qubit state with eigenvalues $\{p,1-p\}$ and $J_{A\rightarrow B}$ be the {\jamiol} matrix associated to the identity channel, $\S$. Then the associated state over time $Q$ is, in (the tensor basis generated by) the diagonal basis of $\rho$, 
    \begin{align}
        Q=\rho\star\S=\begin{bmatrix}
            p & 0 & 0 & 0\\
            0 & 0 & \frac{1}{2} & 0 \\
            0 & \fmig & 0 & 0 \\
            0 & 0 & 0 & 1-p
        \end{bmatrix}.
    \end{align} 
    Similarly to before, the symmetry (under subsystem swap) allows us to easily show that the pair $(\sigma=\rho,J_{A\leftarrow B}=\S)$ yields the same state over time.

    \item \textbf{Depolarising channel:} Let the initial state be a  pure state, without loss of generality (w.l.o.g.), we will set $\rho=\ketbra{0}$. Let $J_{A\rightarrow B}$ be the {\jamiol} matrix associated to the depolarising channel $\cE(\rho)=(1-p)\rho +p\Tr(\rho)\frac{\I}{2}$, that is $J_{A\rightarrow B}=(1-p)\S+\frac{p}{2}\I$. The resulting state over time is 
    \begin{align}
        Q=\rho\star J_{A\rightarrow B}=\fmig\begin{bmatrix}
            2-p & 0 & 0 & 0 \\
            0 & p & 1-p & 0 \\
            0 & 1-p & 0 & 0 \\
            0 & 0 & 0 & 0 
        \end{bmatrix}.
    \end{align} 
    From this channel, $\sigma=\cE(\rho)=\fmig(2-p)\ketbra{0}+\frac{p}{2}\ketbra{1}$. Applying \cref{thm:JordanInverse} to $Q$ yields \begin{align}
        J_{A\leftarrow B}^{T_B}=\begin{bmatrix}
            1 & 0 & 0 & 1-p \\
            0 & 1 & 0 & 0 \\
            0 & 0 & 0 & 0 \\
            1-p & 0 & 0 & 0 
        \end{bmatrix},
    \end{align} 
    which can be shown through Sylvester's criterion to be not psd by taking the principal minor with $I=\{1,4\}$ if $p\neq1$. If $p=1$, the depolarising channel becomes a replacement channel which we have seen is reversible.

    \item \textbf{Dephasing channel:} Let $\rho=\ketbra{+}$ and $J_{A\rightarrow B}$ be the {\jamiol} matrix associated to the dephasing channel $\cE(\rho)=p\rho+(1-p)\sigma_z\rho\sigma_z$ for $p\in(0,1)$, that is 
    \begin{align}
        J_{A\rightarrow B}=\begin{bmatrix}
            1 & 0 & 0 & 0 \\
            0 & 0 & \frac{2p-1}{2} & 0 \\
            0 & \frac{2p-1}{2} & 0 & 0 \\
            0 & 0 & 0 & 1
        \end{bmatrix}.
    \end{align} 
    Note that $\sigma=\cE(\rho)=\frac{1}{2}(\I+(2p-1)\sigma_x)$, which has rank 2 for $p\in(0,1)$. We can now calculate the associated state over time 
    \begin{align}
        Q=\rho\star J_{A\rightarrow B}=\frac{1}{8}\begin{bmatrix}
            4 & 2p-1 & 2 & 0 \\
            2p-1 & 0 & 2(2p-1) & 2 \\
            2 & 2(2p-1) & 0 & 2p-1 \\
            0 & 2 & 2p-1 & 4
        \end{bmatrix}.
    \end{align} 
    We can now calculate $J_{A\leftarrow B}$ from \cref{thm:JordanInverse},\footnote{Note that even though $\rho$ has rank 1, $\sigma$ has rank 2 and therefore allows us to uniquely apply \cref{thm:JordanInverse}.} which yields 
    \begin{align}
        J_{A\leftarrow B} = \frac{1}{4}\begin{bmatrix}
            4 & 0 & 2 & 1-2p \\
            0 & 0 & 2p-1 & 2 \\
            2 & 2p-1 & 0 & 0 \\
            1-2p & 2 & 0 & 4
        \end{bmatrix}.
    \end{align} 
    This matrix is clearly not psd under partial transposition of $B$ since the principal minor $[J_{A\leftarrow B}]_{\{1,3\}}$ (which is unaffected by the partial transposition) has negative determinant, thus the matrix is not psd from Sylvester's criterion. For example, when $p=\frac{1}{2}$, the eigenvalues of $J_{A\leftarrow B}^{T_A}$ are $\{\frac{1}{2}(1\pm\sqrt{2})\}$.

    \item \textbf{Measure-and-prepare channel:} Let $\rho\in\S(\H_A)$ and $J_{A\rightarrow B}$ be the {\jamiol} matrix of a measure-and-prepare channel, that is a channel of the form 
    \begin{equation}
      \cE(x)=\sum_i \Tr\cwich{M_i x}\sigma_i,
    \end{equation} 
    where $\{M_i\}$ is a POVM on $\mc{H}_A$ and $\sigma_i\in\S(\H_B)$ are states. Additionally, we assume that $\supp{\sigma}=\cup_i\supp{\sigma_i}=\H_B$, else we could redefine $\H_B$ as $\supp{\sigma}$. Then, 
    \begin{equation} 
        J_{A\rightarrow B} = \sum M_i\otimes\sigma_i \quad\text{and}\quad Q = \sum \swich{\rho\star M_i}\otimes \sigma_i.
    \end{equation} 
    Because the map in \cref{thm:JordanInverse} is CP, as seen in \cref{coro:JInverseCP}, $J_{A\leftarrow B}^{T_A}$ will be positive if $Q$ is positive, and $Q$ will be positive if every $\rho\star M_i$ is (with an if and only if when the $\sigma_i$ are orthogonal). This will happen if the channel is classical-quantum channels, that is when $\{M_i\}$ is a projective measurement, and $\rho$ is diagonal in a basis defined by this measurement.

    As a particular example of this last case, let $\rho=p\ketbra{0}+(1-p)\ketbra{1}$ and $J_{A\rightarrow B}=\ketbra{0,0}+\ketbra{1,+}$, the {\jamiol} matrix corresponding to the classical-quantum channel that keeps $\ketbra{0}$ constant and yields $\ketbra{+}$ on input $\ketbra{1}$. Then $\sigma=\fmig(\I+p\sigma_z+(1-p)\sigma_x)$ and the resulting state over time is 
    \begin{align}
        Q=\rho\star J_{A\rightarrow B}=\fmig\begin{bmatrix}
            2p & 0 & 0 & 0 \\
            0 & 0 & 0 & 0 \\
            0 & 0 & 1-p & 1-p \\
            0 & 0 & 1-p & 1-p 
        \end{bmatrix}. 
    \end{align} 
    If we write $J_{A\leftarrow B}$ we get 
    \begin{align}
        J_{A\leftarrow B}=\fmig\begin{bmatrix}
            1+p & p-1 & 0 & 0 \\
            p-1 & 1-p & 0 &  \\
            0 & 0 & 1-p & 1-p \\
            0 & 0 & 1-p & 1+p 
            \end{bmatrix}, 
    \end{align} 
    which is positive and left invariant under partial transposition. Note that $J_{A\leftarrow B}=\ketbra{0} \otimes\sigma_0+\ketbra{1} \otimes \sigma_1$, with 
    \begin{equation}
        \sigma_0=\fmig\begin{bmatrix}
            1+p & p-1 \\
            p-1 & 1-p \end{bmatrix},\quad  \sigma_1=\fmig\begin{bmatrix}
            1-p & 1-p \\
            1-p & 1+p \end{bmatrix}.
    \end{equation} 
    This {\jamiol} matrix corresponds to another measure-and-prepare map, this time with non-orthogonal measurements since $\Tr\cwich{\sigma_0\sigma_1}=p(1-p)$.
    \end{enumerate}
\end{example}

The example shows a type of channels where the symmetry of the set of states over time is broken: channels which reduce the coherence of the input states. In \cref{sec:UIcost} we discuss an example where this asymmetry is numerically shown in the cost function, rather than the set of couplings, acting as a proof that $K(\rho,\sigma)$ is not a symmetric function for symmetric cost matrices.

\subsection{Triangle inequality}
We will state a condition for the fulfilment of the triangle inequality.  For this purpose, we consider $\mc{Q}(\H_A:\H_B:\H_C)$ from \cref{def:QSHierarchy} and its dual, $\mc{Q}(\H_A:\H_B:\H_C)^*$. This dual provides us with a cone with respect to which we can define a partial order for cost matrices: 

\begin{thm}\label{thm:JPtrineq}
Consider Hilbert subspaces $\H_A$, $\H_B$ and $\H_C$. The inequality 
\begin{equation}
    \mc{K}_{AB}(\rho_A,\rho_B)+\mc{K}_{BC}(\rho_B,\rho_C)\geq \mc{K}_{AC}(\rho_A,\rho_C)
\end{equation} 
will be fulfilled for all $\rho_i\in\S(\H_i)$, $i\in\set{A,B,C}$ if the cost matrices fulfil the following identity: 
\begin{align}
     K_{AB}\otimes\I_C+\I_A\otimes K_{BC}-K_{AC}\otimes\I_B\in\mc{Q}(\H_A:\H_B:\H_C)^*.
\end{align}
\end{thm}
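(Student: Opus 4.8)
The plan is to prove the triangle inequality by gluing the two optimal two‑time plans into a single three‑time state over time whose $\H_B$‑marginal is an admissible $A\to C$ plan, and then to observe that the hypothesis on the cost matrices is exactly what turns the resulting operator inequality into the scalar one. Here I read $\mc{K}_{AB}=\mc{K}_{AB}(\rho,\sigma)$, $\mc{K}_{BC}=\mc{K}_{BC}(\sigma,\tau)$, $\mc{K}_{AC}=\mc{K}_{AC}(\rho,\tau)$ for arbitrary $\rho\in\states{\H_A}$, $\sigma\in\states{\H_B}$, $\tau\in\states{\H_C}$ (with $\sigma$ the shared intermediate state), which is what ``for all input states'' means.

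Concretely, I would first pick minimisers: since $\mc{Q}(\rho,\sigma)$ is always nonempty (the replacement channel realises any pair), compact, and $Q\mapsto\Tr[KQ]$ is continuous, choose $Q_{AB}=\rho\star J_{AB}\in\mc{Q}(\rho,\sigma)$ and $Q_{BC}=\sigma\star J_{BC}\in\mc{Q}(\sigma,\tau)$ attaining $\mc{K}_{AB}$ and $\mc{K}_{BC}$, with $J_{AB}\in\mc{J}(\H_A\rightarrow\H_B)$, $J_{BC}\in\mc{J}(\H_B\rightarrow\H_C)$. Then set
\[
Q_{ABC}\eqdef\swich{Q_{AB}\otimes\I_C}\star\swich{\I_A\otimes J_{BC}},
\]
which lies in $\mc{Q}(\H_A:\H_B:\H_C)$ directly by \cref{def:QSHierarchy}, since $Q_{AB}\in\mc{Q}(\H_A:\H_B)$ and $J_{BC}\in\mc{J}(\H_B\rightarrow\H_C)$.

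The key step is to compute the marginals of $Q_{ABC}$. Tracing out $C$, the factor $J_{BC}$ meets $\Tr_C\cwich{J_{BC}}=\I_B$ in each ordering of the anticommutator, so both terms collapse to the same operator and $\Tr_C\cwich{Q_{ABC}}=Q_{AB}$; tracing out $A$, the factor $Q_{AB}$ meets $\Tr_A\cwich{Q_{AB}}=\sigma$ in each ordering, so $\Tr_A\cwich{Q_{ABC}}=\sigma\star J_{BC}=Q_{BC}$. (These are the marginal property of stotes together with the composition‑of‑channels fact recalled after \cref{def:QSHierarchy}.) Hence $Q_{AC}\eqdef\Tr_B\cwich{Q_{ABC}}$ is a two‑time state over time in $\mc{Q}(\H_A:\H_C)$ with $\Tr_C\cwich{Q_{AC}}=\Tr_B\cwich{Q_{AB}}=\rho$ and $\Tr_A\cwich{Q_{AC}}=\Tr_B\cwich{Q_{BC}}=\tau$, i.e.\ $Q_{AC}\in\mc{Q}(\rho,\tau)$ is feasible for the optimisation defining $\mc{K}_{AC}$.

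Finally I would chain the estimates. Lifting the two optimal costs to the three‑time system with the marginal identities, and using feasibility of $Q_{AC}$,
\[
\mc{K}_{AB}+\mc{K}_{BC}=\Tr\cwich{\swich{K_{AB}\otimes\I_C+\I_A\otimes K_{BC}}Q_{ABC}},\qquad \mc{K}_{AC}\leq\Tr\cwich{\swich{K_{AC}\otimes\I_B}Q_{ABC}},
\]
so that
\[
\mc{K}_{AB}+\mc{K}_{BC}-\mc{K}_{AC}\geq\Tr\cwich{\swich{K_{AB}\otimes\I_C+\I_A\otimes K_{BC}-K_{AC}\otimes\I_B}Q_{ABC}}\geq0,
\]
the last step because the bracketed operator is in $\mc{Q}(\H_A:\H_B:\H_C)^*$ by hypothesis while $Q_{ABC}\in\mc{Q}(\H_A:\H_B:\H_C)$. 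The only place that needs genuine care is the marginal computation: one must check that \emph{both} terms of each anticommutator survive the partial trace as the \emph{same} operator, so that the factor $\fmig$ is not lost and $Q_{AB},Q_{BC}$ reappear exactly; beyond that bookkeeping there is no real obstacle, the content of the theorem being that $\mc{Q}(\H_A:\H_B:\H_C)^*$ is precisely the cone that encodes ``the glued plan is no worse than optimal''.
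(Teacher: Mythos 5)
Your proposal is correct and follows essentially the same route as the paper: glue the two optimal two‑time plans into the three‑time stote $Q_{ABC}=(Q_{AB}\otimes\I_C)\star(\I_A\otimes J_{BC})$, use the marginal/associativity identities (which the paper establishes as explicit chains of trace equalities via the partial associativity of the Jordan product and the link product) to express $\mc{K}_{AB}$, $\mc{K}_{BC}$ and a feasible $AC$ cost as traces against $Q_{ABC}$, and invoke the dual‑cone hypothesis for the final nonnegativity. The only difference is presentational: you phrase the key identities as partial‑trace marginal computations, while the paper writes them out as trace manipulations using $\Tr_C J_{BC}=\I_B$ and $\Tr_A[\rho\star J_{AB}]=\sigma$.
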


\begin{proof}
Consider first an admissible {\jamiol} matrix in systems $AB$ and a cost matrix $K_{AB}$. Because $\I_B=\Tr_C\cwich{J_{BC}}$ for any admissible {\jamiol} matrix in systems $BC$, and the partial associativity of the Jordan product \cite{horsman17,fitzsimons15} we can rewrite the cost as 
\begin{align}\begin{split}
    \Tr\cwich{K_{AB}(\rho\star J_{AB})} 
    &=\Tr\cwich{K_{AB}((\rho\star J_{AB})\star (\I_A\otimes\Tr_C\cwich{J_{BC}}))} \\
    &=\Tr\cwich{(K_{AB}\star (\rho\star J_{AB}))(\I_A\otimes\Tr_C\cwich{J_{BC}})} \\
    &= \Tr\cwich{((K_{AB}\star (\rho\star J_{AB}))\otimes\I_C)(\I_A\otimes J_{BC})} \\
    &=\Tr\cwich{(K_{AB}\otimes\I_C)((\rho\star (J_{AB}\otimes\I_C))\star (\I_A\otimes J_{BC}))} \\
    &=\Tr\cwich{\swich{K_{AB}\otimes\I_C}(\rho\star ((J_{AB}\otimes\I_C)\star (\I_A\otimes J_{BC})))}.
\end{split}\end{align} 
Similarly, because if a channel yields $\sigma$ as the image of $\rho$ its {\jamiol} matrix will fulfil $\Tr_A\cwich{\rho\star J_{AB}}=\sigma$ we can operate the cost for any admissible {\jamiol} matrices and cost $K_{BC}$ as: 
\begin{align}\begin{split}
    \Tr\cwich{K_{BC}(\sigma\star J_{BC})} 
    &=\Tr\cwich{K_{BC}((\Tr_A\cwich{\rho\star J_{AB}}\otimes\I_C)\star J_{BC})} \\
    &=\Tr\cwich{(J_{BC}\star K_{BC})(\Tr_A\cwich{\rho\star J_{AB}}\otimes\I_C)} \\
    &=\Tr\cwich{((\I_A\otimes J_{BC})\star (\I_A\otimes K_{BC}))(\rho\star J_{AB}\otimes\I_C)} \\
    &=\Tr\cwich{(\I_A\otimes K_{BC})((\rho\star (J_{AB}\otimes\I_C))\star (\I_A\otimes J_{BC}))} \\
    &=\Tr\cwich{(\I_A\otimes K_{BC})(\rho\star ((J_{AB}\otimes\I_C)\star (\I_A\otimes J_{BC})))}.
\end{split}\end{align} 
Finally, for systems $AC$, consider the link product \cite{chiribella09} of any two {\jamiol} matrices as the {\jamiol} matrix of $AC$: 
\begin{align}\begin{split}
    \Tr\cwich{K_{AC}(\rho\star J_{AC})}
    &=\Tr\cwich{K_{AC}(\rho\star \Tr_B\cwich{(J_{AB}\otimes\I_C)\star (\I_A\otimes J_{BC})})} \\
    &=\Tr\cwich{(K_{AC}\star \rho)\Tr_B\cwich{(J_{AB}\otimes\I_C)\star (\I_A\otimes J_{BC})}}\\
    &=\Tr\cwich{((K_{AC}\otimes\I_B)\star \rho)((J_{AB}\otimes\I_C)\star (\I_A\otimes J_{BC}))} \\
    &=\Tr\cwich{(K_{AC}\otimes\I_B)(\rho\star ((J_{AB}\otimes\I_C)\star (\I_A\otimes J_{BC})))}.
\end{split}\end{align} 

Let $K'=K_{AB}\otimes\I_C+\I_A\otimes K_{BC}-K_{AC}\otimes\I_B$. With these 3 equalities in hand, we can consider 3 optimal {\jamiol} matrices, indicated by the superindex ${}^o$, for the costs $\mc{K}_{AB}$ and $\mc{K}_{BC}$ and $\mc{K}_{AC}$. Then by using the previous expressions we can show that: 
\begin{align}\begin{split}
    \mc{K}_{AC} =&\Tr\cwich{K_{AC}(\rho\star J^o_{AC})}\leq\Tr\cwich{K_{AC}(\rho\star J_{AC})}\\
    =&\Tr\cwich{K_{AC}\otimes\I_B(\rho\star ((J^o_{AB}\otimes\I_C)\star (\I_A\otimes J^o_{BC})))} \\
    =&\Tr\cwich{(K_{AB}\otimes\I_C+\I_A\otimes K_{BC}-K')(\rho\star ((J^o_{AB}\otimes\I_C)\star (\I_A\otimes J^o_{BC})))} \\
    =&\Tr\cwich{(\I_A\otimes K_{BC})(\rho\star ((J^o_{AB}\otimes\I_C)\star (\I_A\otimes J^o_{BC})))} \\
    +&\Tr\cwich{\swich{K_{AB}\otimes\I_C}(\rho\star ((J^o_{AB}\otimes\I_C)\star (\I_A\otimes J^o_{BC})))} \\
    -&\Tr\cwich{K'\swich{\rho\star \swich{J_{AB}^o\star J_{BC}^o}}} \\
    =& \Tr\cwich{K_{AB}(\rho\star J^o_{AB})} +\Tr\cwich{K_{BC}(\sigma\star J^o_{BC})}-\Tr\cwich{K'\swich{\rho\star \swich{J_{AB}^o\star J_{BC}^o}}}\\
    =&\mc{K}_{AB}+\mc{K}_{BC}-\Tr\cwich{K'\swich{\rho\star \swich{J_{AB}^o\star J_{BC}^o}}}.
\end{split}\end{align} 
Finally, because $K'$ is in the dual of $\mc{Q}_3$, $\Tr\cwich{K'\swich{\rho\star \swich{J_{AB}^o\star J_{BC}^o}}}\geq0$ and therefore $\mc{K}_{AB}+\mc{K}_{BC}\geq \mc{K}_{AC}$.
\end{proof}

For the sake of completeness, this general statement can be converted to a statement about 
the triangle inequality for the cost $\mc{K}$ associated to a given cost matrix $K$:

\begin{coro}
Let $\H$ be a Hilbert space, $K\in\B(\H\otimes \H)$ and $\mc{K}$ the quantum optimal cost associated to $K$. 
Then $\mc{K}$ fulfils the triangle inequality if 
\begin{equation}
  K\otimes\I_3+\I_1\otimes K-K\otimes\I_2\in\mc{Q}(\H:\H:\H)^*,
\end{equation} 
where the subindices indicate different copies of the same Hilbert space $\H$.
\end{coro}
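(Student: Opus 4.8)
The plan is to obtain this simply as the diagonal specialisation of \cref{thm:JPtrineq}. First I would set $\H_A=\H_B=\H_C=\H$ and take all three cost matrices equal to the single operator $K\in\B(\H\otimes\H)$, i.e.\ $K_{AB}=K_{BC}=K_{AC}=K$. Under this identification the hypothesis of \cref{thm:JPtrineq},
\begin{equation}
  K_{AB}\otimes\I_C+\I_A\otimes K_{BC}-K_{AC}\otimes\I_B\in\mc{Q}(\H_A:\H_B:\H_C)^*,
\end{equation}
turns into exactly $K\otimes\I_3+\I_1\otimes K-K\otimes\I_2\in\mc{Q}(\H:\H:\H)^*$, which is the assumption in the statement; here the subscripts $1,2,3$ label the three copies of $\H$ in the roles of $\H_A,\H_B,\H_C$, so $K\otimes\I_3$ is $K$ on copies $1,2$, $\I_1\otimes K$ is $K$ on copies $2,3$, and $K\otimes\I_2$ is $K$ on copies $1,3$.

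With the hypothesis in force, \cref{thm:JPtrineq} yields $\mc{K}_{AB}+\mc{K}_{BC}\geq\mc{K}_{AC}$ for all input states. Since all three cost matrices are the same $K$, the optimal-cost functions $\mc{K}_{AB},\mc{K}_{BC},\mc{K}_{AC}$ are all the single function $\mc{K}$ attached to $K$; and, as the proof of \cref{thm:JPtrineq} makes explicit, $\mc{K}_{AB}$ is evaluated on the state $\rho$ of the first copy and its image $\sigma$ on the second, $\mc{K}_{BC}$ on $\sigma$ and its image $\tau$ on the third, and $\mc{K}_{AC}$ on $\rho$ and $\tau$. Hence the inequality reads $\mc{K}(\rho,\sigma)+\mc{K}(\sigma,\tau)\geq\mc{K}(\rho,\tau)$ for all $\rho,\sigma,\tau\in\states{\H}$, which is precisely the triangle inequality $\mc{K}(\rho,\tau)\leq\mc{K}(\rho,\sigma)+\mc{K}(\sigma,\tau)$.

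Because this is a verbatim instance of \cref{thm:JPtrineq}, I do not expect any genuine obstacle: no new positivity or duality argument is needed. The only point deserving a moment of care is the tensor-slot bookkeeping, namely matching the term $K_{AC}\otimes\I_B$ of \cref{thm:JPtrineq} with the middle-identity term $K\otimes\I_2$ of the present statement (and not with $K\otimes\I_3$), which is why the identity is inserted on the second copy of $\H$ in the displayed condition.
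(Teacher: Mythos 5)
Your proposal is correct and is exactly how the paper intends the corollary: it is the immediate specialisation of \cref{thm:JPtrineq} to $\H_A=\H_B=\H_C=\H$ with $K_{AB}=K_{BC}=K_{AC}=K$, and your tensor-slot bookkeeping (in particular matching $K_{AC}\otimes\I_B$ with $K\otimes\I_2$) is the only point of care, which you handle correctly.
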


\subsection{Structural properties of the transport cost}

We would like to prove subadditivity of the cost for both inputs, unfortunately we can only see it for the second input. The following proposition shows this result as well as two consequences of the triangle inequality.

\begin{prop}\label{prop:subadditivity}
Let $p_x$ be a probability distribution. The optimal quantum cost fulfils the following:

\begin{enumerate}[i)]

    \item Convexity in the second argument: $\mc{K}\swich{\rho,\sum_x p_x\sigma_x} \leq\sum_x p_x\mc{K}\swich{\rho,\sigma_x}.$ \\
    
    Moreover, if  the triangle inequality is fulfilled:
    \item $\sum_x p_x\mc{K}(\rho_x,\sigma) \leq \mc{K}\swich{\sum_xp_x\rho_x,\sigma}+\sum_xp_x\mc{K}\swich{\rho_x,\sum_{x'}p_{x'}\rho_{x'}}$.\\
    
    \item  $\mc{K}\swich{\sum_x p_x\rho_x,\sigma}\leq\sum_x p_x \mc{K}\swich{\rho_x,\sigma} +\sum_{x'}p_{x'}\mc{K}(\sum_{x}p_{x}\rho_{x},\rho_{x'}).$
\end{enumerate}
\end{prop}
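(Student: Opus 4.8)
The plan is to prove i) by an explicit averaging construction on transport plans, and to obtain ii) and iii) as direct consequences of the assumed triangle inequality together with $\sum_x p_x=1$. For i), set $\bar\sigma=\sum_x p_x\sigma_x$. Each $\mc{Q}(\rho,\sigma_x)$ is non-empty (it contains $\rho\otimes\sigma_x$, the stote of the replacement channel, whose \jamiol matrix is $\I\otimes\sigma_x$) and the minimum in \eqref{eq:qTransportDef} is attained since the feasible set of \jamiol matrices is compact; so for each $x$ we may choose an optimal $J_x\in\mc{J}(\H_A\rightarrow\H_B)$ with $\Tr_A\cwich{\rho J_x}=\sigma_x$ and $\Tr\cwich{K(\rho\star J_x)}=\mc{K}(\rho,\sigma_x)$. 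Put $J=\sum_x p_x J_x$. The conditions defining a \jamiol matrix are convex --- $\Tr_B J=\sum_x p_x\I=\I$ and $J^{T_A}=\sum_x p_x J_x^{T_A}\geq0$ --- so $J\in\mc{J}(\H_A\rightarrow\H_B)$; moreover $\Tr_A\cwich{\rho J}=\sum_x p_x\sigma_x=\bar\sigma$, and since $X\mapsto\rho\star X$ is linear, $\rho\star J=\sum_x p_x(\rho\star J_x)\in\mc{Q}(\rho,\bar\sigma)$. Hence
\[
  \mc{K}(\rho,\bar\sigma)\leq\Tr\cwich{K(\rho\star J)}=\sum_x p_x\Tr\cwich{K(\rho\star J_x)}=\sum_x p_x\mc{K}(\rho,\sigma_x),
\]
which is i).

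For ii) and iii), write $\bar\rho=\sum_x p_x\rho_x$. Applying the triangle inequality with $\bar\rho$ as the intermediate state gives $\mc{K}(\rho_x,\sigma)\leq\mc{K}(\rho_x,\bar\rho)+\mc{K}(\bar\rho,\sigma)$ for every $x$; multiplying by $p_x$, summing, and using $\sum_x p_x=1$ on the last term yields ii). Applying the triangle inequality instead with $\rho_{x'}$ as the intermediate state gives $\mc{K}(\bar\rho,\sigma)\leq\mc{K}(\bar\rho,\rho_{x'})+\mc{K}(\rho_{x'},\sigma)$ for every $x'$; multiplying by $p_{x'}$, summing, and using $\sum_{x'}p_{x'}=1$ on the left-hand side yields iii) after renaming the index.

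The only genuine content is in part i): one must observe that a convex combination of \jamiol matrices is again a \jamiol matrix and that the stote map $\rho\star(\cdot)$ is linear, so that averaging the optimal plans simultaneously averages their outputs to $\bar\sigma$ and their costs to $\sum_x p_x\mc{K}(\rho,\sigma_x)$. There is no analogous subadditivity for the \emph{first} argument because mixing \emph{inputs} is not implemented by mixing \jamiol matrices; this is precisely why ii) and iii) must be routed through the triangle inequality and require it as a hypothesis, and those two parts then contain nothing beyond the one-line manipulation above.
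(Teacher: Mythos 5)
Your proof is correct and follows essentially the same route as the paper: part i) via averaging the optimal \jamiol matrices (using convexity of the \jamiol constraints and bilinearity of the Jordan product), and parts ii) and iii) by inserting the average state, respectively the $\rho_{x'}$, as intermediate points in the triangle inequality. The extra care you take about non-emptiness and attainment of the minimum, and your closing remark on why first-argument subadditivity fails, only add detail the paper leaves implicit.
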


\begin{proof}
Consider optimal {\jamiol} matrices $J_o$ for $\swich{\rho,\sum_x p_x\sigma_x}$ and $J_x$ for $\swich{\rho,\sigma_x}$. Note that $J_{\Sigma}=\sum_x p_xJ_x$ is a {\jamiol} matrix with an associated channel that fulfils $\cE_{J_\Sigma}(\rho)=\sum_x p_x \sigma_x.$ Thus,
\begin{align}\begin{split}
    \mc{K}\swich{\rho,\sum_x p_x\sigma_x} &=\Tr\cwich{K\swich{\rho\star J_o}} \leq\Tr\cwich{K\swich{\rho\star J_{\Sigma}}} =\Tr\cwich{K\swich{\rho\star \swich{\sum_x p_xJ_x}}} \\
    &=\sum_x p_x\Tr\cwich{K\swich{\rho\star J_x}} 
    =\sum_x p_x \mc{K}(\rho,\sigma_x),
\end{split}\end{align} 
where we used the bilinearity of the Jordan product \cite{horsman17} and the linearity of the trace.\\

The second property is a direct consequence of the triangle inequality: 
\begin{align}\begin{split}
    \sum_x p_x\mc{K}\swich{\rho_x,\sigma} &\leq\sum_xp_x\swich{\mc{K}\swich{\rho_x,\sum_{x'}p_{x'}\rho_{x'}}+\mc{K}\swich{\sum_{x'}p_{x'}\rho_{x'},\sigma}} \\ &=\mc{K}\swich{\sum_xp_x\rho_x,\sigma}+\sum_xp_x\mc{K}\swich{\rho_x,\sum_{x'}p_{x'}\rho_{x'}}.
\end{split}\end{align}

Similarly, we can show the third property. Let $\rho=\sum_xp_x\rho_x$. Then, $\forall \;x$ \begin{align}\begin{split}
    p_x&\mc{K}(\rho,\sigma)\leq p_x\mc{K}(\rho,\rho_x)+p_x\mc{K}(\rho_x,\sigma) \\
    \Rightarrow  \quad \sum_xp_x&\mc{K}(\rho,\sigma)\leq \sum_xp_x\mc{K}(\rho,\rho_x)+\sum_xp_x\mc{K}(\rho_x,\sigma)\\
    \Rightarrow  \quad &\mc{K}(\rho,\sigma)\leq \sum_xp_x\mc{K}(\rho,\rho_x)+\sum_xp_x\mc{K}(\rho_x,\sigma),
\end{split}\end{align} 
where we first used the triangle inequality and then we added all the inequalities together. 
\end{proof}

\begin{remark}
A similar proof does not work for convexity on the first input and joint subadditivity because of the following. We will use subadditivity on the first input as an example. Let $J_o$ be the optimal {\jamiol} matrix for $\swich{\sum_xp_x\rho_x,\sigma}$. Starting on the left hand side we obtain 
\begin{align}
  \mc{K}{\swich{\sum_xp_x\rho_x,\sigma}}
     =\Tr\cwich{K\swich{\sum_xp_x\rho_x\star J_o}}
     =\sum_xp_x\Tr\cwich{K\swich{\rho_x\star J_o}}.
\end{align} 
At this point we can observe that the channel associated to $J_o$ does not necessarily have output $\sigma$ for each $\rho_x$ (unless $\sigma$ is pure) and we can not upper bound the associated cost with anything defined with the optimal channels for the pairs $(\rho_x,\sigma)$. In contrast, in the proof of \cref{prop:subadditivity} it was possible to define the joint channel $J_\Sigma$ because we could send $\rho$ to each element $\sigma_x$ of the ensemble $\{(p_x,\sigma_x)\}$ and that would in total define a channel that sends $\rho$ to $\sigma$. 

 Moreover, we can show joint convexity is false in general. Let $\sigma_x=\cE_{J_o}(\rho_x)$ and observe that letting $\sigma=\sum_xp_x\sigma_x$ in the previous equation we obtain:
\begin{equation}
    \mc{K}{\swich{\sum_xp_x\rho_x,\sum_xp_x\sigma_x}}\geq \sum_xp_x\mc{K}(\rho_x,\sigma_x).
\end{equation} 
This joint concavity is not general in the sense that we have the relation only for $\sigma_x=\cE_{J_o}(\rho_x)$, where the ensemble $\{(p_x,\rho_x)\}$ can be arbitrarily chosen, but the channel must be the one associated to the optimal {\jamiol} matrix for $\mc{K}(\rho,\sigma)$.  
\end{remark} 

For the next proposition, we consider bipartite initial and final spaces, and for simplicity we assume them to be the same: $\H_A=\H_B=\H_1\otimes\H_2$. In this setting we want to study how the cost for product states relates to related costs in $\H_1$ and $\H_2$ for different configurations of the joint cost matrix. 

\begin{prop}
\label{prop:tensorInequalities}
Let $\H_i$ be Hilbert spaces and $\rho_i,\sigma_i\in\S(\H_i)$ with $i=1,2$. Let $K_{12}$ be a cost matrix in $\B((\H_1\otimes\H_2)^{\otimes 2})$, and $K_i$ be cost matrices in $\B(\H^{\otimes 2}_i)$. Then the optimal transport cost of $\mc{K}\swich{\rho_1\otimes\rho_2,\sigma_1\otimes\sigma_2}$ fulfils the following: \begin{enumerate}[i)]
    \item If $K_{12}=K_1\otimes K_2$, then $\mc{K}(\rho_1\otimes\rho_2,\sigma_1\otimes\sigma_2)\leq \mc{K}(\rho_1,\sigma_1)\mc{K}(\rho_2,\sigma_2)$.
    \item If $K_{12}=K_1\otimes\I_2+\I_1\otimes K_2$, then $\mc{K}(\rho_1\otimes\rho_2,\sigma_1\otimes\sigma_2)\leq \mc{K}(\rho_1,\sigma_1)+\mc{K}(\rho_2,\sigma_2).$
\end{enumerate} 
\end{prop}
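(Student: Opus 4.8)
The plan is to prove both parts by producing an explicit — generally suboptimal — transport plan on the composite system, namely the tensor product of the optimal plans for the two factors, and upper-bounding $\mc{K}(\rho_1\otimes\rho_2,\sigma_1\otimes\sigma_2)$ by the cost it incurs. Concretely, pick channels $\cE_1,\cE_2$ achieving $\mc{K}(\rho_1,\sigma_1)$ and $\mc{K}(\rho_2,\sigma_2)$, with \jamiol matrices $J_1,J_2$ and states over time $Q_i=\rho_i\star J_i$, so that $\Tr[K_iQ_i]=\mc{K}(\rho_i,\sigma_i)$. The channel $\cE_1\otimes\cE_2$ maps $\rho_1\otimes\rho_2$ to $\sigma_1\otimes\sigma_2$, and, up to the evident reordering of tensor legs, its \jamiol matrix is $J_1\otimes J_2$; hence $Q\eqdef(\rho_1\otimes\rho_2)\star(J_1\otimes J_2)\in\mc{Q}(\rho_1\otimes\rho_2,\sigma_1\otimes\sigma_2)$ and $\mc{K}(\rho_1\otimes\rho_2,\sigma_1\otimes\sigma_2)\leq\Tr[K_{12}Q]$. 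Everything then reduces to evaluating $\Tr[K_{12}Q]$.

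Writing $R=\rho_1\otimes\I$ and $S=\rho_2\otimes\I$ on the respective factors, bilinearity of the Jordan product gives $Q=\fmig(RJ_1\otimes SJ_2+J_1R\otimes J_2S)$. For part \textit{ii)}, $\Tr[(K_1\otimes\I)Q]=\fmig\Tr[K_1RJ_1]\Tr[SJ_2]+\fmig\Tr[K_1J_1R]\Tr[J_2S]$, and here $\Tr[SJ_2]=\Tr[J_2S]=\Tr[(\rho_2\otimes\I)J_2]=\Tr[\cE_2(\rho_2)]=1$, so $\Tr[(K_1\otimes\I)Q]=\Tr[K_1(R\star J_1)]=\Tr[K_1Q_1]$, and symmetrically $\Tr[(\I\otimes K_2)Q]=\Tr[K_2Q_2]$; adding these gives $\Tr[K_{12}Q]=\mc{K}(\rho_1,\sigma_1)+\mc{K}(\rho_2,\sigma_2)$, which proves \textit{ii)}. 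Note this step is an \emph{equality}, even though $Q\neq Q_1\otimes Q_2$ in general. For part \textit{i)} the same expansion gives $\Tr[(K_1\otimes K_2)Q]=\fmig a_1a_2+\fmig\overline{a_1}\,\overline{a_2}=\operatorname{Re}(a_1a_2)$, where $a_i\eqdef\Tr[K_i(\rho_i\otimes\I)J_i]$, and where $\operatorname{Re}(a_i)=\Tr[K_iQ_i]=\mc{K}(\rho_i,\sigma_i)$ by Hermiticity of $K_i,\rho_i,J_i$.

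The main obstacle is therefore part \textit{i)}: because the Jordan product of tensor factors is not the tensor product of Jordan products, the product-channel plan costs $\operatorname{Re}(a_1a_2)=\mc{K}(\rho_1,\sigma_1)\mc{K}(\rho_2,\sigma_2)-\operatorname{Im}(a_1)\operatorname{Im}(a_2)$ rather than the bare product. Since $\operatorname{Im}(a_i)=\tfrac{1}{2\ii}\Tr\bigl[[K_i,\rho_i\otimes\I]J_i\bigr]$, the claimed bound follows immediately whenever one of the cost matrices commutes with the corresponding $\rho_i\otimes\I$; to handle the general case one would instead want to take as transport plan the \emph{product state over time} $Q_1\otimes Q_2$, for which $\Tr\bigl[(K_1\otimes K_2)(Q_1\otimes Q_2)\bigr]=\Tr[K_1Q_1]\Tr[K_2Q_2]$ is exactly the product. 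The point that must be checked is that $Q_1\otimes Q_2$ is genuinely a state over time in $\mc{Q}(\rho_1\otimes\rho_2,\sigma_1\otimes\sigma_2)$: applying \cref{thm:JordanInverse} to $\omega=Q_1\otimes Q_2$ (whose first marginal is $\rho_1\otimes\rho_2\geq0$) recovers a completed \jamiol matrix $J'$, and one verifies directly that $\Tr_B J'=\I$, so the remaining step is to show $(J')^{T_A}\geq0$, i.e.\ that the completed map is completely positive — equivalently, positivity of a Schur-type multiplier built from the spectra of $\rho_1$ and $\rho_2$. I expect this CP verification to be the crux of \textit{i)}, and it is plausibly the place where either a more careful argument or a mild extra hypothesis on the cost matrices is needed.
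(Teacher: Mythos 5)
Your part~\textit{ii)} is correct and is essentially the paper's own proof: both take the product $J_1^o\otimes J_2^o$ of optimal plans, note it is admissible for $(\rho_1\otimes\rho_2,\sigma_1\otimes\sigma_2)$, and evaluate $K_1\otimes\I_2+\I_1\otimes K_2$ on the resulting state over time using $\Tr\cwich{(\rho_2\otimes\I)J_2^o}=\Tr[\sigma_2]=1$; your extra remark that this evaluation is an exact equality is consistent with that computation.

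Part~\textit{i)}, however, you do not prove, and that is a genuine gap in the proposal --- though one you flag yourself, and it sits exactly at the step the paper takes for granted. Writing $a_i=\Tr\cwich{K_i(\rho_i\otimes\I)J_i^o}$, your algebra is right: $(\rho_1\otimes\rho_2)\star(J_1^o\otimes J_2^o)=(\rho_1\star J_1^o)\otimes(\rho_2\star J_2^o)+\tfrac14[\rho_1\otimes\I,J_1^o]\otimes[\rho_2\otimes\I,J_2^o]$, so the product-plan cost is $\operatorname{Re}(a_1)\operatorname{Re}(a_2)-\operatorname{Im}(a_1)\operatorname{Im}(a_2)$, and it certifies the claimed bound only when $\operatorname{Im}(a_1)\operatorname{Im}(a_2)\geq0$ (e.g.\ under your sufficient condition $[K_i,\rho_i\otimes\I]=0$ for some $i$). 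The paper's proof of \textit{i)} consists precisely of asserting, under the heading ``objects in different subsystems commute'', the identity $\Tr\cwich{K_1(\rho_1\star J_1^o)}\Tr\cwich{K_2(\rho_2\star J_2^o)}=\Tr\cwich{(K_1\otimes K_2)\swich{(\rho_1\otimes\rho_2)\star(J_1^o\otimes J_2^o)}}$ and then lower-bounding the right-hand side by the optimal product cost; your cross term shows that this identity is not generally valid, so you have isolated a real lacuna rather than overlooked a trick. Your proposed repair, taking $Q_1\otimes Q_2$ as the coupling, cannot close the gap in general either: the partial-trace condition on the completed $J'$ of \cref{thm:JordanInverse} does hold, as you say, but $(J')^{T_A}\geq0$ genuinely fails. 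For faithful inputs $J'$ is the entrywise (Schur) multiplication of $J_1^o\otimes J_2^o$ by the matrix with entries $\frac{(p_i+p_j)(q_k+q_l)}{2(p_iq_k+p_jq_l)}$; taking $J_1^o=J_2^o=\S$ (identity channels) and qubit states with spectra $p=q=(9/10,1/10)$, the operator $(J')^{T_A}$ is, on its support (the span of the vectors $\ket{i}\ket{k}\ket{i}\ket{k}$), exactly that multiplier matrix, which has unit diagonal but the off-diagonal entry $\frac{1}{2(p_1q_2+p_2q_1)}=\tfrac{25}{9}>1$, hence is not positive semidefinite; so $Q_1\otimes Q_2\notin\mc{Q}(\rho_1\otimes\rho_2,\sigma_1\otimes\sigma_2)$ there. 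In short: \textit{ii)} is done and matches the paper; \textit{i)} remains unproven in your write-up, and establishing it in full generality requires either the extra commutation/reality hypothesis you identify or an argument different from both your fallback and the paper's literal chain of equalities.
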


\begin{proof}
Objects in different subsystems commute and $J_1\otimes J_2$ is admissible for $\swich{\rho_1\otimes\rho_2,\sigma_1\otimes\sigma_2}$ if $J_i$ is admissible for $\rho_i,\sigma_i$, $i=1,2$. To show the first inequality, consider two optimal {\jamiol} matrices $J_1^o,J_2^o$ that optimise the costs between $\rho_i,\sigma_i$ with cost matrix $K_i$, $i=1,2$. Then, 
\begin{align}\begin{split}
    \mc{K}(\rho_1,\sigma_1)&\cdot \mc{K}(\rho_2,\sigma_2)=\Tr\cwich{K_1\swich{\rho_1\star J_1^o}}\cdot\Tr\cwich{K_2\swich{\rho_2\star J_2^o}}\\
    &=\Tr\cwich{\swich{K_1\swich{\rho_1\star J_1^o}}\otimes\swich{K_2\swich{\rho_2\star J_2^o}}} =\Tr\cwich{\swich{K_1\otimes K_2}\swich{\swich{\rho_1\otimes\rho_2}\star \swich{J_1^o\otimes J_2^o}}}\\
    &\geq \mc{K}(\rho_1\otimes\rho_2,\sigma_1\otimes\sigma_2).
\end{split}\end{align}

For the second inequality, consider the same {\jamiol} matrices as before. Then 
\begin{align}\begin{split}
  \mc{K}(\rho_1\otimes\rho_2,\sigma_1\otimes\sigma_2) 
    &=    \Tr\cwich{K_{12}\swich{\rho_1\otimes\rho_2}\star J_{12}^o} \\
    &\leq \Tr\cwich{\swich{K_1\otimes\I_2+\I_1\otimes K_2}\swich{\rho_1\otimes\rho_2}\star \swich{J_1^o\otimes J_2^o}} \\
    &=     \Tr\cwich{K_1 \rho_1\star J_1^o}\Tr\cwich{\rho_2\star J_2^o}+\Tr\cwich{\rho_1\star J_1^o}\Tr\cwich{K_2 \rho_2\star J_2^o} \\
    &=     \mc{K}(\rho_1,\sigma_1)+\mc{K}(\rho_2,\sigma_2).
\end{split}\end{align}
\end{proof}


\section{Unitary invariant cost} 
\label{sec:UIcost}
Let $\H$ be a finite dimensional Hilbert space of dimension $d$. Here we consider 
cost matrices which yield unitary invariant quantum optimal costs, that is 
\begin{align}
  \mc{K}(\rho,\sigma)=\mc{K}(U\rho U^*,U\sigma U^*) \quad \forall U\in U(d), \,\rho,\sigma\in\S(\H).
\end{align} 
This follows automatically if the cost matrix is invariant under simultaneous 
unitary transformations of both systems:
\begin{equation}
  K = (U\otimes U)K(U^*\otimes U^*) \quad \forall U\in U(d). 
\end{equation}

In this section, we consider the case  where the cost is unitarily invariant (UI), meaning it remains unchanged under a common unitary transformation of both the input and output states:
\begin{align}
  \kappa(\rho,\cE)=\kappa(U\rho U^*,\mathcal{U}\circ\cE\circ \mathcal{U}^{-1}) \quad \forall U\in U(d), \,\rho,\sigma\in\S(\H).
  \label{eq:def-UI}
\end{align} 
where we have defined the unitary channel $\mathcal{U}(X)=U X U^*$, and
we denote by $U(d)$ the group of unitary operators acting on the Hilbert space, of dimension $d$. We show the relation between the maps and states in the following commutative diagram:

\[ \begin{tikzcd}[
  column sep=5em,
  every arrow/.style={draw,mapsto}
]
\rho \arrow{r}{\cE} \arrow[swap]{d}{\mc{U}} & \sigma \arrow{d}{\mc{U}} \\%
U\rho U^* \arrow{r}{\mc{U}\circ\cE\circ\mc{U}^{-1}}& U\sigma U^*
\end{tikzcd}
\]

Throughout this section, we show that unitary invariance fully determines the cost matrix up to a single parameter. Later, we compute the cost for several relevant quantum channels, and derive a simplified optimisation problem to compute the cost $ \mc{K}(\rho,\sigma)$ valid for arbitrary states within the UI setting. In \cref{prop:commCost}, we obtain analytic expressions for the optimal cost in the special case of commuting states. \cref{sec:limitCommUI} explores the striking modification of the cost induced by embedding the problem into a higher-dimensional Hilbert space, and study its asymptotic behaviour. Finally, in \cref{example:UIcost}, we present an example that illustrates several noteworthy features. 

The following proposition shows that there is a single cost matrix (up to positive scaling) with the UI property:

\begin{prop}\label{prop:UIcost}
The only cost matrices that belong to the dual to the cone of states over time, 
assign cost $0$ to the identity channel according to \cref{prop:zerocosts} 
and commute with unitaries of the form $U\otimes U$ are positive multiples of 
\begin{equation}
  K_0=d\I-\S.
\end{equation}
\end{prop}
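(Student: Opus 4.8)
The plan is to cut $K$ down to a two-parameter family by representation theory, and then pin down the two parameters with the two remaining requirements.

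\emph{Symmetry reduction and the identity condition.} A Hermitian $K$ with $K=(U\otimes U)K(U^*\otimes U^*)$ for all $U\in U(d)$ lies in the commutant of the representation $U\mapsto U\otimes U$ on $\H\otimes\H$. By Schur--Weyl duality (equivalently, the decomposition of $\H\otimes\H$ into the symmetric and antisymmetric subspaces) this commutant is spanned by $\I$ and the swap operator $\S$, so $K=a\I+b\S$ with $a,b\in\R$; for $d=1$ one has $\S=\I$ and $K_0=0$, a degenerate case, so assume $d\ge2$. By \cref{prop:zerocosts} the zero-cost-for-the-identity requirement reads $\Tr_B[\S\star K]=0$. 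Using $\S\star\I=\S$, $\S\star\S=\I$, $\Tr_B\S=\I_A$ and $\Tr_B\I_{AB}=d\,\I_A$, one gets $\S\star K=a\S+b\I$ and $\Tr_B[\S\star K]=(a+bd)\I_A$, so the condition forces $a=-bd$, i.e.\ $K=-b(d\I-\S)=-b\,K_0$.

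\emph{Reducing positivity to a scalar inequality.} It remains to impose $K\in\hat{\mc Q}(\H:\H)^*$, i.e.\ $\Tr[KQ]\ge0$ for every state over time $Q=\rho\star J$. Since $\Tr Q=\Tr\rho=1$ (it is a marginal), $\Tr[KQ]=-b\,(d-\Tr[\S Q])$, so everything hinges on the scalar estimate $\Tr[\S Q]\le d$ for all $Q\in\mc Q(\H:\H)$, together with the fact that it is strict for some $Q$ (take a replacement channel, $Q=\rho\otimes\sigma$, where $\Tr[\S Q]=\Tr[\rho\sigma]\le1<d$). The strict case forces $-b\ge0$, while the general inequality $\le$ shows $K_0$ itself lies in $\hat{\mc Q}^*$; hence the admissible cost matrices are exactly the nonnegative multiples of $K_0$ (the positive ones being the nontrivial costs).

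\emph{The estimate $\Tr[\S Q]\le d$ --- the main obstacle.} Write $J=C_\cE^{T_A}$, with $\{E_\mu\}$ Kraus operators of the associated channel, $\sum_\mu E_\mu^*E_\mu=\I$. Expanding $Q=\rho\star J=\fmig\bigl((\rho\otimes\I)J+J(\rho\otimes\I)\bigr)$ in a basis (or by sandwiching with $\bra{\Phi_+}$ and using $(\rho\otimes\I)\ket{\Phi_+}=(\I\otimes\rho^T)\ket{\Phi_+}$) a short computation yields
\begin{equation}
  \Tr[\S Q]=\mathrm{Re}\sum_\mu \Tr[E_\mu]\,\overline{\Tr[E_\mu\rho]},
\end{equation}
which checks out for $\cE=\id$ (the sum is $\Tr[\I]\Tr[\rho]=d$) and for the replacement channel (the sum is $\Tr[\rho\sigma]$). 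Cauchy--Schwarz over $\mu$ gives $|\Tr[\S Q]|^2\le\bigl(\sum_\mu|\Tr E_\mu|^2\bigr)\bigl(\sum_\mu|\Tr[E_\mu\rho]|^2\bigr)$, and the Hilbert--Schmidt Cauchy--Schwarz inequality bounds the two factors: from $|\Tr E_\mu|\le\sqrt d\,\lVert E_\mu\rVert_{\mathrm{HS}}$ one gets $\sum_\mu|\Tr E_\mu|^2\le d\sum_\mu\Tr[E_\mu^*E_\mu]=d^2$, and writing $\Tr[E_\mu\rho]=\Tr[\sqrt\rho\,E_\mu\sqrt\rho]$ one gets $\sum_\mu|\Tr[E_\mu\rho]|^2\le\Tr[\rho]\cdot\Tr[\sqrt\rho(\sum_\mu E_\mu^*E_\mu)\sqrt\rho]=\Tr[\rho]^2=1$. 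Multiplying, $|\Tr[\S Q]|\le d$. I expect this inequality to be the only real work; the rest is bookkeeping.

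\emph{Remark on an alternative.} Since every \jamiol matrix is $d$ times a state over time ($J=d\,(\fmig\I\star J)$), we have $\hat{\mc Q}(\H:\H)^*\subseteq\mathcal{J}(\H\rightarrow\H)^*$, and feeding the invariant form $K=a\I+b\S$ into the description of $\mathcal{J}^*\cap\{\Tr_B[\S\star C]=0\}$ from \cref{thm:JdualIzero} (the constraints $\omega\ge0$, $\omega\perp\ketbra{\Phi_+}$) again forces $K=|b|\,K_0$. This route only certifies $K_0\in\mathcal{J}^*$, however; certifying $K_0\in\hat{\mc Q}^*$ still reduces to $\Tr[\S Q]\le d$, so that estimate cannot be avoided.
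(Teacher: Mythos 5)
Your proof is correct, and its skeleton is the same as the paper's: reduce $K$ to the commutant of $\{U\otimes U\}$, i.e.\ $\mathrm{span}\{\I,\S\}$, use the zero-cost-for-the-identity condition of \cref{prop:zerocosts} to fix the ratio of the two coefficients, and then settle the sign by showing $\Tr\cwich{\S Q}\leq d$ for every state over time $Q$ (with strict inequality at, e.g., a replacement channel). The one place where you genuinely diverge is the proof of that key estimate, which you rightly single out as the only real work: the paper disposes of it with an operator-norm bound, $\Tr\cwich{\rho(\S\star J)}\leq\norm{J}\,\norm{\S}\leq d$, which is terse and leaves the bound $\norm{J}\leq d$ for the (non-psd) \jamiol matrix implicit, whereas you first derive the Kraus identity $\Tr\cwich{\S Q}=\mathfrak{Re}\sum_\mu\Tr\cwich{E_\mu}\overline{\Tr\cwich{E_\mu\rho}}$ — which is exactly \cref{eq:UICostForms3} of \cref{remark:UICostForms}, proved later in the paper — and then bound it by two applications of Cauchy--Schwarz, giving $\sum_\mu\abs{\Tr E_\mu}^2\leq d^2$ and $\sum_\mu\abs{\Tr\cwich{E_\mu\rho}}^2\leq1$. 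Your route is more self-contained and fully rigorous at this step, at the price of importing (and re-deriving) the Kraus form of the cost; the paper's route is shorter but relies on a norm estimate it does not justify. Everything else (the parametrisation $a\I+b\S$ versus the paper's $a(b\I-\S)$, the observation that $K_0$ itself lies in $\hat{\mc{Q}}(\H:\H)^*$, the aside on $\hat{\mc{Q}}^*\subseteq\mc{J}^*$) is equivalent bookkeeping, and your parenthetical note that the argument strictly yields the \emph{nonnegative} multiples of $K_0$ applies equally to the paper's proof.
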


\begin{proof}
First, let us prove that if $J$ is the {\jamiol} matrix whose channel takes $\rho$ to $\sigma=\cE(\rho)$,
then the {\jamiol} matrix 
corresponding to the associated  channel
$\cE_U'=\mathcal{U}\circ\cE\circ \mathcal{U}^{-1}$ which maps $U\rho U^*$ to $U\sigma U^*$, is given
 by $J_U'=(U\otimes U)J (U^*\otimes U^*)$. 
 
For this purpose we start from the definition in \cref{eq:def-jamiol} 
\begin{align}
\begin{split}
J_U'&=\swich{\id\otimes\cE_U'}\swich{\S}=(\id\otimes\mathcal{U})\circ(\id\otimes\cE)\circ(\id\otimes\mathcal{U}^{-1})(\S) \\
    &= (\I\otimes U)\,(\id\otimes\cE)\!\left((\I\otimes U^*)\S(\I\otimes U)\right)(\I\otimes U)^*
\end{split}
\end{align}
Recall that the swap operator satisfies $\S(X\otimes Y)\S = Y\otimes X$, i.e.
    $(I\otimes V)\S = \S(V\otimes I)$. Using this relation twice we can write
  $(\I\otimes U^*)\S(\I\otimes U)=\S(U^*\otimes\I)(\I\otimes U)=\S(\I\otimes U)(U^*\otimes\I)=(U\otimes\I)\S(U^*\otimes\I)$, and thereby  pull the unitaries 
    out of the action of the map $\cE$, i.e.  
\begin{align}
\begin{split}
    J_{U}'
    &= (\I\otimes U)(U\otimes \I)\,(\id\otimes\cE)(\S)\,(U^{*}\otimes \I)(\I\otimes U^*) \\
    &= (U\otimes U)\,J\,(U^*\otimes U^*).
\end{split}
\end{align}
Now we can impose the UI of the cost function, $\kappa(\rho,\cE)=\kappa(U\rho U^*,\cE_U')$,
to obtain 
\begin{align}
\begin{split}
\Tr\cwich{K_0\swich{\rho\star J}}&=\Tr\cwich{K_0\swich{\swich{U\rho U^*}\star\swich{(U\otimes U)J (U^*\otimes U^*)}}}\\
&=\Tr\cwich{(U\otimes U)K_0(U^*\otimes U^*)(\rho\star J)}.
  \label{eq:equalU}
  \end{split}
\end{align} 
We require this equality to hold for all choices of input states and channels, i.e. for all elements of the set of states over time $Q=\rho\star J$. Since this set is spanning
 (\cref{prop:QPointedSpanning}), the scalar equality \cref{{eq:equalU}} translates into the operator equality $K_0=(U\otimes U)K_0(U^*\otimes U^*)$ or equivalently $\cwich{U\otimes U,K_0}=0$.
This must whole for all unitaries.

From the representation theory of $GL(d)$, and because the set of unitaries generates the whole of $GL(d)$ as an algebra, whose operations leave the commutator invariant, the only elements with this property are the symmetric and antisymmetric projectors. The vector space generated by these two projectors also has $\{\I,\S\}$ as a basis \cite{werner89}. Therefore $K_0=a(b\I-\S)$ with real $a$ and $b$, to preserve Hermiticity.

We can now impose the second condition, \cref{prop:zerocosts}: 
\begin{align}
    0=\Tr_B\cwich{\S\star K_0}=a\Tr_B\cwich{\S\star (b\I-\S)}=a\Tr_B\cwich{b\S-\I}=a(b-d)\I.
\end{align} 
We find that $b=d$. Finally, we see that the positivity of the cost requires $a$ to be positive: \begin{align}\begin{split}
    \Tr\cwich{K_0(\rho\star J)}&=a\Tr\cwich{(d\I-\S)(\rho\star J)}\\
    &=ad\Tr\cwich{\rho\star J}-a\Tr\cwich{\S(\rho\star J)}\\
    &=a\cwich{d-\Tr\cwich{\rho(\S\star J)}}.
\end{split}\end{align} The last term can be upper bounded by switching to the the Choi matrix representation of the problem using \cref{lem:PTstrongDuality}. Recall that $\S^{T_A}=\ketbra{\Phi_+}$, where $\ket{\Phi_+}=\sum_i \ket{ii}$ is the unnormalised maximally entangled state and is such that $\Tr\cwich{\ketbra{\Phi_+}}=d$. Then \begin{align}
    \begin{split}
        \Tr\cwich{\rho(\S\star J)}=&\Tr\cwich{{\S^{T_A}}^{T_A}(\rho\star J)}=\Tr\cwich{\ketbra{\Phi_+}(\rho\star J)^{T_A}}=\expval{\rho^T\star J^{T_A}}{\Phi_+}\\
        =&\expval{\rho^T\star C}{\Phi_+}\leq d\Tr\cwich{\rho^T\star C}=d.
    \end{split}
\end{align} Thus we have that $a$ times a positive constant has to be positive, therefore $a$ is positive.
\end{proof}

We will also use the normalised version of $K_0$: $\tilde{K}_0=\I-\frac{1}{d}\S$, so that the maximum achievable cost is 1. With this specific cost matrix, there are some simple ways to write the cost associated to a channel depending on which representation of the channel we take. These forms will be useful later. 

\begin{remark} \label{remark:UICostForms}
Let $\rho$ be a state in a finite dimensional Hilbert space $\H$ and the cost matrix $\tilde{K}_0=\I-\frac{1}{d}\S$. Consider a channel $\cE$ with associated {\jamiol} and Choi matrices $J$, $C$, respectively and Kraus representation $\{\Kr_k\}$. Moreover, let $\rho=\sum_ip_i\ketbra{i}$ for some basis $\{\ket{i}\}$, $\ket{\rho}=\sum_i p_i\ket{ii}$ its vectorized form  and $\ket{\Phi_+}=\sum_i \ket{ii} $ is the unnormalised maximally entangled state. Then \begin{align}
\kappa(\rho,\mathcal{E})=\Tr\cwich{\tilde{K}_0(\rho\star J)}&=1-\frac{1}{d}\expval{\rho^T\star C}{\Phi_+}\label{eq:UICostForms1}\\
&=1-\frac{1}{d}\mathfrak{Re}(\bra{\rho} C \ket{\Phi_+})\label{eq:UICostForms1b}\\
&=1-\frac{1}{d}\sum_{ij}\frac{p_i+p_j}{2}\bra{i}\cE(\ketbra{i}{j})\ket{j}\label{eq:UICostForms2}\\
&=1-\frac{1}{d}\sum_{i}{p_i}\sum_{j}\mathfrak{Re}(\bra{i}\cE(\ketbra{i}{j})\ket{j})\label{eq:UICostForms_2b}\\
&=1-\frac{1}{d}\sum_k\mathfrak{Re}\swich{\Tr\cwich{\Kr_k^*}\Tr\cwich{\Kr_k \rho}}.\label{eq:UICostForms3}
\end{align}
\end{remark}

\begin{proof}
The term $1$ in every equation comes from the trace of the states over time with the identity, which is always one because the partial trace of a state over time is a state. The other part is associated to $\Tr\cwich{\S(\rho\star J)}$, and we will focus on that.

\cref{eq:UICostForms1} and \cref{eq:UICostForms1b} are a direct consequence of \cref{lem:PTstrongDuality}, recalling that $\S^{T_A}=\ketbra{\Phi_+}$.

\cref{eq:UICostForms2} comes from the definition of the {\jamiol} matrix, $J=\sum_{ij}\ketbra{i}{j}\otimes\cE(\ketbra{j}{i})$ and \cref{thm:JordanInverse}, which shows that in the product basis of the diagonal basis of $\rho$, $\rho\star J=\sum_{ij}\frac{p_i+p_j}{2}\ketbra{i}{j}\otimes\cE(\ketbra{j}{i})$. Then we define the swap operator in this product basis, $\S=\sum_{i'j'}\ketbra{i'}{j'}\otimes\ketbra{j'}{i'}$ and calculate $\Tr\cwich{S(\rho\star J)}$:

\begin{equation}\begin{split}
    \Tr\cwich{S(\rho\star J)}&=\Tr\cwich{\sum_{ij}\frac{p_i+p_j}{2}\sum_{i'j'}\swich{\ketbra{i'}{j'}\otimes\ketbra{j'}{i'}}\swich{\ketbra{i}{j}\otimes\cE(\ketbra{j}{i})}}\\
    &=\sum_{ij}\frac{p_i+p_j}{2}\sum_{i'j'}\delta_{ij'}\delta_{ji'}\bra{i'}\cE(\ketbra{j}{i})\ket{j'}=\sum_{ij}\frac{p_i+p_j}{2}\bra{i}\cE(\ketbra{i}{j})\ket{j}.
\end{split}\end{equation} 

Finally, for \cref{eq:UICostForms3} consider $J$ written as a function of the Kraus operators: \begin{equation}
J=(\id\otimes\cE)(\S)=\sum_k(\I\otimes \Kr_k)\S(\I\otimes \Kr_k^*)=\sum_k(\Kr_k^*\otimes \Kr_k)\S.
\end{equation} 
Then we add $\rho$ and the $\S$ from the cost:
\begin{equation}\begin{split}
\Tr&\cwich{\S(\rho\star J)}=\sum_k\Tr\cwich{\S\swich{\rho\star (\Kr_k^*\otimes \Kr_k)\S}}\\
&=\frac{1}{2}\sum_k\swich{\Tr\cwich{\S\swich{\rho(\Kr_k^*\otimes \Kr_k)\S}}+\Tr\cwich{\S\swich{(\Kr_k^*\otimes \Kr_k)\S\rho}}}\\
&= \frac{1}{2}\sum_k\swich{\Tr\cwich{\rho \Kr_k^*\otimes \Kr_k}+\Tr\cwich{\Kr_k\rho\otimes \Kr_k^*}}
=\frac{1}{2}\sum_k\swich{\Tr\cwich{\rho \Kr_k^*}\Tr\cwich{\Kr_k}+h.c.}\\
&=\sum_k\mathfrak{Re}\swich{\Tr\cwich{\Kr_k^*}\Tr\cwich{\rho \Kr_k}}.
\end{split}\end{equation}
This concludes the proof.
\end{proof}

\begin{remark}\label{remark:KrausCostInvariant}
It is well known that a single channel can have multiple Kraus representations \cite[Theorem 8.2]{nielsen10}. By \cref{eq:UICostForms3}, for every Kraus representation  of a channel 
\begin{equation}
    1-\frac{1}{d}\sum_k\mathfrak{Re}\swich{\Tr\cwich{\Kr_k^*}\Tr\cwich{\rho \Kr_k}}= \Tr\cwich{S(\rho\star J)},
\end{equation} and the {\jamiol} matrix is unique, therefore different Kraus representations of the same channel have the same associated cost. 

We can also show this explicitly. If two Kraus representations $\{E_i\}$, $\{F_j\}$ give rise to the same quantum channel, then there exists a unitary $U=(U_{ij})$ such that $E_i=\sum_j u_{ij}F_j$ \cite{nielsen10}. Then \begin{equation}
    \begin{split}
  \sum_i\mathfrak{Re}\swich{\Tr\cwich{E_i^*}\Tr\cwich{\rho E_i}}
  &=\sum_i\mathfrak{Re}\swich{\Tr\cwich{\sum_j \bar{U}_{ij}F_j^*}\Tr\cwich{\rho \sum_{j'} U_{ij'}F_{j'}}} \\
  &= \sum_{jj'}\swich{\sum_iU_{ij'}\bar{U}_{ij}}\mathfrak{Re}\swich{\Tr\cwich{ F_j^*}\Tr\cwich{\rho F_{j'}}} \\
  &=\sum_{jj'}\swich{U^T\swich{U^T}^*}_{j'j}\mathfrak{Re}\swich{\Tr\cwich{ F_j^*}\Tr\cwich{\rho F_{j'}}}\\
  &=\sum_{jj'}\delta_{jj'}\mathfrak{Re}\swich{\Tr\cwich{ F_j^*}\Tr\cwich{\rho F_{j'}}}=\sum_{j}\mathfrak{Re}\swich{\Tr\cwich{ F_j^*}\Tr\cwich{\rho F_{j}}}.
    \end{split}
\end{equation}
\end{remark}

In the following examples we calculate the cost associated to two important channels using the unitary invariant cost matrix. \cref{example:UIcost} in particular allows for some observations on the properties of the unitary invariant cost. 

\begin{example}[The replacement channel] 
Consider $\cE(x)=(\Tr x)\sigma$. Then the associated {\jamiol} matrix is $\I\otimes\sigma=\sigma_B$ and \begin{align}\begin{split}
            \kappa(\rho,\mathcal{E}_R)=\Tr\cwich{(d\I-\S)\rho_A\star \sigma_B}=&d-\Tr\cwich{\rho\sigma}\geq d-1,
        \end{split}\end{align} where the last  inequality can be seen, for example, using the trace and operator norms of $\rho$ and $\sigma$: $\Tr\cwich{\rho\sigma}\leq\norm{\rho}_{tr}\norm{\sigma}_{op}\leq 1.$\end{example}

         \begin{example}[Unitary channels]\label{example:UIcost} 
         Consider now the class of unitary channels: $\cE_U(x)=U x U^*$, where $U$ is a unitary operator. The associated {\jamiol} matrix is $J_{U}=U_B\S U_B^*$, where $U_B=\I\otimes U$. We can use \cref{eq:UICostForms3} in \cref{remark:UICostForms} to calculate the cost, since the Kraus operators associated to a Unitary channel are just $\{U\}$. This cost is then \begin{align}\begin{split}
            \kappa(\rho,\mathcal{E}_U)=&\Tr\cwich{(\rho_A\star (U_B\S U_B^*))(d\I-\S)}=d-\mathfrak{Re}\swich{\Tr\cwich{U^*}\Tr\cwich{\rho U}}
        \end{split}\end{align}

        We can find an expression for the optimal $U$ when $\rho$ and $\sigma$ are pure states. Due to the unitary invariance, w.l.o.g.~$\rho=\ketbra{0}$ and $\sigma=\ketbra{\varphi}$ where $\ket{\varphi}=\alpha\ket{0}+\sqrt{1-\alpha^2}\ket{1}$ with $\alpha\in\R_+$. The optimal unitary (in terms of maximising its trace) will leave $\left<\{\ket{0},\ket{1}\}\right>$ invariant and have 1 in the diagonal elements outside this subspace. Therefore the optimal (i.e. largest) value is \begin{align}
            \mathfrak{Re}\swich{\Tr\cwich{U^*}\Tr\cwich{\rho U}}=\alpha(d-2+2\alpha)
        \end{align} with associated cost \begin{equation}\label{eq:UIcost}
            \mathcal{K}_U(\rho,\sigma)=d-\alpha(d-2+2\alpha)=d(1-\alpha)+2\alpha(1-\alpha)=(1-\alpha)(d+2\alpha).
        \end{equation} Note that this optimum value is influenced by the action of the unitary on an invariant subspace orthogonal to the subspace where our state evolves; in particular it depends on its dimension. We will later address this further and show how we can remove this dependency in the limit when $d\rightarrow \infty$.

        If we now further restrict the problem to $d=2$, then this becomes $2(1-\alpha)(1+\alpha)=2(1-\alpha^2)=2(1-\abs{\braket{0}{\varphi}}^2)=2T(\ketbra{0},\ketbra{\varphi})^2$, where $T$ is the trace distance. Since it is the square of a distance, it cannot  be a distance, as it violates the triangle inequality.  

        In the limit of high $d\rightarrow \infty$ this quantity approximately becomes $d(1-\alpha)$, which, for small angles is $d(1-\cos\theta)\approx \frac{d}{2}\theta^2$, which is again the square of a distance.
 
    \end{example}

Importantly, the second example computes the optimal unitary cost, which can then be compared to the optimal cost over all channels. In \cref{fig:analyticalUnitaryInvariant}, we numerically observe that for dimension 4, the cost and the unitary cost are equal when the input states are pure but differ when the states are mixed. We analytically prove the case for general pure states in \cref{prop:analyticalUnitaryMaps}.

For the proof of \cref{prop:analyticalUnitaryMaps} we will first show \cref{lem:smolSupportSymmetry}, where we consider the case where the joint support of $\rho$ and $\sigma$, by which we mean $\H_S=\supp{\rho}+\supp{\sigma}$, is strictly contained in the overall Hilbert space $\H=\H_S\oplus\H_\bot$. This setting will appear again later in \cref{sec:limitCommUI}. 

\begin{figure}[ht]
  \centering
    \includegraphics[width=0.7\textwidth]{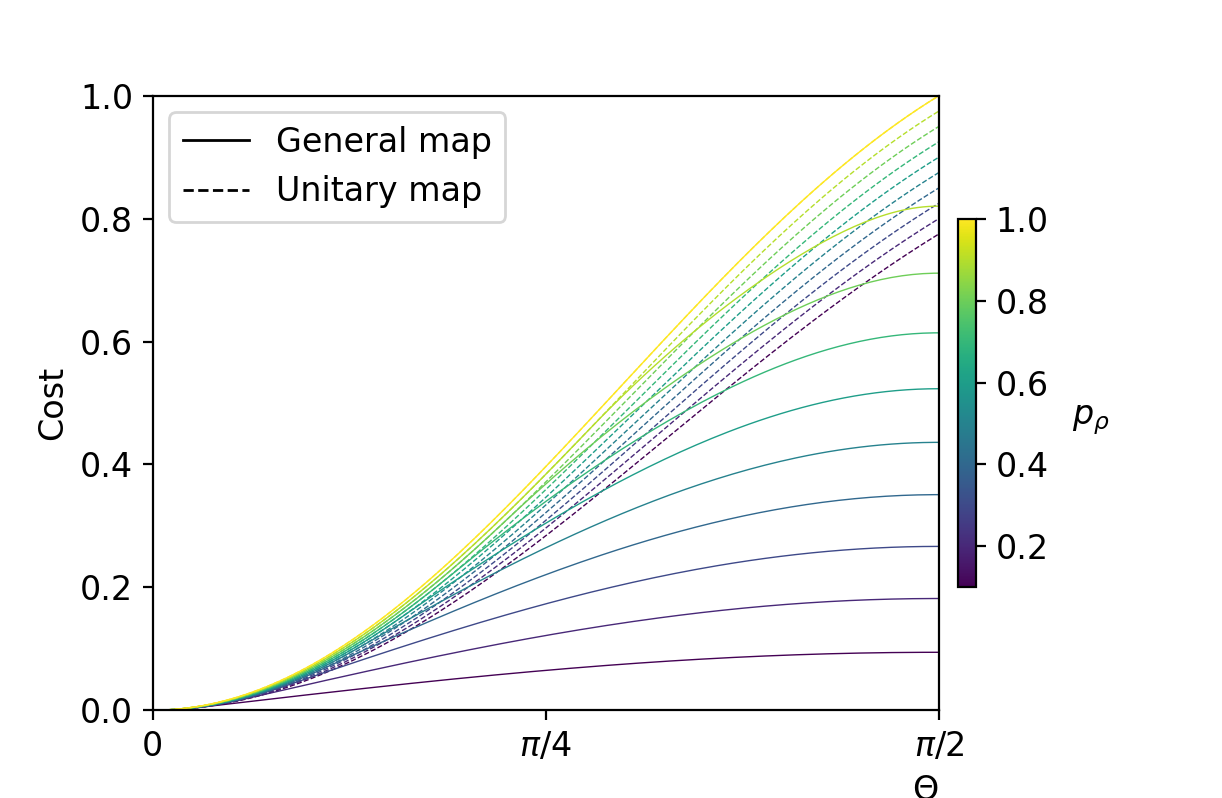} 
    \caption{Plot comparing the global optimal cost with the result of optimising only over unitaries 
             for various values of $p_\rho$, where the states are 
             $\rho=p_\rho\ketbra{0}+(1-p_\rho)\frac{1}{d}\I$
             and $\sigma=p_\rho\ketbra{\varphi}+(1-p_\rho)\frac{1}{d}\I$.} 
    \label{fig:analyticalUnitaryInvariant}
\end{figure}

\begin{lemma}\label{lem:smolSupportSymmetry}
Let $\rho$, $\sigma$ have joint support $\H_S\subseteq\H=\H_S\oplus\H_\bot$. Then there exists an optimal channel of the unitary invariant quantum optimal transport $\mc{K}(\rho,\sigma)$ such that its associated Kraus operators are of the form 
\begin{equation}
  \label{eq:smolSupportSymmetry}
  \Kr=\Kr_S\oplus c\Pi_\bot,
\end{equation} 
where $\Pi_\bot$ is the projector on the orthogonal, or embedding Hilbert space, $\H_\bot$. Therefore, the optimal channel acts as the identity channel on the embedding Hilbert space.
\end{lemma}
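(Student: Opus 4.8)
The plan is to start from \emph{any} optimal channel for $\mc{K}(\rho,\sigma)$ and replace its Kraus operators---without increasing the transport cost---by operators of the block form \eqref{eq:smolSupportSymmetry}; since the minimum in \eqref{eq:qTransportDef} is attained (the feasible set of \jamiol matrices is compact and the objective linear), producing one optimal channel of this shape proves the lemma. So I fix an optimal channel $\cE$, pick any Kraus decomposition $\set{\Kr_k}$, split $\H=\H_S\oplus\H_\bot$, and write each $\Kr_k$ in $2\times 2$ block form with diagonal blocks $A_k\in\B(\H_S)$, $D_k\in\B(\H_\bot)$ and off-diagonal blocks $B_k\colon\H_\bot\to\H_S$, $C_k\colon\H_S\to\H_\bot$. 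Trace preservation reads $\sum_k(A_k^*A_k+C_k^*C_k)=\I_S$ and $\sum_k(B_k^*B_k+D_k^*D_k)=\I_\bot$, so in particular $\sum_kD_k^*D_k\leq\I_\bot$. Because $\rho$ and $\sigma$ are supported on $\H_S$, comparing $(\H_\bot,\H_\bot)$ blocks in $\cE(\rho)=\sigma$ gives $\sum_kC_k\rho C_k^*=0$, so each $C_k$ annihilates $\supp\rho$; hence $P\eqdef\sum_kC_k^*C_k$ annihilates $\rho$, and therefore $P^{1/2}\rho P^{1/2}=0$ and $\Tr[\rho P^{1/2}]=0$. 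Finally, by \cref{remark:UICostForms}, in particular \eqref{eq:UICostForms3}, and since $\rho$ lives on $\H_S$, the cost $\Tr[K_0(\rho\star J_\cE)]$ with $K_0=d\I-\S$ equals $d-\sum_k\mathfrak{Re}\big[(\overline{\Tr[A_k]}+\overline{\Tr[D_k]})\,\Tr[\rho A_k]\big]$.

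Next I would define a competitor channel $\cE'$ whose Kraus operators are $\Kr'_k\eqdef A_k\oplus c_k\Pi_\bot$ for the original indices $k$, together with one further operator $\Kr'_*\eqdef P^{1/2}\oplus c_*\Pi_\bot$, where $c_k\eqdef\Tr[\rho A_k]/\mathcal{N}$ with $\mathcal{N}\eqdef\big(\sum_k\abs{\Tr[\rho A_k]}^2\big)^{1/2}$ (in the degenerate case $\mathcal{N}=0$ set all $c_k=0$ and $c_*=1$) and $c_*\eqdef\big(1-\sum_k\abs{c_k}^2\big)^{1/2}$. Each of these operators is of the form \eqref{eq:smolSupportSymmetry}. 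Trace preservation of $\cE'$ follows from $\sum_kA_k^*A_k+P=\I_S$ (the block identity above) together with $\sum_k\abs{c_k}^2+\abs{c_*}^2=1$; and $\cE'(\rho)=\sum_kA_k\rho A_k^*+P^{1/2}\rho P^{1/2}=\sigma$, using $P^{1/2}\rho P^{1/2}=0$ and that $\sum_kA_k\rho A_k^*=\sigma$ is the $(\H_S,\H_S)$ block of $\cE(\rho)=\sigma$. Moreover $\cE'$ acts as the identity on $\H_\bot$, precisely because $\sum_k\abs{c_k}^2+\abs{c_*}^2=1$.

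It remains to verify $\Tr[K_0(\rho\star J_{\cE'})]\leq\Tr[K_0(\rho\star J_\cE)]$. Substituting the Kraus operators of $\cE'$ into \eqref{eq:UICostForms3} and using $\Tr[\rho P^{1/2}]=0$, the difference of these two costs equals
\begin{equation*}
  \sum_k\mathfrak{Re}\big[\overline{\Tr[D_k]}\,\Tr[\rho A_k]\big]-(\dim\H_\bot)\sum_k\mathfrak{Re}\big[\overline{c_k}\,\Tr[\rho A_k]\big].
\end{equation*}
With the chosen $c_k$ the second sum equals $\mathcal{N}$, whereas Cauchy--Schwarz bounds the first sum by $\big(\sum_k\abs{\Tr[D_k]}^2\big)^{1/2}\mathcal{N}$, and $\sum_k\abs{\Tr[D_k]}^2\leq(\dim\H_\bot)\sum_k\Tr[D_k^*D_k]\leq(\dim\H_\bot)^2$ because $\sum_kD_k^*D_k\leq\I_\bot$. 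Hence the displayed difference is at most $(\dim\H_\bot)\mathcal{N}-(\dim\H_\bot)\mathcal{N}=0$ (and it is $0$ when $\mathcal{N}=0$), so $\cE'$ is optimal and has the form \eqref{eq:smolSupportSymmetry}.

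I expect the corrective Kraus operator $P^{1/2}$ to be the delicate ingredient: it is needed to restore trace preservation on $\H_S$ (the $A_k$ alone fail to form a channel there whenever some $C_k\neq0$), yet it must leave the image of $\rho$ unchanged and must not raise the cost. All three requirements hold because $C_k$---and hence $P$---is supported away from $\supp\rho$, which is in turn forced by $\cE(\rho)=\sigma$ being an operator on $\H_S$. The remaining work---expanding \eqref{eq:UICostForms3} in the block decomposition and carrying out the two Cauchy--Schwarz estimates---is routine.
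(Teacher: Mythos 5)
Your proof is correct, but it takes a genuinely different route from the paper. The paper proves \cref{lem:smolSupportSymmetry} by a symmetry-reduction (twirling) argument: it observes that the SDP \eqref{eq:SDPform} is invariant under unitaries $(\Pi_S\oplus U_\bot)\otimes(\Pi_S\oplus U_\bot)$, averages the Choi matrix over the Haar measure on $U_\bot$ (using first- and second-moment group integrals), works out the full block structure of the invariant Choi matrices, reads off the canonical Kraus operators of each block, and then argues that the blocks whose Kraus operators have vanishing $\Tr\cwich{\Kr}$ or $\Tr\cwich{\Kr\rho}$ can be discarded while repairing completeness inside the remaining block. You instead argue by direct replacement: starting from any optimal channel, you decompose its Kraus operators in $2\times2$ block form, use $\cE(\rho)=\sigma\in\B(\H_S)$ to conclude that the lower-left blocks $C_k$ annihilate $\supp\rho$, repackage their completeness deficit into the single extra Kraus operator $P^{1/2}$ (which has $\Tr\cwich{P^{1/2}\rho}=0$ and $P^{1/2}\rho P^{1/2}=0$, so it affects neither the cost via \eqref{eq:UICostForms3} nor $\cE'(\rho)$), and choose the $\H_\bot$ parts $c_k\Pi_\bot$ by Cauchy--Schwarz so that they beat the original $D_k$ contribution, using $\sum_k D_k^*D_k\leq\I_\bot$ and $\abs{\Tr\cwich{D_k}}^2\leq d_\bot\Tr\cwich{D_k^*D_k}$. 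Both arguments are sound (your appeal to \eqref{eq:UICostForms3} for an arbitrary Kraus decomposition is justified by \cref{remark:KrausCostInvariant}); what differs is what they buy. The paper's twirl yields the complete invariant block decomposition of the Choi matrix, which is reused elsewhere, and leans on a general result on SDPs with symmetry; your construction is more elementary and self-contained (no Haar integrals or representation theory), explicitly exhibits an optimal channel of the form \eqref{eq:smolSupportSymmetry}, and in effect already performs the Cauchy--Schwarz optimisation of the $c_k$ that the paper only carries out afterwards in \cref{thm:optEmbedd}. Two cosmetic remarks: with your normalisation $\sum_k\abs{c_k}^2=1$ automatically, so $c_*=0$ and the extra operator is simply $P^{1/2}\oplus 0$ (the case distinction is only needed when $\mathcal{N}=0$); and it is worth stating explicitly that the resulting channel acts as the identity on $\H_\bot$ because $\sum_k\abs{c_k}^2+\abs{c_*}^2=1$, which is the final claim of the lemma.
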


\begin{proof}
    We start by noting that our optimization problem \eqref{eq:SDPform} remains invariant under unitaries that act non-trivially only outside the joint support of the input and output states, i.e. they are of the form $(\Pi_S\oplus U_\bot)\otimes(\Pi_S\oplus U_\bot)$, where $\Pi_S$ is the projector onto $\H_S$ and $U_\bot$ are unitaries on $\H_\bot$ (see \cite{vallentin09} for a general discussion of SDP under symmetries).We show that the cost and constraints of the problem remain invariant under unitaries of this form: Given a valid {\jamiol} matrix $J$, we can construct a new {\jamiol} matrix $J_{U_\bot}=(\Pi_S\oplus U_\bot)\otimes(\Pi_S\oplus U_\bot)J(\Pi_S\oplus U_\bot^*)\otimes(\Pi_S\oplus U_\bot^*)$ that has the same cost and satisfies the same constraints as $J$ from \eqref{eq:SDPform}:
\begin{equation}\begin{split}
    \Tr\cwich{(\tilde{K}_0\star\rho)J_{U_\bot}}&=\Tr\cwich{(\tilde{K}_0\star\rho)((\Pi_S\oplus U_\bot)\otimes(\Pi_S\oplus U_\bot))J((\Pi_S\oplus U_\bot^*)\otimes(\Pi_S\oplus U_\bot^*))}\\
     &= \Tr\cwich{((\Pi_S\oplus U_\bot^*)\otimes(\Pi_S\oplus U_\bot^*))(\tilde{K}_0\star\rho)((\Pi_S\oplus U_\bot)\otimes(\Pi_S\oplus U_\bot))J} \\
     &=\Tr\cwich{(\tilde{K}_0\star\rho)J},
\end{split}\end{equation} 
where we have used the unitary invariance of the cost $\tilde{K}_0$ and the fact that $\rho$ and $\sigma$ commute with $U_\bot$. Similarly, the constraints of the problem are also invariant:
\begin{align}
    \Tr_A\cwich{\rho J_{U_\bot}}&=\Tr_A\cwich{\rho((\Pi_S\oplus U_\bot)\otimes(\Pi_S\oplus U_\bot))J((\Pi_S\oplus U^*_{\bot})\otimes(\Pi_S\oplus U_\bot^*))}\nonumber\\
    &=\Tr_A\cwich{((\Pi_S\oplus U_\bot^*)\otimes\I)\rho((\Pi_S\oplus U_\bot)\otimes(\Pi_S\oplus U_\bot))J(\I\otimes(\Pi_S\oplus U_\bot^*))} \\
    &=(\Pi_S\oplus U_\bot)\Tr_A\cwich{\rho J}(\Pi_S\oplus U_\bot^*)=(\Pi_S\oplus U_\bot)\sigma(\Pi_S\oplus U_\bot^*)=\sigma,\nonumber\\
    \nonumber \\
    \Tr_B\cwich{J_{U_\bot}}&= \Tr_B\cwich{((\Pi_S\oplus U_\bot)\otimes(\Pi_S\oplus U_\bot))J((\Pi_S\oplus U_\bot^*)\otimes(\Pi_S\oplus U_\bot^*))}\nonumber\\
    &=\Tr_B\cwich{(\I\otimes(\Pi_S\oplus U_\bot^*))((\Pi_S\oplus U_\bot)\otimes(\Pi_S\oplus U_\bot))J((\Pi_S\oplus U_\bot^*)\otimes\I)}\\
    &=(\Pi_S\oplus U_\bot)\Tr_B\cwich{J}(\Pi_S\oplus U_\bot^*)=(\Pi_S\oplus U_\bot)\I(\Pi_S\oplus U_\bot^*)=\I.\nonumber
    \end{align}
Due to the linearity of the cost and constraints, it follows that a twirled matrix $J'=\int dU_\bot J_{U_\bot}$ is also a valid {\jamiol} matrix, as it represents a convex combination of valid {\jamiol} matrices.  
That is, for any $J$ we can construct a twirled version $J'$ that retains the same cost. Hence, without loss of generality, we can optimize our cost over the set of {\jamiol} matrices satisfying the symmetry condition: $[(\Pi_S\oplus U_\bot)\otimes(\Pi_S\oplus U_\bot),J]=0$.

Since we wish to identify optimal Kraus operators we will work with the Choi matrix $C$: the canonical  Kraus representation can be readily obtained from $C=\sum_\Kr \ketbra{\Kr}{\Kr}$, where the (unnormalized) eigenstates are $\ket{\Kr}=\sum_{rs}\Kr_{rs}\ket{rs}$  are the vectorised form of the Kraus operator $\sum_{rs}\Kr_{rs}\ketbra{r}{s}$. In addition, we can use a basis for the partial transpose such that the Choi matrices inherit the symmetry of the problem. In particular, we consider a basis such that $\Pi_S$ is diagonal which will preserve the direct sum structure of $\Pi_S\oplus U_\bot$. In such a basis, the Choi matrix inherits the symmetry of the problem as follows: 
\begin{align}
    C'&=J'^{T_A}=\int dU_\bot J_{U_\bot}^{T_A}=\int dU_\bot((\Pi_S\oplus \bar{U}_\bot)\otimes(\Pi_S\oplus U_\bot))J^{T_A}((\Pi_S\oplus U_\bot^T)\otimes(\Pi_S\oplus U_\bot^*))\nonumber\\
   &= \int dU_\bot((\Pi_S\oplus \bar U_\bot)\otimes(\Pi_S\oplus  U_\bot))C((\Pi_S\oplus U_\bot^T)\otimes(\Pi_S\oplus U_\bot^*))\label{eq:twirledchoi}
\end{align}
where in the second equality we have used the properties of the partial trace given in Lemma \ref{lem:PTstrongDuality}. 
In order to ease the notation we will use latin letters to label the $n$ elements of the basis of $\H_S$ and greek letters to label the $d_\bot=d-n$  elements of the basis of $\H_\bot$.  We can expand the 16 terms appearing from the double direct sum in each side of the twirl in \cref{eq:twirledchoi}. All terms where each factor $(U_\bot)_{\alpha\beta}$ cannot be matched with a factor $(\bar U_\bot)_{\alpha \beta}$ will be zero. This follows from the invariance of the Haar measure $dU=dU'$ taking $U=W U' Z$ with $W= \sum_\alpha \mathrm{e}^{i\theta_\alpha} \ketbra{\alpha}{\alpha}$ and $Z= \sum_\alpha \mathrm{e}^{i\varphi_\alpha} \ketbra{\alpha}{\alpha}$.
Any unmatched factor $U_{\alpha\beta}$ picks up a phase $\mathrm{e}^{i(\theta_\alpha-\varphi_\beta)}$. Since the result must be independent of the value of these phases,  it must vanish––evident, for instance, by integrating over uniformly distributed phases in $[0,2\pi]$. For example, $\int dU_\bot (U_\bot)_{\alpha\beta}= \int dU_\bot (U_\bot)_{\alpha \beta} (U_\bot)_{\gamma \delta}=0$.

The only non-zero terms can be written as 
\begin{align}
    C_S&:=  (\Pi_S \otimes \Pi_S) C (\Pi_S \otimes \Pi_S)\label{eq:CSblock} \\
   V&:= \int dU_\bot (\Pi_S \otimes \Pi_S)C(U_\bot^T \otimes U_\bot^*)=\ket{v}\!\bra{\Phi_\bot} \mbox{ with }
    \ket{v}:=\sum_{ij,\beta} C_{\beta\beta; i j}^* \ket{ij} \\
   V^\dagger&:= \int dU_\bot(\bar U_\bot \otimes U_\bot)C(\Pi_S \otimes \Pi_S)=\ket{\Phi_\bot}\!\bra{v}   \\
   A&:= \int dU_\bot(\Pi_S \otimes U_\bot) C (\Pi_S \otimes U_\bot^*)= A_S \otimes  \Pi_\bot 
   \mbox{ where } A_S=\sum_{ij}(\sum_\alpha C_{i \alpha; j \alpha} )\ketbra{i}{j} \\
   B&:=\int dU_\bot(\bar U_\bot \otimes \Pi_S ) C (U_\bot^T \otimes \Pi_S )= \Pi_\bot\otimes B_S \mbox{ where } B_{S}=\sum_{ij}(\sum_\alpha C_{\alpha i;\alpha j})\ketbra{i}{j} \\
   D&:=  \int dU_\bot (\bar U_\bot \otimes U_\bot) C (U_\bot^T \otimes U_\bot^*) = a \Pi_\bot\otimes \Pi_\bot+ b \ketbra{\Phi_\bot}\label{eq:Dblock}.
\end{align}
where $\Pi_\bot=\sum \ketbra{\alpha}{\alpha}$ is the projector onto $\H_\bot$ and $\ket{\Phi_\bot}=\sum_{\alpha=1}^{d_\bot}\ket{\alpha \alpha}$ is the maximally entangled state in $\H_\bot\otimes\H_\bot$. In Eqs. (\ref{eq:CSblock}-\ref{eq:Dblock}) we used the $U(d)$ group integral \cite{puchala17}  $\int dU  U_{\alpha\beta}\bar{U}_{\nu\mu}=\frac{1}{d}\delta_{\alpha\nu}\delta_{\beta\mu}$ and the higher order contraction \begin{equation}\int dU U_{\alpha\beta}U_{\gamma\epsilon}\bar{U}_{\tau\xi}\bar{U}_{\nu\mu}=\frac{\delta_{\alpha\tau}\delta_{\gamma\nu}\delta_{\beta\xi}\delta_{\epsilon\mu}+\delta_{\alpha\nu}\delta_{\gamma\tau}\delta_{\beta\mu}\delta_{\epsilon\xi}}{d^2-1}-\frac{\delta_{\alpha\tau}\delta_{\gamma\nu}\delta_{\beta\mu}\delta_{\epsilon\xi}+\delta_{\alpha\nu}\delta_{\gamma\tau}\delta_{\beta\xi}\delta_{\epsilon\mu}}{d(d^2-1)}.\end{equation}

Notice that the symmetry under $U\otimes \bar U$ singles out the invariants $\ketbra{\Phi^+}{\Phi^+}$ and the identity, i.e. the so-called isotropic states \cite{werner89}. If we define the projector $\I_R=\Pi_\bot\otimes\Pi_\bot-\frac{1}{d_\bot}\ketbra{\Phi_\bot}$, we can further decompose the block $D$ in \eqref{eq:Dblock} in two diagonal blocks and write the twirled Choi matrix in the basis
\begin{equation} \label{eq:Cblock}
   C= 
\begin{pmatrix}
    \fbox{\(\begin{matrix} C_S &\hspace{1.5em}\ketbra{v}{\Phi_\bot}
 \\
 & \\ \ketbra{\Phi_\bot}{v} & a'\ketbra{\Phi_\bot}{\Phi_\bot} \end{matrix}\)} & \mathbf{0} \\
    \mathbf{0} & \begin{matrix} \fbox{$b'\I_R$} &  &  \\  & \fbox{$A$} &  \\  &  & \fbox{$B$} \end{matrix}
\end{pmatrix}
\end{equation}

Now we recall that a canonical set of Kraus operators can be obtained from the eigenstates of $C$. In particular, the eigenstates corresponding block $A=A'_S\otimes \Pi_\bot$ can be written as $\ket{\psi_S}\ket{\varphi_\bot}$ with $\ket{\psi_S} \in \H_S$
and $\ket{\varphi_\bot^*} \in \H_\bot$,  which correspond to  Kraus operators of the form $\Kr_A=\ketbra{\psi_S}{\varphi_\bot}$. 
Similarly the Kraus operator corresponding to the last block, $B$, are of the form $\Kr_B=\ketbra{\varphi_\bot'}{\psi_S'}$, those the central block $\Kr_R=\sum R_{\alpha\beta} \ketbra{\alpha}{\beta}$, and, finally, the first block's eigenstates must to be of the form $\ket{\Kr_o}=\sum_{ij} c_{ij}\ket{i j}+c \ket{\Phi_\bot}$ corresponding to a Kraus operator of the form $\Kr_o=\Kr_S\oplus c \Pi_\bot$.

To conclude the proof, we need to show that we can restrict to Kraus operators of the latter form.
We first note that $\Tr(\Kr_A)=\Tr(\Kr_B)=0$ 
and that $\Tr(\Kr_R \rho)=0$, and hence, $\Kr_A$, $\Kr_B$ and $\Kr_R$ do not contribute to the the non-trivial terms in the  cost, $\sum_\Kr\mathfrak{Re}\swich{\Tr\cwich{\Kr^*}\Tr\cwich{\Kr\rho }}$, as given in
\cref{eq:UICostForms3}. Moreover, these terms do not contribute to ensuring $\cE(\rho)=\sum_E \Kr \rho \Kr^*=\sigma$ as either the range or the image lies in the orthogonal subspace $\H_\bot$. Notably, since each term is positive semidefinite, any leakage from the support induced by one term cannot be canceled by another. Finally, the completeness relation $\sum_\Kr \Kr^*\Kr=\I$ can be satisfied solely with Kraus operators of the form $\Kr_o$ since $\Kr_o^* \Kr_o= \Kr_S^*\Kr_S \oplus c^2 \Pi_\bot$, without adversely affecting the cost.
Note that this amounts to a choice of Choi matrix that lives entirely in first block of \cref{eq:Cblock}, with $a'=1$ so that $\Tr_B C=\I$. 
\end{proof}

We next show that the optimization of the cost when the input and output are embedded in a larger Hilbert space can be written in terms of a quantum map acting only on the joint support of input and output.

\begin{thm}\label{thm:optEmbedd}
Let $\rho=\sum_i p_i \ketbra{i}$ and $\sigma$ have joint support $\H_S\subseteq\H=\H_S\oplus\H_\bot$;
with orthonormal basis $B_S=\set{\ket{i}}_{i=1}^n$ of $\H_S$ and $B_\bot=\set{\ket{\alpha}}_{\alpha=1}^{d_\bot}$ of $\H_\bot$, and $d=n+d_\bot$.
The optimal unitarily invariant cost is 
\begin{align}
    \mc{K}(\rho,\sigma)
      &= 1-\frac{1}{d}\max_{\{\Kr_k\}}\left(\mathfrak{Re}\swich{\sum_k\Tr\cwich{\Kr_k \rho}\Tr\cwich{\Kr_k^*}}+d_\bot\sqrt{\sum_k\abs{\Tr(\Kr_k \rho)}^2} \right)  \label{eq:Kembed3}\\
      &=1-\frac{1}{d}\max_{\mc{E}}\left(\mathfrak{Re}\swich{\sum_{ij}p_i\bra{i}\mc{E}(\ketbra{i}{j})\ket{j}}+d_\bot\sqrt{\sum_{ij}p_ip_j\bra{i}\mc{E}(\ketbra{i}{j})\ket{j}} \right),  \label{eq:Kembed2}
\end{align} 
where the maximisation is over CPTP maps $\mc{E}(\bullet)=\sum_k \Kr_k\bullet \Kr_k^*$ on 
$\B(\H_S)$ s.t. $\mc{E}(\rho)=\sigma$.
Equivalently, we can write
\begin{align}
   \label{eq:Kembed}
   \mc{K}(\rho,\sigma)
        =1-\frac{1}{d} &\max_{C_S}\left(\mathfrak{Re}(\bra{\rho\vphantom{\Phi_S}}C_S\ket{\Phi_S}) + d_\bot\sqrt{\bra{\rho}C_S\ket{\rho}} \right) \\
        &\phantom{===} 
         \text{s.t. } C_S\geq 0,\ \Tr_B C_S=\I_S,\ \Tr\cwich{\rho^T C_S}=\sigma, \nonumber
\end{align} 
where $\ket{\Phi_S}=\sum_{i=1}^n \ket{ii}$,
$\ket{\Phi_\bot}=\sum_{\alpha=1}^{d_\bot} \ket{\alpha \alpha}$, and the input is written in vectorized form $\ket{\rho}=\sum_{ii}p_{i}\ket{ii}$.
\end{thm}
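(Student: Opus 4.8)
The plan is to read off the result directly from \cref{lem:smolSupportSymmetry}, which already reduces the optimisation to channels whose canonical Kraus operators split as $\Kr_k = \Kr_{S,k}\oplus c_k\Pi_\bot$. First I would unpack what the admissibility conditions become under this ansatz. Since $\supp\rho,\supp\sigma\subseteq\H_S$ and $\Pi_\bot\rho=\rho\Pi_\bot=0$, all the cross terms in $\sum_k\Kr_k\rho\Kr_k^*$ vanish, so the output condition $\cE(\rho)=\sum_k\Kr_{S,k}\rho\Kr_{S,k}^*=\sigma$ is a constraint on the restricted map $\cE_S(\bullet)=\sum_k\Kr_{S,k}\bullet\Kr_{S,k}^*$ alone; likewise $\sum_k\Kr_k^*\Kr_k=\I$ decouples into $\sum_k\Kr_{S,k}^*\Kr_{S,k}=\I_S$ (i.e.\ $\cE_S$ is CPTP on $\B(\H_S)$) together with $\sum_k\abs{c_k}^2=1$. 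In particular the scalars $c_k$ are free apart from this normalisation, and conversely every CPTP $\cE_S$ on $\B(\H_S)$ with $\cE_S(\rho)=\sigma$, together with any such $(c_k)$, assembles into an admissible channel on $\B(\H)$.

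Next I would evaluate the cost through \cref{eq:UICostForms3}. Using $\Tr\cwich{\Kr_k^*}=\Tr\cwich{\Kr_{S,k}^*}+\bar c_k d_\bot$ and $\Tr\cwich{\Kr_k\rho}=\Tr\cwich{\Kr_{S,k}\rho}$, the nontrivial part of the cost separates into a term depending only on the $\Kr_{S,k}$ and a term $d_\bot\sum_k\mathfrak{Re}\swich{\bar c_k\Tr\cwich{\Kr_{S,k}\rho}}$. Minimising $\mc{K}$ means maximising the whole bracket, and the $c_k$ enter only through the latter term and only subject to $\sum_k\abs{c_k}^2=1$; Cauchy--Schwarz gives $\sum_k\mathfrak{Re}\swich{\bar c_k\Tr\cwich{\Kr_{S,k}\rho}}\le\sqrt{\sum_k\abs{c_k}^2}\sqrt{\sum_k\abs{\Tr\cwich{\Kr_{S,k}\rho}}^2}=\sqrt{\sum_k\abs{\Tr\cwich{\Kr_{S,k}\rho}}^2}$, with equality for $c_k\propto\Tr\cwich{\Kr_{S,k}\rho}$. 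Such a choice is admissible precisely because it only touches the $\H_\bot$ block and therefore leaves $\cE_S$ and the constraint $\cE_S(\rho)=\sigma$ intact. This yields \cref{eq:Kembed3}, the maximisation now over Kraus representations of CPTP $\cE_S$ on $\B(\H_S)$ with $\cE_S(\rho)=\sigma$.

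Finally I would rewrite the two pieces of \cref{eq:Kembed3} in representation-independent form. Expanding $\rho=\sum_i p_i\ketbra{i}$ and using $\bra{i}\cE_S(\ketbra{i}{j})\ket{j}=\sum_k\bra{i}\Kr_{S,k}\ket{i}\,\overline{\bra{j}\Kr_{S,k}\ket{j}}$ gives $\sum_k\Tr\cwich{\Kr_{S,k}\rho}\Tr\cwich{\Kr_{S,k}^*}=\sum_{ij}p_i\bra{i}\cE_S(\ketbra{i}{j})\ket{j}$ and $\sum_k\abs{\Tr\cwich{\Kr_{S,k}\rho}}^2=\sum_{ij}p_ip_j\bra{i}\cE_S(\ketbra{i}{j})\ket{j}$ (the latter manifestly real and non-negative). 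This is \cref{eq:Kembed2}, and since the objective now visibly depends only on $\cE_S$, the maximisation over Kraus representations collapses to a maximisation over channels. Passing to the Choi matrix $C_S=(\id\otimes\cE_S)(\ketbra{\Phi_S})$, for which $\bra{\rho}C_S\ket{\Phi_S}=\sum_{ij}p_i\bra{i}\cE_S(\ketbra{i}{j})\ket{j}$ and $\bra{\rho}C_S\ket{\rho}=\sum_{ij}p_ip_j\bra{i}\cE_S(\ketbra{i}{j})\ket{j}$, while complete positivity, trace preservation and the output condition become $C_S\ge0$, $\Tr_B C_S=\I_S$ and $\Tr\cwich{\rho^T C_S}=\sigma$, produces \cref{eq:Kembed}.

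The main obstacle is not any single computation but making the reduction airtight as an \emph{equality}: one has to argue that the ansatz of \cref{lem:smolSupportSymmetry} together with the free choice of the $c_k$ genuinely contains an optimiser rather than merely an upper or lower bound. This rests on the two structural observations highlighted above --- that the $c_k$ decouple from every constraint except their own normalisation, and that after the Cauchy--Schwarz step the remaining objective is a function of $\cE_S$ alone --- so that $\min$ over admissible channels on $\B(\H)$, $\min$ over the restricted form, and $\min$ over CPTP maps on $\B(\H_S)$ taking $\rho$ to $\sigma$ all coincide.
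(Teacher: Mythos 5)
Your proposal is correct and follows essentially the same route as the paper's proof: it invokes \cref{lem:smolSupportSymmetry} to restrict to Kraus operators of the form $\Kr_{S,k}\oplus c_k\Pi_\bot$, splits the cost from \cref{eq:UICostForms3}, optimises the unit vector $(c_k)$ via phase freedom plus Cauchy--Schwarz, and then passes to the channel/Choi forms exactly as the paper does via \cref{eq:UICostForms_2b} and \cref{eq:UICostForms1b}. Your extra care in checking that the constraints decouple (so the reduction is an equality, not just a bound) is a sound and welcome elaboration of what the paper leaves implicit in the lemma.
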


\begin{proof}
From \cref{lem:smolSupportSymmetry}, we can take Kraus operators to be of the form $\Kr_k={\Kr_k}_S\oplus c_k\Pi_\bot$, such that $\set{{\Kr_k}_S}$ is a set of Kraus operators restricted to the support $\H_S$ and $c_k$ form a unit vector. The non-trivial part of the cost, starting from \cref{eq:UICostForms3}, then is
\begin{align}\label{eq:Boundem}
   \mathfrak{Re}\swich{\sum_k \Tr\cwich{\rho \Kr_k}\Tr\cwich{\Kr_k^*}}
= \mathfrak{Re}\swich{\sum_k \Tr\cwich{\rho {\Kr_k}_S}\Tr\cwich{{\Kr_k}_S^*}+d_\bot\sum_k c_k^*\Tr\cwich{\rho {\Kr_k}_S}}.
\end{align}  

We can remove the real part on the second term due to the phase freedom of each $c_k$ with respect to ${\Kr_k}_S$. This freedom will allow us to tune the phase of $c_k$ in each Kraus operator such that $c_k^*\Tr\cwich{\rho {\Kr_k}_S} = \abs{c_k}\abs{\Tr\cwich{\rho {\Kr_k}_S}}$, which is the maximum achievable real part. 

Now fix a set of Kraus operators on the support $\set{{\Kr_k}_S}$. Consider the vectors $\mathbf{u}=(c_k)$ and $\mathbf{v}=(\Tr\cwich{\rho {\Kr_k}_S})$. With this notation we are maximising the inner product between $\mathbf{v}$ and a unit vector $\mathbf{u}$. The Cauchy-Schwartz inequality states that $\abs{\braket{\mathbf{u}}{\mathbf{v}}}\leq \norm{\mathbf{u}}\norm{\mathbf{v}}=\norm{\mathbf{v}}$ and that the inequality is tight if and only if $\mathbf{v}$ and $\mathbf{u}$ are linearly dependent. Therefore the non-trivial part of the cost is 
\begin{equation}
    \mathfrak{Re}\swich{\sum_k \Tr\cwich{\rho {\Kr_k}_S}\Tr\cwich{{\Kr_k}_S^*}+d_\bot\sqrt{\sum_k \abs{\Tr\cwich{\rho {\Kr_k}_S}}^2}},
\end{equation} 
where $\sqrt{\sum_k\abs{\Tr\cwich{\rho \Kr_k}}^2}=\sqrt{\braket{\mathbf{v}}}$ is the norm of $\mathbf{v}$ and $c_k=\frac{\Tr\cwich{\rho {\Kr_k}_S}}{\sqrt{\sum_{k'} \abs{\Tr\cwich{\rho {\Kr_{k'}}_S}}^2}}$. Finally, we can take the maximum over all admissible channels to obtain the optimal channel. The  equivalent equations \cref{eq:Kembed2} and \cref{eq:Kembed} follow immediately from the general expressions \cref{eq:UICostForms_2b} and \cref{eq:UICostForms1b}, respectively.
\end{proof}

\begin{prop}
\label{prop:analyticalUnitaryMaps}
Consider two pure states in a $d$-dimensional Hilbert space $\H$, 
w.l.o.g. $\rho=\ketbra{0}$ and 
$\sigma=\ketbra{\psi}{\psi}$
where $\ket{\psi}=
\alpha\ket{0}+\sqrt{1-\alpha^2}\ket{1}$
with $\alpha\geq 0$. Let $\tilde{K}_0$ be the cost matrix. Then 
\begin{align}
    \mc{K}(\rho,\sigma)= (1-\alpha)(d+2\alpha)
\end{align} 
and the optimal channel is given by conjugation with the unitary 
\begin{equation}
U= \begin{bmatrix}
    \alpha & -\sqrt{1-\alpha^2} \\
    \sqrt{1-\alpha^2} & \alpha
\end{bmatrix}\oplus \I_{d-2}.
\end{equation}
\end{prop}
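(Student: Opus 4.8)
The plan is to bound $\mc{K}(\rho,\sigma)$ from both sides, using the explicit unitary for the upper bound and \cref{thm:optEmbedd} for the lower bound. I work with the cost matrix $K_0=d\I-\S$ of \cref{prop:UIcost}, so that \cref{thm:optEmbedd}, which is stated for $\tilde K_0=\frac1d K_0$, computes $\frac1d\mc{K}(\rho,\sigma)$. Two reductions: we may assume $\alpha\in[0,1]$, since for $\alpha<0$ conjugating $\rho$ and $\sigma$ by $\mathrm{diag}(-1,1,\dots,1)$ (which fixes $\rho=\ketbra{0}$) replaces $\alpha$ by $\abs{\alpha}$ while leaving $\mc{K}$ invariant by unitary invariance; and $\alpha=1$ is immediate, since then $\sigma=\rho$, the identity channel realizes cost $0$ by \cref{prop:zerocosts}, the displayed $U$ equals $\I$, and $(1-\alpha)(d+2\alpha)=0$. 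So assume $0\le\alpha<1$; then $\rho=\ketbra{0}$ is diagonal in $\set{\ket0,\ket1}$ with spectrum $(1,0)$, the joint support is $\H_S=\mathrm{span}\set{\ket0,\ket1}$ with $n=2$, and $d_\bot=d-2$.

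\emph{Upper bound.} The channel $\cE_U(x)=UxU^*$ is CPTP and maps $\rho$ to $\sigma$: the first column of $U$ is $(\alpha,\sqrt{1-\alpha^2})^{T}$, so $U\ket0=\ket\varphi$ and $U\ketbra{0}U^*=\ketbra\varphi=\sigma$. Its only Kraus operator is $U$, so by \cref{eq:UICostForms3} (multiplied by $d$ to pass from $\tilde K_0$ to $K_0$) the cost is $d-\mathfrak{Re}\bigl(\Tr\cwich{U^*}\Tr\cwich{\rho U}\bigr)$. Here $\Tr\cwich{\rho U}=\bra{0}U\ket{0}=\alpha$ and $\Tr\cwich{U}=2\alpha+(d-2)\in\R$, so this equals $d-(2\alpha+d-2)\alpha=(1-\alpha)(d+2\alpha)$; hence $\mc{K}(\rho,\sigma)\le(1-\alpha)(d+2\alpha)$. (This also matches the optimal unitary cost already computed in \cref{example:UIcost}.)

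\emph{Lower bound.} By \cref{thm:optEmbedd}, \cref{eq:Kembed},
\begin{equation}
  \mc{K}(\rho,\sigma)=d-\max_{C_S}\Bigl(\mathfrak{Re}\bigl(\bra{\rho}C_S\ket{\Phi_S}\bigr)+(d-2)\sqrt{\bra{\rho}C_S\ket{\rho}}\Bigr),
\end{equation}
the maximum over $C_S\ge0$ on $\H_S\otimes\H_S$ with $\Tr_BC_S=\I_S$ and $\cE_{C_S}(\rho)=\sigma$, where $\ket{\Phi_S}=\ket{00}+\ket{11}$ and, since $\rho$ has spectrum $(1,0)$, $\ket{\rho}=\sum_ip_i\ket{ii}=\ket{00}$. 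Writing $C_S=\sum_{ij}\ketbra{i}{j}\otimes\cE(\ketbra{i}{j})$ in blocks, the constraint $\cE(\rho)=\sigma$ fixes the $(0,0)$-block to $\ketbra{\varphi}$, so $\bra{\rho}C_S\ket{\rho}=\abs{\braket{0}{\varphi}}^2=\alpha^2$ and the second term is $(d-2)\alpha$. Since the $(0,0)$-block is rank one, positivity of $C_S$ forces the $(0,1)$-block to be $\ketbra{\varphi}{\zeta}$; trace preservation gives $\Tr\ketbra{\varphi}{\zeta}=\braket{\zeta}{\varphi}=0$, so $\ket\zeta=\nu\ket{\varphi^\perp}$ with $\ket{\varphi^\perp}=\sqrt{1-\alpha^2}\ket0-\alpha\ket1$; and the (generalized) Schur complement condition gives $\ketbra{\zeta}\le\cE(\ketbra{1})$, whence $\abs{\nu}^2=\norm{\zeta}^2\le\Tr\cE(\ketbra{1})=1$. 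Then $\bra{00}C_S\ket{11}=\braket{0}{\varphi}\,\overline{\braket{1}{\zeta}}=\alpha\cdot\overline{(-\nu\alpha)}=-\alpha^2\bar\nu$, so $\bra{\rho}C_S\ket{\Phi_S}=\alpha^2+\bra{00}C_S\ket{11}=\alpha^2-\alpha^2\bar\nu$, whose real part is at most $\alpha^2(1+\abs{\nu})\le2\alpha^2$. Hence the bracket is at most $2\alpha^2+(d-2)\alpha$ and $\mc{K}(\rho,\sigma)\ge d-2\alpha^2-(d-2)\alpha=(1-\alpha)(d+2\alpha)$.

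Combining gives $\mc{K}(\rho,\sigma)=(1-\alpha)(d+2\alpha)$, attained by $\cE_U$, which is therefore optimal; indeed equality in the lower bound forces $\nu=-1$ and $\cE(\ketbra{1})=\ketbra{\varphi^\perp}$, so $C_S$ is the rank-one projector onto $\ket0\ket\varphi-\ket1\ket{\varphi^\perp}$, which is the Choi matrix of $\cE_U$. I expect the lower bound to be the only real difficulty: a crude Cauchy--Schwarz estimate on the off-diagonal block of $C_S$ only gives $\mathfrak{Re}\bra{00}C_S\ket{11}\le\alpha$, which is not tight, and the decisive step is to invoke trace preservation $\Tr_BC_S=\I_S$ to force $\zeta\perp\varphi$, which pins $\bra{00}C_S\ket{11}=-\alpha^2\bar\nu$ and sharpens the bound to $\alpha^2$.
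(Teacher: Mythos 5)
Your proof is correct, and while it shares the same backbone as the paper's --- both arguments reduce to the joint support via \cref{lem:smolSupportSymmetry} and \cref{thm:optEmbedd}, and both verify achievability with the explicit unitary as in \cref{example:UIcost} --- the decisive optimisation step is done by a genuinely different route. The paper works with the Kraus form \cref{eq:Kembed3}: it parametrises $\Kr_k=\sqrt{p_k}\,(\cdot)\oplus\Pi_\bot$ with real entries $\beta_k,\gamma_k$ and finds the optimum by a Lagrange-multiplier computation (for fixed index set and fixed $p_k$), concluding $\beta=\alpha$, $\gamma=-\sqrt{1-\alpha^2}$. You instead work with the Choi form \cref{eq:Kembed} and bound the objective exactly: the constraint $\cE(\rho)=\sigma$ pins the $(0,0)$ block of $C_S$ to $\ketbra{\varphi}$, block positivity forces the off-diagonal block to be $\ket{\varphi}\!\bra{\zeta}$ with $\ketbra{\zeta}\le\cE(\ketbra{1})$ via the generalised Schur complement, and trace preservation forces $\zeta\perp\varphi$, which yields $\mathfrak{Re}\bra{00}C_S\ket{11}\le\alpha^2$ and hence the matching lower bound $(1-\alpha)(d+2\alpha)$. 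What your route buys is rigour precisely where the paper's argument is softest: the Lagrangian step only identifies stationary points under a realness restriction, whereas your convex/block-positivity estimate is a global bound and even gives uniqueness of the optimal Choi matrix on the support (the rank-one projector of conjugation by $U$), correctly identifying trace preservation as the ingredient that sharpens the naive Cauchy--Schwarz bound. Two minor remarks: at $\alpha=1$ the lower bound $\mc{K}\ge 0$ strictly comes from the positivity established in \cref{prop:UIcost}, not only from \cref{prop:zerocosts}; and your $\alpha<0$ reduction shows the cost depends only on $\abs{\alpha}$, so (as in \cref{example:UIcost} and in the paper's own proof) the stated formula should be read with $\alpha\ge 0$ --- a caveat about the statement, not about your argument.
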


\begin{proof}
     Let $\set{\Kr_k}$ be an admissible set of Kraus operators for $\rho$ and $\sigma$. Because $\sigma$ is pure, $\Kr_k\rho \Kr_k^*=p_k\sigma$, with $p_k$ a probability distribution. From \cref{lem:smolSupportSymmetry} and \cref{thm:optEmbedd}, we can write $\Kr_k$ as follows 
     
     \begin{equation}
        \Kr_k=\sqrt{p_k}\left[
            \begin{tabular}{c | c}
                \begin{tabular}{c c}
                $\alpha$ & $\gamma_k$ \\
                $\beta$ & $\eta_k$
                \end{tabular} & 0 \\ \hline
                0 & $\Pi_\bot$ \\ 
                \end{tabular}\right],
    \end{equation} 
for some $\gamma_k,\eta_k\in\mathbb C$ and where we defined  $\beta=\sqrt{1-\alpha^2}\geq 0$.
    The trace-preserving condition $\sum_k \Kr_k^*\Kr_k=\mathbb I$ gives, from the lower-right entry of the $2\times 2$ block,
\begin{equation}
    \sum_k p_k\left(\abs{\gamma_k}^2+\abs{\eta_k}^2\right)=1,
    \label{eq:cond1}
\end{equation}
and, from the off-diagonal entry,
\begin{equation}
   \sum_k p_k\left(\alpha\gamma_k+\sqrt{1-\alpha^2}\eta_k\right)=0.
   \label{eq:cond2}
\end{equation}
Taking real parts, we define
\begin{equation}
       \gamma_R:=\sum_k p_k\,\mathfrak{Re}(\gamma_k),
    \qquad
      \eta_R:=\sum_k p_k\,\mathfrak{Re}(\eta_k).
      \label{eq:etaR}
\end{equation}
So that
\begin{equation}
    \alpha \gamma_R+\beta \eta_R=0.
    \label{eq:gm_eta_cond}
\end{equation}
For the cost we want to maximise, from \cref{thm:optEmbedd},
\begin{equation}\begin{split}
\mathfrak{Re}\swich{\sum_k\Tr\cwich{{\Kr_k}_S \rho}\Tr\cwich{{\Kr_k}_S^*}}+d_\bot\sqrt{\sum_k\abs{\Tr({\Kr_k}_S \rho)}^2} &=\mathfrak{Re}\swich{\sum_kp_k\alpha\swich{\alpha+\bar\eta_k}+d_\bot\sqrt{\sum_k p_k\alpha^2}}\\
&=f(\alpha,d_\bot)+\alpha\eta_R,
\end{split}\end{equation}  
where $f(\alpha,d_\bot)=\alpha(\alpha+ d_\bot)$ is a function of $\alpha$ and $d_\bot$ only, and therefore is a constant in the optimisation. Since $\alpha$ is positive, thus the optimisation reduces to maximising $\eta_R$ defined in \eqref{eq:etaR} subject to \eqref{eq:cond1} and \eqref{eq:cond2}.
Since $x\mapsto x^2$ is convex, Jensen's inequality yields
\begin{align}
    \gamma_R^2+\eta_R^2
    &=
    \left(\sum_k p_k\,\mathfrak{Re}(\gamma_k)\right)^2
    +
    \left(\sum_k p_k\,\mathfrak{Re}(\eta_k)\right)^2\nonumber\\
    &\le
    \sum_k p_k\left[(\mathfrak{Re}(\gamma_k))^2+(\mathfrak{Re}(\eta_k))^2\right]
    \le
    \sum_k p_k\left(\abs{\gamma_k}^2+\abs{\eta_k}^2\right)
    =
    1,\label{eq:smone}
\end{align}
where the last equality follows from the trace-preserving condition \eqref{eq:cond1}.
Combining this with \eqref{eq:cond2}, which for   for $\alpha>0$ reads,
\begin{equation}
    \gamma_R = -\frac{\beta}{\alpha}\eta_R,
\end{equation}
we get
\begin{equation}
    \left(1+\frac{\beta^2}{\alpha^2}\right)\eta_R^2=\frac{1}{\alpha^2}\eta_R^2 \le 1
    \;\Rightarrow\;
    \eta_R \le \alpha.
\end{equation}

For $\alpha=0$, \eqref{eq:cond2} gives $\eta_R=0$, so the bound holds in all cases.
Therefore,
\begin{equation}
    f(\alpha,d_\bot)+\alpha\eta_R \le f(\alpha,d_\bot)+\alpha^2=\alpha(2\alpha+d_{\bot})
\end{equation}
Hence,
\begin{equation}
\mc{K}(\rho,\sigma)=d-(f(\alpha,d_\bot)+\alpha\eta_R) \leq 
d-\alpha(2\alpha+d-2)=
(1-\alpha)(d+2\alpha)
\end{equation}
As seen in \cref{example:UIcost}, this bound is achieved by the unitary
\begin{equation}
    U=
    \begin{pmatrix}
        \alpha & -\beta \\
        \beta & \alpha
    \end{pmatrix}
    \oplus \Pi_\bot,
\end{equation}
for which $U\ket{0}=\ket{\psi}$ and $\eta_R=\alpha$, completing the proof.

\end{proof}

\subsection{Commuting density matrices}

In this section we continue the study of the unitary invariant cost problem by adding an extra constaint, that is that the states  commute: $[\rho,\sigma]=0$. First we show that the cost between abitrary states can be bounded by the cost beetween the first states and the pinching of the second in the basis of the first, which will yield commuting states. Then we see that the optimal quantum transport cost between commuting states can be analytically calculated and that the optimal map is purely quantum. We also bound the cost between commuting states given by classical maps and show it is much larger than the general quantum cost in general.

\begin{prop}\label{prop:diagonalLowerBound}
Let $\H$ be a finite dimensional Hilbert space, $\rho,\,\sigma\in S(\H)$ with $\rho$ diagonal in the basis $\{\ket{i}\}$ and $\cE_\rho$ the pinching map in this basis, $\cE_{\rho}(x) =\sum_i\expval{x}{i}\ketbra{i}$. Let $\mc{K}$ be the quantum optimal transport cost associated to the unitary invariant cost matrix $K_0$. Then 
\begin{equation}
\mc{K}(\rho,\sigma)\geq\mc{K}(\rho,\cE_\rho(\sigma)).
\end{equation}
\end{prop}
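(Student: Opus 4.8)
The plan is to prove the slightly stronger statement that \emph{every} state over time in $\mc{Q}(\rho,\sigma)$ can be turned, without changing its cost, into one in $\mc{Q}(\rho,\cE_\rho(\sigma))$; taking the minimum over couplings then gives $\mc{K}(\rho,\cE_\rho(\sigma))\le\mc{K}(\rho,\sigma)$ directly. The transformation is a twirl over the group of unitaries $U_{\vec\phi}=\sum_i e^{\ii\phi_i}\ketbra{i}$ that are diagonal in the eigenbasis $\{\ket{i}\}$ of $\rho$: given $Q=\rho\star J\in\mc{Q}(\rho,\sigma)$, I would set
\begin{equation}
  Q'=\int (U_{\vec\phi}\otimes U_{\vec\phi})\,Q\,(U_{\vec\phi}^*\otimes U_{\vec\phi}^*)\,\frac{d\vec\phi}{(2\pi)^d},
\end{equation}
the average being over independent uniform phases $\vec\phi\in[0,2\pi)^d$.

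\textbf{Step 1: $Q'$ is a coupling for $(\rho,\cE_\rho(\sigma))$.} Since $\rho$ commutes with every $U_{\vec\phi}$, bilinearity of the Jordan product gives $(U_{\vec\phi}\otimes U_{\vec\phi})(\rho\star J)(U_{\vec\phi}^*\otimes U_{\vec\phi}^*)=\rho\star\big((U_{\vec\phi}\otimes U_{\vec\phi})J(U_{\vec\phi}^*\otimes U_{\vec\phi}^*)\big)$, and because the set of \jamiol matrices is convex (it is the intersection of the affine constraint $\Tr_B J=\I$ with the cone $J^{T_A}\ge 0$), $Q'=\rho\star J'$ with $J'$ again a \jamiol matrix; hence $Q'$ is a state over time. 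Its marginals are $\Tr_B Q'=\int U_{\vec\phi}\rho U_{\vec\phi}^*\,\frac{d\vec\phi}{(2\pi)^d}=\rho$ and $\Tr_A Q'=\int U_{\vec\phi}\,\sigma\,U_{\vec\phi}^*\,\frac{d\vec\phi}{(2\pi)^d}$; the latter average kills all off-diagonal entries of $\sigma$ in the basis $\{\ket{i}\}$, so it equals the pinching $\cE_\rho(\sigma)$. Thus $Q'\in\mc{Q}(\rho,\cE_\rho(\sigma))$.

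\textbf{Step 2: equal cost, and conclusion.} The cost matrix $\tilde K_0=\I-\frac{1}{d}\S$ commutes with $U\otimes U$ for \emph{every} unitary $U$ (both $\I$ and the swap $\S$ do), so $\Tr\cwich{\tilde K_0 Q'}=\int\Tr\cwich{(U_{\vec\phi}^*\otimes U_{\vec\phi}^*)\tilde K_0(U_{\vec\phi}\otimes U_{\vec\phi})\,Q}\,\frac{d\vec\phi}{(2\pi)^d}=\Tr\cwich{\tilde K_0 Q}$. Hence for any $Q\in\mc{Q}(\rho,\sigma)$ we obtain $\mc{K}(\rho,\cE_\rho(\sigma))\le\Tr\cwich{\tilde K_0 Q'}=\Tr\cwich{\tilde K_0 Q}$, and minimising the right-hand side over $Q\in\mc{Q}(\rho,\sigma)$ yields the proposition. (Equivalently one can run the identical twirl directly on the \jamiol matrix in the SDP \eqref{eq:SDPform}, invoking the standard symmetry-reduction of semidefinite programs under the diagonal-phase group.)

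\textbf{Where the care is needed.} I do not expect a genuine obstacle; this is a symmetrisation argument specialised to the diagonal-phase group. The three points that need attention are: (i) that a twirled state over time is still a state over time — this is exactly where one uses $[\rho,U_{\vec\phi}]=0$ together with bilinearity of $\star$ and convexity of the set of \jamiol matrices; (ii) that the phase average of $\sigma$ is \emph{precisely} $\cE_\rho(\sigma)$ and not merely some dephased state; and (iii) the $U_{\vec\phi}\otimes U_{\vec\phi}$-invariance of $\tilde K_0$ (hence of $\tilde K_0\star\rho$). It is worth stressing that the more naive attempt — composing the optimal channel with $\cE_\rho$ — does \emph{not} work: the resulting coupling does lie in $\mc{Q}(\rho,\cE_\rho(\sigma))$, but its cost can be strictly larger than $\mc{K}(\rho,\sigma)$ already when $\rho$ is pure, so symmetrising over the whole group $\{U_{\vec\phi}\}$, rather than just applying the pinching, is essential.
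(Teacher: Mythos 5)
Your proof is correct, and it takes a genuinely different route from the paper. You symmetrise an arbitrary coupling $Q=\rho\star J\in\mc{Q}(\rho,\sigma)$ by twirling with the diagonal-phase group $U_{\vec\phi}\otimes U_{\vec\phi}$ in the eigenbasis of $\rho$: since $[\rho,U_{\vec\phi}]=0$, each conjugated object is again of the form $\rho\star J_{\vec\phi}$ with $J_{\vec\phi}$ a \jamiol matrix (the constraints $\Tr_B J=\I$ and $J^{T_A}\geq 0$ are preserved by the local unitary conjugation and by averaging), the $A$-marginal stays $\rho$, the $B$-marginal becomes exactly $\cE_\rho(\sigma)$, and the cost is unchanged because $\tilde K_0=\I-\frac1d\S$ commutes with $U\otimes U$; minimising then gives the claim. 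The paper instead argues at the level of the Choi matrix: it reads off the classical stochastic map $p(j|i)=\expval{\cE(\ketbra{i})}{j}$ from the diagonal, bounds the relevant off-diagonal entries by positivity of $2\times 2$ minors, $\bra{i}\cE(\ketbra{i}{j})\ket{j}\leq\sqrt{p(i|i)p(j|j)}$, and then exhibits an explicit Choi matrix $C_p=\ketbra{\phi}+\sum_{i\neq j}p(j|i)\ketbra{ij}$ with $\ket{\phi}=\sum_i\sqrt{p(i|i)}\ket{ii}$ whose channel sends $\rho$ to $\cE_\rho(\sigma)$ and saturates that bound. Your argument is shorter, more conceptual, and applies verbatim to any cost matrix invariant under the diagonal-phase group (in particular the unitary invariant one used in this section), but it only delivers the inequality; the paper's more computational proof additionally produces the explicit tight comparison channel and the classical transport-plan structure that are reused immediately afterwards to derive the closed-form cost between commuting states (\cref{prop:commCost}). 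Your closing observation that simply post-composing the optimal channel with the pinching does not work (it kills the off-diagonal contributions $i\neq j$ to the cost and generally increases it) is accurate and worth keeping.
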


\begin{proof}
Let $\rho=\sum_ip_i\ketbra{i}$ and consider the Choi matrix $C$ associated to a channel $\cE$ such that $\cE(\rho)=\sigma$. This matrix will be $C=\swich{\sum_{ij}\ketbra{i}{j}\otimes\cE(\ketbra{i}{j})}$. The diagonal elements of this matrix, which need to be positive, are $\expval{\cE(\ketbra{i})}{j}$ and fulfil  
\begin{equation}
  \sum_{j}\expval{\cE(\ketbra{i})}{j}=\Tr\cwich{\I\cE(\ketbra{i})}=1. 
\end{equation} 
Thus these form a classical stochastic map $p(j|i)=\expval{\cE(\ketbra{i})}{j}$. Also note that because $\cE(\rho)=\sigma$ the Choi matrix must fulfil $\sigma=\Tr_A\cwich{\rho C}= \sum_{i}p_i\cE(\ketbra{i})$. We can apply the pinching $\cE_{\rho}$ to this equation to obtain \begin{equation}
    \label{eq:pinchsigmaValue}
  \cE_{\rho}(\sigma) = \sum_{i}p_i\cE_{\rho}\swich{\cE(\ketbra{i})} 
                     = \sum_{ij}p_i\expval{\cE(\ketbra{i})}{j}\ketbra{j} 
                     = \sum_{ij}p_ip(j|i)\ketbra{j},
\end{equation} which will be useful later. 

We can now bound the cost associated to each channel with an expression of the associated classical stochastic map. Note that $\bra{i}\cE(\ketbra{i}{j})\ket{j}$ are the off diagonal elements of $C$ that complete a $2\times2$ minor with $\expval{\cE(\ketbra{i})}{i}$ and $\expval{\cE(\ketbra{j})}{j}$. Therefore,
\begin{equation} \label{eq:choiCoeffBound}
   \mathfrak{Re}\swich{ \bra{i}\cE(\ketbra{i}{j})\ket{j}}\leq \sqrt{\expval{\cE(\ketbra{i})}{i}\expval{\cE(\ketbra{j})}{j}}=\sqrt{p(i|i)p(j|j)}.
\end{equation} 
Finally, with \cref{eq:choiCoeffBound} we can calculate the bound on the non-trivial part of the cost associated to an admissible Choi matrix. Let $\ket{\Phi_+}=\sum_i{\ket{ii}}$ again be the unnormalised maximally mixed state, then: 
\begin{equation}
    \label{eq:pinchingBound}
    \begin{split}
\Tr&\cwich{\ketbra{\Phi_+}(\rho\star C)}
=\sum_{i'j'ij}\bra{i'i'} \frac{p_i+p_j}{2}\swich{\ketbra{i}{j}\otimes\cE(\ketbra{i}{j})}\ket{j'j'} \\
&=\sum_{ij}\frac{p_i+p_j}{2}\bra{i}\cE(\ketbra{i}{j})\ket{j}
=\sum_{i}p_i\expval{\cE(\ketbra{i})}{i}+\sum_{i\neq j}\frac{p_i+p_j}{2}\bra{i}\cE(\ketbra{i}{j})\ket{j} \\
& = \sum_{i}p_i\expval{\cE(\ketbra{i})}{i}+\sum_{i < j}\swich{p_i+p_j}\mathfrak{Re}\swich{\bra{i}\cE(\ketbra{i}{j})\ket{j}} \\
&\leq\sum_{i}p_ip(i|i)+\sum_{i< j}\swich{p_i+p_j}\sqrt{p(i|i)p(j|j)} 
= \sum_{i}p_ip(i|i)+\sum_{i\neq j}p_i\sqrt{p(i|i)p(j|j)}.
\end{split}\end{equation} 

Fix an admissible channel between $\rho$ and $\sigma$ and its associated classical stochastic map $p(j|i)$. Let $C_p = \ketbra{\phi}+\sum_{i\neq j}p(j|i)\ketbra{ij}$ with $\ket{\phi}=\sum_i\sqrt{p(i|i)}\ket{ii}$. This is clearly positive and \begin{align}
\Tr_B\cwich{C_p}&=\Tr_B\cwich{\ketbra{\phi}+\sum_{i\neq j}p(j|i)\ketbra{ij}}=\sum_ip(i|i)\ketbra{i}+\sum_{i\neq j}p(j|i)\ketbra{i}=\I,\\
\Tr_A\cwich{\rho^T C_p}&=\Tr_A\cwich{\sum_kp_k(\ketbra{k}\otimes\I)\swich{\ketbra{\phi}+\sum_{i\neq j}p(j|i)\ketbra{ij}}}\\
&=\sum_ip_ip(i|i)\ketbra{i}+\sum_{i\neq j}p_ip(j|i)\ketbra{j}=\sum_{ij}p(j|i)p_i\ketbra{j}=\cE_\rho(\sigma),
\end{align} where the last equality was seen in \cref{eq:pinchsigmaValue}. Therefore $C_p$ is a Choi matrix with associated channel such that $\cE_{C_p}(\rho)=\cE_\rho(\sigma)$. The elements $\bra{i}\cE_{C_p}(\ketbra{i}{j})\ket{j}$ are \begin{align}
    \bra{i}\cE_{C_p}(\ketbra{i}{j})\ket{j}&=\bra{i}\Tr_A\cwich{(\ketbra{j}{i}\otimes \I)\swich{\ketbra{\phi}+\sum_{i'\neq j'}p(j'|i')\ketbra{i'j'}}}\ket{j}\\
    &=\bra{i}\Tr_A\cwich{(\ketbra{j}{i}\otimes \I)\ketbra{\phi}}\ket{j}=\sqrt{p(i|i)p(j|j)},
\end{align} which is the tight version of \cref{eq:choiCoeffBound}. This means the bound \eqref{eq:pinchingBound} can be made tight for every admissible classical stochastic map between $\rho$ and $\sigma$ in the problem between $\rho$ and $\cE_\rho(\sigma)$ by choosing the adequate channel $C_p$. In particular, we can tighten this bound in the problem between $\rho$ and $\cE_\rho(\sigma)$ for a classical stochastic map associated to an optimal channel between $\rho$ and $\sigma$, thus yielding \begin{align}
    \mc{K}(\rho,\sigma)&\geq d-\sum_{i}p_ip(i|i)+\sum_{i\neq j}p_i\sqrt{p(i|i)p(j|j)}=\Tr\cwich{K_0(\rho\star C_p^{T_A})}\geq\mc{K}(\rho,\cE_\rho(\sigma)),
\end{align} finalising the proof.
\end{proof}

\begin{prop}
    \label{prop:commCost}
Let $\rho$ and $\sigma$ commute. In a common diagonal basis they can be written as $\rho=\sum_ip_i\ketbra{i}$, $\sigma=\sum_iq_i\ketbra{i}$. Let $\tilde{K}_0$ be the cost matrix. Then \begin{align}
    \label{eq:costCommStates}
    \mc{K}(\rho,\sigma)= \frac{1}{d}\swich{d-\sum_{i j} p_i \sqrt{\min\{1,\tfrac{q_i}{p_i}\}\min\{1,\tfrac{q_j}{p_j}\}}}
\end{align}
\end{prop}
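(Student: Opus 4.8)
The plan is to exploit the machinery already set up for \cref{prop:diagonalLowerBound}, by which an admissible channel interacts with the unitarily invariant cost only through the classical stochastic map sitting on the diagonal of its Choi matrix. Write $\rho=\sum_i p_i\ketbra{i}$ and, since $[\rho,\sigma]=0$, $\sigma=\sum_i q_i\ketbra{i}$ in the same basis. Given any channel $\cE$ with $\cE(\rho)=\sigma$ and Choi matrix $C$, set $p(j|i):=\bra{j}\cE(\ketbra{i})\ket{j}$; exactly as in the proof of \cref{prop:diagonalLowerBound}, this is a stochastic matrix (each $p(\cdot|i)$ sums to $1$ because $\Tr\cwich{\I\cE(\ketbra{i})}=1$) which transports $(p_i)$ to $(q_i)$, since taking the $j$-th diagonal entry of $\sigma=\Tr_A\cwich{\rho C}=\sum_i p_i\cE(\ketbra{i})$ yields $q_j=\sum_i p_i p(j|i)$. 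Positivity of $C$ moreover forces $p_i p(i|i)\le q_i$ and of course $p(i|i)\le 1$, i.e.\ $p(i|i)\le m_i:=\min\{1,q_i/p_i\}$.

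First I would prove the lower bound. By \cref{eq:UICostForms2} in \cref{remark:UICostForms}, the cost realised by $\cE$ equals $1-\tfrac1d\sum_{ij}\tfrac{p_i+p_j}{2}\bra{i}\cE(\ketbra{i}{j})\ket{j}$. Relabelling $i\leftrightarrow j$ inside the symmetric prefactor shows the sum equals $\sum_{ij}p_i\,\bra{i}\cE(\ketbra{i}{j})\ket{j}$, and the $2\times2$ minor bound $\bra{i}\cE(\ketbra{i}{j})\ket{j}\le\sqrt{p(i|i)p(j|j)}$ from \cref{eq:choiCoeffBound} (with equality on the $i=j$ terms) gives cost $\ge 1-\tfrac1d\sum_{ij}p_i\sqrt{p(i|i)p(j|j)}$. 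Minimising over channels therefore yields $\mc{K}(\rho,\sigma)\ge 1-\tfrac1d\max_p\sum_{ij}p_i\sqrt{p(i|i)p(j|j)}$, the maximum running over stochastic matrices transporting $(p_i)$ to $(q_i)$.

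Next comes the optimisation over $p$. The key observation is the factorisation $\sum_{ij}p_i\sqrt{p(i|i)p(j|j)}=\bigl(\sum_i p_i\sqrt{p(i|i)}\bigr)\bigl(\sum_j\sqrt{p(j|j)}\bigr)$: both factors are non-negative and non-decreasing in each variable $p(k|k)$, hence so is their product, and the maximum over the box $0\le p(i|i)\le m_i$ is attained at $p(i|i)=m_i$ for all $i$. One must still check that the prescription $p(i|i)=m_i$ extends to a genuine transport plan; this is the standard TV-coupling argument: the leftover mass $p_i(1-m_i)=(p_i-q_i)_+$ is supported on $\{p_i>q_i\}$ and the deficits $q_j-p_jm_j=(q_j-p_j)_+$ on $\{q_j>p_j\}$, two disjoint sets with equal total, so any coupling of the excess and deficit distributions fills the off-diagonal entries without disturbing the diagonal. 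Thus $\max_p\sum_{ij}p_i\sqrt{p(i|i)p(j|j)}=\sum_{ij}p_i\sqrt{m_i m_j}$.

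Finally, the upper bound: with $p(i|i)=m_i$ (completed as above) the operator $C_p=\ketbra{\phi}+\sum_{i\ne j}p(j|i)\ketbra{ij}$, $\ket{\phi}=\sum_i\sqrt{m_i}\ket{ii}$, is — precisely as verified in the proof of \cref{prop:diagonalLowerBound}, using \cref{eq:pinchsigmaValue} — the Choi matrix of a channel with $\cE_{C_p}(\rho)=\cE_\rho(\sigma)=\sigma$, and it saturates \cref{eq:choiCoeffBound}, so its cost equals $1-\tfrac1d\sum_{ij}p_i\sqrt{m_i m_j}$. Combining this with the lower bound gives $\mc{K}(\rho,\sigma)=1-\tfrac1d\sum_{ij}p_i\sqrt{\min\{1,q_i/p_i\}\min\{1,q_j/p_j\}}=\tfrac1d\bigl(d-\sum_{ij}p_i\sqrt{\min\{1,q_i/p_i\}\min\{1,q_j/p_j\}}\bigr)$, as claimed. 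The only steps that are not purely mechanical are the factorisation-plus-monotonicity argument and the transport-plan completion; both are elementary, so I expect the real work to be careful bookkeeping between the Jamio{\l}kowski and Choi pictures, which is already absorbed into \cref{remark:UICostForms} and \cref{prop:diagonalLowerBound}.
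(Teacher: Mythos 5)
Your proof is correct and follows essentially the same route as the paper: reduce to the classical stochastic map on the Choi diagonal via the bound $\bra{i}\cE(\ketbra{i}{j})\ket{j}\le\sqrt{p(i|i)p(j|j)}$, saturate it with the $C_p$ construction (noting $\cE_\rho(\sigma)=\sigma$ in the commuting case), and solve the resulting classical problem with $p(i|i)=\min\{1,q_i/p_i\}$ completed by the total-variation coupling. Your factorisation-plus-monotonicity justification of the optimal diagonal is just a more explicit version of the paper's argument.
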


\begin{proof}

We have seen in the proof of \cref{prop:diagonalLowerBound} that for every admissible channel there is an associated stochastic map and that for each stochastic map with an associated channel there is a channel that makes \cref{eq:pinchingBound} tight. Therefore the problem is equivalent to the following optimisation over classical stochastic maps: \begin{align}
\min_{p(j|i)} d-\sum_ip_ip(i|i)-\sum_{i\neq j}p_i\sqrt{p(i|i)p(j|j)},
\end{align} such that $q_j=\sum_i p(j|i)p_i$. Note that only the diagonal terms of the classical stochastic map contribute to the cost and we want to maximise them. This is equivalent to the well known problem of writing the total variation distance as a classical optimal transport problem (see \cite[Section~6]{villani08}, also \cite[Example~4.14]{vanHandel16} for a detailed explanation of the total variation distance as a classical transport problem). We succintly give the solution to this problem for completion. 

The maximum value of $0\leq p(i|i)\leq 1$ subject to $q_i=\sum_j p(i|j)p_j\geq p(i|i)p_i$, is $p(i|i)=\frac{q_i}{p_i}\leq1$ if $p_i\geq q_i$ and $p(i|i)=1$ if $p_i<q_i$, or more succinctly $p(i|i)=\min\{1,\frac{q_i}{p_i}\}$. This will maximise the amount of weight the map leaves in place. If $p_i=0$ then $q_i/p_i$ is not well defined. In this case we take the maximum value $p(i|i)=1$, as this value appears in the cost and we want to have as large as possible for the optimisation, but it does not appear in the constraint (it is multiplied by $p_i=0$). 

To finalise we need to find the values of the terms $p(j|i)$ for $i\neq j$. If $p(i|i)=1$, then $p(j|i)=0$ for $i\neq j$, as the sum of these positive terms is 1. When $p(i|i)<1$, given that the terms $p(j|i)$, $i\neq j$, do not contribute to the cost, the optimal transport plan $p(j|i)$ can take a variety of values as long as the admissibility of the overall coupling, given by the condition $q_j=\sum_i p(j|i)p_i$, is preserved.
\end{proof}

\begin{remark}
    The optimal channel associated to the unitary invariant quantum optimal transport problem between commuting states with common basis $\set{\ket{i}}$ is, as we have seen, in Choi matrix form: 
    \begin{equation}
    C= \ketbra{\phi}+ \sum_{i\neq j}p(j|i)\ketbra{ij},
    \end{equation} 
    with $\ket{\phi}=\sum_i \sqrt{p(i|i)}\ket{ii}$ and $p(i|i)$ as previously defined in the proofs of \cref{prop:diagonalLowerBound} and \cref{prop:commCost}, which has rank at most $d^2-d+1$. 
    
    We can further study the structure of these maps by looking at their Kraus matrices. We have the unnormalised eigenvectors of the Choi matrix: $\{\ket{\Kr_k}\}=\{\sum_{i}\sqrt{p(i|i)}\ket{ii};\allowbreak \sqrt{p(j|i)}\ket{ij}, \; i\neq j\}$, such that $C=\sum_k\ketbra{\Kr_k}$. The associated Kraus matrices are the un-vectorised elements: \begin{equation}\label{eq:commKraus}
    \{\Kr_k\}= \set{\sum_{i}\sqrt{p(i|i)}\ketbra{i}; \sqrt{p(j|i)}\ketbra{j}{i}, \; i\neq j}.
    \end{equation} 

    These Kraus matrices have the characteristic of not being able to generate coherence, but, in the case of $\sum_{i}\sqrt{p(i|i)}\ketbra{i}$, not completely destroy it. In the context of the theory of quantum coherence as a resource, these matrices are incoherent operations (IO), but not strictly incoherent (SIO) \cite{streltsov17}. 
\end{remark} 

\begin{remark}
If $\rho$ and $\sigma$ commute, we can consider the case where we restrict our channels to classical channels to see how it relates to known classical distances and whether classical maps are optimal in the quantum setting. A quantum map will be classical if its {\jamiol} matrix is diagonal in a product basis. W.l.o.g.~$\rho$ and $\sigma$ are diagonal in the computational basis and the {\jamiol} matrix be diagonal in the product of canonical basis. As seen in  \cref{remark:classicalBayes} a state over time in this case will be of the form $Q=\sum_{ij}p_{ij}\ketbra{ij}$, with $p_{ij}$ a joint probability distribution, that is a classical coupling. It is immediate to see that the associated cost with cost matrix $\tilde{K}_0=\I-\frac{1}{d}\S$ is related to the total variation distance as follows: $\Tr\cwich{(\I-\frac{1}{d}\S)Q}=\sum_{ij}p_{ij}- \frac{1}{d}\sum_ip_{ii}=1-\frac{1}{d}\sum_ip_{ii}=1-\frac{1}{d}+\frac{1}{d}\frac{1}{2}|\rho-\sigma|\geq 1-\frac{1}{d}$.\footnote{We slightly abuse notation here be denoting the classical probability distribution associated to the diagonal of states $\rho,\sigma$ in the canonical basis by $\rho,\sigma$.} We obtain the cost in \cref{prop:commCost} without the term $-\frac{1}{d}\sum_{i\neq j}p_i\sqrt{p(i|i)p(j|j)}$. The fact that this cost is larger than $1-\frac{1}{d}$ shows that a large gap can exist between the cost associated to classical channels and the optimal quantum cost, which can go to zero by definition.
\end{remark}


\subsection{Limit $d\rightarrow \infty$}
\label{sec:limitCommUI}

With the analytical formula of the particular case of commuting states for the normalised cost matrix $\tilde{K}_0=\I-\frac{1}{d}\S$, we can consider what happens if we embed our finite dimensional states into a larger $d$ dimension system and then take the limit $d\rightarrow \infty$. First, let us rewrite \cref{eq:costCommStates}. We can split the sum into a sum over $i$ and a sum over $j$ as 
\begin{equation}
    \frac{1}{d}\swich{d-\sum_ip_ip(i|i)-\sum_{i\neq j}p_i\sqrt{p(i|i)p(j|j)}}=\frac{1}{d}\swich{d-\swich{\sum_ip_i\sqrt{p(i|i)}}\swich{\sum_{j}\sqrt{p(j|j)}}}.
\end{equation} 
We should address what happens to $p(i|i)=\min\{1,\frac{q_i}{p_i}\}$ when $p_i=0$. Beacuse $p_i=0$, $p(i|i)$ does not affect the outcome of applying the map to the relevant state. As the value of $p(i|i)$ does not matter, we take the choice $p(i|i)=1$. 

With this we can calculate the limit. Let $\rho,\sigma$ be commuting states in a finite dimensional Hilbert space $\H_S$ of dimension $n$, such that the joint support of $\rho$, $\sigma$ is $\H_S$. For a dimension $d\geq n$ we have a finite dimensional Hilbert space and a natural embedding that allows us to consider $\rho$ and $\sigma$ in this space. We can take a basis in which $\rho$ and $\sigma$ are diagonal and compute the cost. $\sum_ip_i\sqrt{p(i|i)}$ is fixed regardless of dimension. $\sum_{j}\sqrt{p(j|j)}$  has a fixed part, the sum of $p(j|j)$ in the support of $\rho$ and $d-n$ times $p(j|j)=1$ for the part not in the support, which add up to $d-n$. We call these fixed parts $N$ and $M$, respectively, and calculate the cost: \begin{equation}\label{eq:commStatesLimit}\begin{split}
\mc{K}_d(\rho,\sigma)&=\frac{1}{d}\swich{d-N\swich{M+d-n}}=1-\frac{NM}{d}-N+\frac{Nn}{d}\\
&\xrightarrow[d \to \infty]{} 1-N
=1-\sum_ip_i\sqrt{p(i|i)}.
\end{split}\end{equation} 
We can further develop this expression to write it as a function of $p_i$ and $q_i$, the diagonal elements of $\rho$ and $\sigma$, only: \begin{equation}
    \begin{split}
        \mc{K}_\infty(\rho,\sigma)
        & =1-\sum_ip_i\sqrt{p(i|i)}
        =\sum_ip_i(1-\sqrt{p(i|i)})
        =\sum_i\sqrt{p_i}(\sqrt{p_i}-\sqrt{p(i|i)p_i})\\
        & =\sum_{q_i<p_i}\sqrt{p_i}(\sqrt{p_i}-\sqrt{q_i}),
    \end{split} 
\end{equation} 
where the last equality comes from the definition of the optimal $p(i|i)=\min\{1,\frac{q_i}{p_i}\}$ in \cref{eq:costCommStates}. 

Using \cref{thm:optEmbedd} the general case is immediate:

\begin{thm}
Let $\rho,\sigma$ be states in a finite dimensional Hilbert space $\H_S$ of dimension $n$, such that the joint support of $\rho$, $\sigma$ is $\H_S$. Let $d\geq n$ and $\H_d=\H_S\oplus\H_\bot$ 
be a finite dimensional Hilbert space of dimension $d$. In $\H_d$, consider the cost matrix
$\tilde{K}_d=\I_d-\frac{1}{d}\S_d$. We denote the cost associated to the embedded 
$\rho,\, \sigma$ in a larger Hilbert space with cost matrix $\tilde{K}_d$ as $\mc{K}_d(\rho,\sigma)$. Then,
\begin{equation}
    \label{eq:genralLimit}
    \mc{K}_\infty(\rho,\sigma)=\lim_{d\to\infty}\mc{K}_d(\rho,\sigma)= 1-\max_{\set{\Kr_k}}\sqrt{\sum_k\abs{\Tr\cwich{\rho \Kr_k}}^2}=1-\max_{\mathcal E:\,\mathcal E(\rho)=\sigma}
F(\rho,\mathcal E),
\end{equation}
where the maximisation is over all admissible Kraus representations 
$\{E_k\}$ of channels $\mathcal E$ acting on $\H_S$, and where
$
F(\rho,\mathcal E)
:=\sqrt{\sum_k\abs{\Tr(\rho E_k)}^2}
$
denotes the entanglement fidelity (also called channel fidelity) \cite{schumacher96,watrous18} of the state $\rho$ under the channel $\mathcal E$.
\end{thm}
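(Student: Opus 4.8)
The plan is to combine Theorem~\ref{thm:optEmbedd}, which reduces the embedded cost to an optimization over channels on the joint support $\H_S$, with the explicit $d\to\infty$ limit already carried out above in the commuting case, Eq.~\eqref{eq:commStatesLimit}. Recall from \cref{thm:optEmbedd} that for $d = n + d_\bot$ the embedded cost equals
\begin{equation*}
  \mc{K}_d(\rho,\sigma) = 1 - \frac{1}{d}\max_{\{\Kr_k\}}\left(\mathfrak{Re}\swich{\sum_k\Tr\cwich{\Kr_k\rho}\Tr\cwich{\Kr_k^*}} + d_\bot\sqrt{\textstyle\sum_k\abs{\Tr(\Kr_k\rho)}^2}\right),
\end{equation*}
where the maximization runs over admissible Kraus sets on $\B(\H_S)$ with $\cE(\rho)=\sigma$. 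The key observation is that this admissible set of channels on $\H_S$ does \emph{not} depend on $d$: only the prefactors do. So I would write $\mc{K}_d(\rho,\sigma) = 1 - \frac{1}{d}\max_{\{\Kr_k\}} g_d(\{\Kr_k\})$ with $g_d = \mathfrak{Re}(\cdots) + (d-n)\sqrt{\sum_k\abs{\Tr(\Kr_k\rho)}^2}$, and then show $\frac{1}{d}\max g_d \to \max_{\{\Kr_k\}}\sqrt{\sum_k\abs{\Tr(\Kr_k\rho)}^2}$.

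The first term $\mathfrak{Re}\swich{\sum_k\Tr\cwich{\Kr_k\rho}\Tr\cwich{\Kr_k^*}}$ is bounded uniformly in $d$ over the (compact) admissible set on the fixed space $\H_S$ — e.g.\ $\abs{\Tr(\Kr_k\rho)}\le\norm{\Kr_k}_\infty$ and $\abs{\Tr\Kr_k^*}\le n\norm{\Kr_k}_\infty$, with $\sum_k\Kr_k^*\Kr_k=\I_S$ controlling the norms — so after division by $d$ it vanishes as $d\to\infty$, uniformly. For the second term, $\frac{d-n}{d}\to 1$, again uniformly in the channel. Hence $\frac{1}{d}g_d(\{\Kr_k\}) \to \sqrt{\sum_k\abs{\Tr(\Kr_k\rho)}^2}$ uniformly over the admissible set. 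Since uniform convergence of a family of functions on a fixed set passes to the supremum/maximum, $\frac{1}{d}\max_{\{\Kr_k\}}g_d \to \max_{\{\Kr_k\}}\sqrt{\sum_k\abs{\Tr(\Kr_k\rho)}^2}$, which gives \eqref{eq:genralLimit}.

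The main obstacle — and the step I would be most careful about — is the interchange of limit and maximum, i.e.\ justifying that $\lim_d \max g_d/d = \max \lim_d g_d/d$. The clean way is the uniform-convergence argument above: because the optimization domain (admissible CPTP maps on the finite-dimensional $\B(\H_S)$ sending $\rho$ to $\sigma$) is a fixed compact set independent of $d$, and $g_d/d$ converges to its limit uniformly on it, the interchange is legitimate. One subtlety is the choice of Kraus representation: the number of Kraus operators can be taken $\le n^2$ by the canonical (Choi-eigenvector) representation, and by \cref{remark:KrausCostInvariant} the relevant quantities are representation-independent, so the sum $\sum_k\abs{\Tr(\Kr_k\rho)}^2$ is well-defined on channels. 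I would also remark that the commuting case \eqref{eq:commStatesLimit} is recovered as a consistency check: there the optimal Kraus operators from \eqref{eq:commKraus} give $\sum_k\abs{\Tr(\Kr_k\rho)}^2 = \big(\sum_i p_i\sqrt{p(i|i)}\big)^2$, so the general formula reduces to $1 - \sum_i p_i\sqrt{p(i|i)}$ as expected.
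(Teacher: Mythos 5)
Your proposal is correct and takes essentially the same route as the paper: invoke \cref{thm:optEmbedd} to reduce the embedded cost to an optimisation over channels on $\H_S$, then let $d\to\infty$, where the first term divided by $d$ vanishes and $(d-n)/d\to 1$. The only difference is that you spell out the uniform-convergence justification for interchanging the limit with the maximum over the fixed compact admissible set (and the representation-independence and bounded Kraus rank), steps the paper treats as immediate.
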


 The entanglement fidelity measures how well the channel $\cE$ preserves not only the state $\rho$, but also its entanglement with a reference system.

Before we give the proof, note that we now have two optimisation problems in the support Hilbert space $\H_S$, the original problem, with cost function equivalent to\begin{equation}
    \sum_k\Tr\cwich{E_k\rho}\Tr\cwich{E_k^*}
\end{equation} and the limit problem, with cost function \begin{equation}
    \sqrt{\sum_k\abs{\Tr\cwich{\rho \Kr_k}}^2}.
\end{equation} Both are optimised over the same set of quantum channel. As the costs are non-equivalent, these two problems are different they in general should have different optimal costs and channels. That said, numerical evidence suggests that these channels are equal or at least very close. 


\begin{proof}
Using \cref{eq:Kembed} in \cref{thm:optEmbedd} we can immediately obtain the result: 
\begin{equation}\begin{split}
    \mc{K}_\infty(\rho,\sigma) 
    &= \lim_{d\to\infty}1-\frac{1}{d}\max_{\{\Kr_k\}}\swich{\mathfrak{Re}\swich{\sum_k\Tr\cwich{\Kr_k \rho}\Tr\cwich{\Kr_k^*}}+(d-n)\sqrt{\sum_k\abs{\Tr\cwich{\rho \Kr_k}}^2} } \\
    &=1-\max_{\{\Kr_k\}}\sqrt{\sum_k\abs{\Tr\cwich{\rho \Kr_k}}^2},
\end{split}\end{equation}
as claimed. 
\end{proof}

As the set of Kraus operators that optimises \cref{eq:genralLimit} is \emph{a priori} different to the optimal set of Kraus operators on any finite dimension, it is unclear if $\mc{K}_\infty(\rho,\sigma)$ is efficiently computable. In the following Remark we see that we can efficiently compute $\mc{K}_\infty(\rho,\sigma)$, as well as the optimal channel, through an SDP on the joint support of $\rho,\sigma$.

\begin{remark}
    The cost in the limit, $\mc{K}_{\infty}(\rho,\sigma)$, can be efficiently calculated from the SDP \begin{equation}
       \begin{split}
    \max_J &\quad \quad\; \Tr\cwich{\rho\S\rho J} \\
    \text{s.t.} & \quad\:\left\{\begin{aligned}
        &\Tr_A\cwich{\rho J}=\sigma\\
        &\Tr_B J=\I\\
        &J^{T_A}\geq0
       \end{aligned} \right.,
\end{split}
    \end{equation} where $S,J\in\B(\H_S\otimes\H_S)$, $\H_S=\supp\rho\cup\supp\sigma$, and we simplified $(\rho\otimes \I)$ to $\rho$.
\end{remark}

Before the proof, note that the constraints are the same as in \cref{eq:SDPform}, they denote that $J$ is a {\jamiol} operator such that $\cE_J(\rho)=\sigma$.

\begin{proof}
 We only need to write the objective function in \cref{eq:genralLimit} as a function of the {\jamiol} matrix instead of the Kraus operators. Recall that $J=\sum_k(\I\otimes \Kr_k)\S(\I\otimes \Kr_k^*)$ and that the optimisation in \cref{eq:genralLimit} is over Kraus operators $E_k:\H_S\rightarrow\H_S$. Let $\S,J\in\B(\H_S\otimes\H_S)$, then: \begin{equation}\begin{split}
  \Tr\cwich{\rho\S\rho J}
    &= \Tr\cwich{\rho\S\rho \sum_k(\I\otimes \Kr_k)\S(\I\otimes \Kr_k^*)}
     = \sum_k \Tr\cwich{\S(\rho\otimes \rho)(\Kr^*_k\otimes \Kr_k)\S} \\
    &= \sum_k \Tr\cwich{\rho \Kr_k^*}\Tr\cwich{\rho \Kr_k}
     = \sum_k \abs{\Tr\cwich{\rho \Kr_k}}^2.
\end{split}\end{equation} 
Because the square root is monotonic, maximising this quantity is equivalent to 
maximising the square root, allowing us to efficiently compute $\mc{K}_\infty(\rho,\sigma)$ as 
\begin{equation}
    \mc{K}_\infty(\rho,\sigma) = 1-\sqrt{\max_J\Tr\cwich{\rho\S\rho J}},
\end{equation} where the optimisation is over {\jamiol} matrices of channels mapping $\rho$ to $\sigma$. 
\end{proof}

\begin{remark}
We can see that the general formula for the limit reduces to the commuting case correctly. 
If $[\rho, \sigma]=0$, the Kraus operators for the optimal channel are of the form 
$\{\Kr_k\}= \left\{\sum_{i}\sqrt{p(i|i)}\ketbra{i}; \sqrt{p(j|i)}\ketbra{j}{i}, \; i\neq j\right\}$,
as seen in \cref{eq:commKraus}. Furthermore, $\Tr\cwich{\rho\sqrt{p(j|i)}\ketbra{j}{i}}=0$ 
for all $i,j$ because $\rho$ is diagonal and 
\begin{equation}
  \Tr\cwich{\rho \sum_{i}\sqrt{p(i|i)}\ketbra{i}}=\sum_ip_i\sqrt{p(i|i)}.
\end{equation} 
If we input this single nonzero value into the equation we obtain 
\begin{equation}
  \mc{K}_\infty(\rho,\sigma)=1-\sqrt{\swich{\sum_ip_i\sqrt{p(i|i)}}^2}=1-\sum_ip_i\sqrt{p(i|i)},
\end{equation} 
equal to \cref{eq:commStatesLimit}.
\end{remark}

\subsection{Asymmetry and discontinuity of the cost function}
We will consider a similar setting as we had, with the symmetric cost matrix $\tilde{K}_0=\I-\frac{1}{d}\S$, $\rho=\ketbra{0}$ and $\sigma= (1-p)\rho+p\;\I/d$. We now consider the symmetry gap: $\mc{K}(\rho,\sigma)-\mc{K}(\sigma,\rho)$. The cost will be symmetric when this is zero. \cref{fig:SymmetryGap} shows that in the proposed example, this gap is nonzero for all $\sigma\neq\rho.$

 \begin{figure}[ht]
  \centering
    \includegraphics[width=0.6\textwidth]{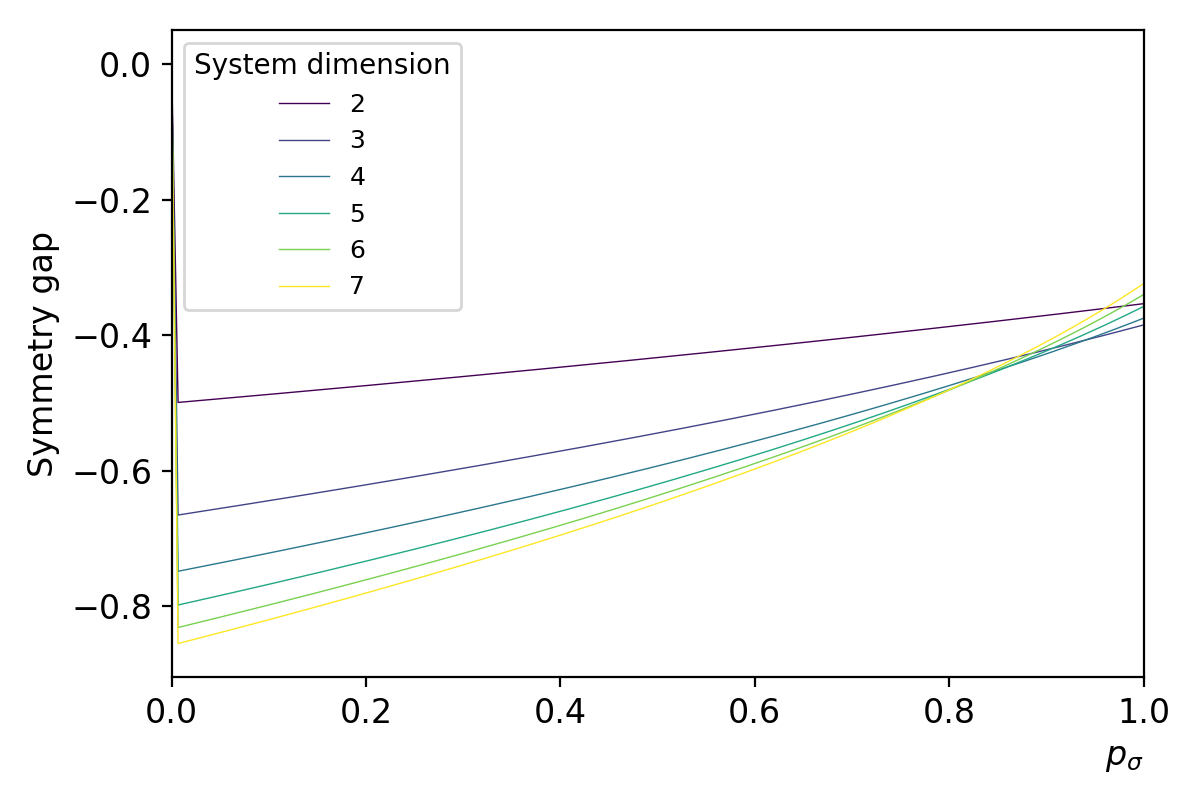} 
    \caption{Symmetry gap between for $\tilde{K}_0=\I-\frac{1}{d}\S$, 
             $\rho=\ketbra{0}$ and $\sigma=(1-p_\sigma)\rho+p_\sigma\I/d$.} 
    \label{fig:SymmetryGap}
\end{figure}

The symmetry gap also shows a discontinuity when $\sigma$ goes to $\rho$. This is due to the following: if a quantum channel takes any non pure state to a pure state, this channel must be the replacement channel due to the continuity of quantum channels. Therefore, in our example,  $\mc{Q}(\sigma,\rho)=\{\sigma\otimes\rho\}$. This is not true anymore for $\sigma=\rho$, since the states over time associated to all the unitary channels that send $\rho$ to itself (including the identity channel) are now feasible. This discontinuity in the feasible set causes a discontinuity in $\mc{K}(\sigma,\rho)$, which translates to the symmetry gap. We can see that analytically with an example. 

Let $\rho=(1-\varepsilon)\ketbra{0}+\varepsilon\I/d$ and $\sigma=\ketbra{+}$. If $\varepsilon>0$, there is a single admissible state over time: $Q=\rho\otimes \sigma$. The associated cost is 
\begin{equation}\begin{split}
    \mc{K}(\rho(\varepsilon),\sigma) &= 1-\frac{1}{d}\Tr\cwich{\S(\rho\otimes\sigma)} 
                                      = 1-\Tr\cwich{\rho\sigma} \\
                                     &= 1-\frac{1}{d}\swich{\frac{1}{2}(1-\varepsilon)+\varepsilon\frac{1}{d}} 
                                      \xrightarrow[\varepsilon \to 0]{} 1-\frac{1}{2d}.
\end{split}\end{equation} 
If we consider $\varepsilon=0$, $\rho$ and $\sigma$ are pure and we can use \cref{prop:analyticalUnitaryMaps} to obtain \begin{equation}
  \mc{K}(\rho(0),\sigma)
     =\frac{1}{d}\swich{1-\frac{1}{\sqrt{2}}}\swich{d+2\frac{1}{\sqrt{2}}}=1-\frac{d+2-2\sqrt{2}}{2d}.
\end{equation} 
As $d+2-2\sqrt{2}>1$ for all $d\geq 2$, we get the strict inequality
\begin{equation}
    \mc{K}(\rho(0),\sigma) < \lim_{\varepsilon \searrow 0}\mc{K}(\rho(\varepsilon),\sigma).
\end{equation} 

\section{Conclusions and open problems}
\label{sec:conclusions}

We have introduced a formulation of optimal transport cost for quantum states as an 
application of the formalism of states over time (\emph{stotes}), in an attempt to base 
it on a notion of cost bilinear in the initial quantum state (mass distribution) 
and quantum channel (transport plan). 
This formalism was introduced to expand on our current understanding of spatial 
quantum correlations, expressed in joint density matrices, to incorporate temporal 
correlations induced by a given time evolution. In it, stotes are Jordan products 
of density matrices with {\jamiol} matrices of quantum channels. 
This has allowed us to define a formalism of optimal transport with a straightforward 
physical interpretation for couplings, albeit outside the realm of density matrices. Despite the clear interpretation of the couplings, and the mathematical properties observed by the stote and state-channel cost function $\kappa$, a physical meaning of the optimal cost remains a crucial open question in this setting.

After introducing the necessary notions, we set out to explore the new definition 
of cost, in particular in view of the possibility of obtaining interesting 
metrics on the set of quantum states. 
The biggest open problem we, and in fact the stote formalism as a whole, face is 
that there is currently no concise characterisation of the convex hull of the set of 
stotes, nor of the convex cone generated by it, nor the dual cone. The latter
cone encodes all information required in the selection of a suitable cost operator:
it should be in the dual cone of stotes, and the same dual cone plays an important
role in deciding the triangle inequality of a given cost. Our original motivation 
was to be able to design cost matrices that can be interpreted in physical terms, 
such as showing energy differences for a given Hamiltonian. Currently this is work 
in progress.
The stote cone itself enters in each attempt of calculating 
the optimal cost for a given cost operator. However, at least fixing cost operator 
as well as initial and final density matrix, this optimisation is an SDP. 


As a case study and because of its distinguished symmetry, we have investigated in detail 
the unitary invariant cost, which is analogous to the trivial metric in the classical 
case. We have calculated this cost for commuting states and pure states. These examples 
have allowed us to observe some properties and facts regarding our formalism. 

A surprising fact can be observed from the second item in \cref{example:UIcost}. One of 
our main motivations for this formalism was the linearity of the state over time with 
respect to both the initial state and the channel. Other approaches to quantum transport 
\cite{depalma21,friedland22} led to the observation that to save the triangle inequality, 
a square root of the cost had to be taken. Likewise, from the example one notices that 
our cost behaves like the square of a distance, indicating that the square root would 
be necessary here, too, to preserve the triangle inequality. 
Nothing similar has been observed in the classical case, where the roots only appear 
when taking powers of distances as cost functions (as seen in the Wasserstein distances \cite{kantorovich60,vaserstein69}). This contrast motivates us to conjecture that the 
appearance of a square root to preserve the triangle inequality could is a quantum 
feature of optimal transport costs.

Other features of our optimal quantum transport cost are that even when the cost 
operator is exchange symmetric (as the unitary invariant is), the resulting 
optimal cost is not necessarily symmetric, adding to doubts that this approach 
can yield meaningful metrics on states. On top of that, the examples of asymmetry 
exhibit even instances of discontinuity of the optimal cost in the two states. 

The second item of \cref{example:UIcost} allows for another observation. 
In every dimension, we considered the initial/final states $\rho=\ketbra{0}$,
$\sigma = (\alpha\ket{0}+ \sqrt{1-\alpha^2}\ket{1})(\alpha\bra{0}+ \sqrt{1-\alpha^2}\bra{1})$. 
The optimal channels turn out to be the conjugation by unitaries having 
a block structure: a direct sum of a $2\times 2$-unitary and an identity of 
rank $d-2$.
Only the first summand is relevant to joint support of $\rho$ and $\sigma$, but the cost 
is a function of the dimension nonetheless, as seen in \cref{eq:UIcost}. The calculation of 
the optimal unitary also shows that the cost for a given channel is sensitive to the behaviour 
of the channel outside the space spanned by the relevant states. 
In fact, this is a general feature for arbitrary states. 
This contrasts classical optimal transport, where the behaviour of the channel on regions 
where the input probability is zero has no effect on the cost. This feature is reminiscent 
of the Aharonov-Bohm effect \cite{aharonov59}, a purely quantum effect where the magnetic field 
far away from a charged particle can affect interference fringes of its wave function. 
Motivated by this, in in \cref{sec:limitCommUI} we considered the limit of larger 
and larger ambient Hilbert spaces, for a given pair of states. This leads to a certain 
renormalisation of the cost (always in the unitary invariant case), in particular 
in the limit we obtain a formula for the optimal transport cost that manifestly 
``feels'' only the supports of the two states. It remains for future investigation 
to determine whether this leads to well-behaved metrics with interesting properties. 



\section*{Acknowledgments}
The first author thanks Marco Fanizza, Giulio Gasbarri, Niklas Galke, Carlo Marconi and Santiago Llorens for helpful discussions.
The authors collectively thank Sheila Folsom for listening to our problems 
about states over time and for showing us how to keep an open mind.
The authors are supported by the Spanish MICIN 
(project PID2022-141283NB-I00) with the support of FEDER funds, 
by the Spanish MICIN with funding from European Union NextGenerationEU 
(project PRTR-C17.I1) and the Generalitat de Catalunya, and by the Spanish MTDFP 
through the QUANTUM ENIA project: Quantum Spain, funded by the European 
Union NextGenerationEU within the framework of the ``Digital Spain 
2026 Agenda''. 
MHR furthermore acknowledges support by the Generalitat de Catalunya through 
a FI-AGAUR scholarship.
JCC furthermore acknowledges support from an ICREA Academia Award. 
AW furthermore acknowledges support by the European Commission QuantERA grant 
ExTRaQT (Spanish MICIN project PCI2022-132965), 
by the Alexander von Humboldt Foundation, 
and by the Institute for Advanced Study of the Technical University Munich. 

The numerical calculations were performed with Python, using the package 
\textit{cvxpy} \cite{diamond16} and the solver \textit{mosek} \cite{mosek}. 
The package \textit{qutip} \cite{johansson12,johansson13} has been used for 
manipulation of quantum objects.

\bibliographystyle{quantum}
\bibliography{biblio}

\end{document}